\newtheorem{theorem}{Theorem}
\newtheorem{lemma}[theorem]{Lemma}
\newtheorem{definition}[theorem]{Definition}
\newtheorem{corollary}[theorem]{Corollary}
\newtheorem{problem}[theorem]{Problem}
\renewcommand{\i}{\ensuremath{\text{\normalfont I}}}
\newcommand{\ii}{\ensuremath{\text{\normalfont I\!I}}}
\newcommand{\iii}{\ensuremath{\text{\normalfont I\!I\!I}}}
\newcommand{\iiii}{\ensuremath{\text{\normalfont I\!V}}}
\let\dsp=\displaystyle
\def\eg{{e.g.}}
\def\ie{{i.e.}}
\def\CL{\mathcal{L}}
\def\CO{\mathcal{O}}
\def\polylog{\mathrm{polylog}}
\def\poly{\mathrm{poly}}
\newcommand{\co}[1]{{\mathcal{O}\left(#1\right) }}
\def\bbc{\mathbb{C}}
\def\bbr{\mathbb{R}}
\def\rmt{\mathrm{T}}
\def\sz{{(0)}}
\title{Succinct Description and Efficient Simulation of Non-Markovian Open Quantum Systems}
\author[*]{Xiantao Li}
\author[$\dag$]{Chunhao Wang}
\affil[*]{Department of Mathematics, Pennsylvania State University}
\affil[$\dag$]{Department of Computer Science and Engineering, Pennsylvania State University}
\affil[ ]{Email: \{xiantao.li,cwang\}@psu.edu}
\date{}
\begin{document}
\maketitle

\begin{abstract}
Non-Markovian open quantum systems  {represent} the most general dynamics when the quantum system is coupled with a bath environment. The quantum dynamics arising from many important applications are non-Markovian. Although for special cases, such as Hamiltonian evolution and Lindblad evolution, quantum simulation algorithms have been extensively studied, efficient quantum simulations for the dynamics of non-Markovian open quantum systems remain underexplored. The most immediate obstacle for studying such systems is the lack of a universal succinct description of their dynamics. In this work, we fulfill the gap of studying such dynamics by 1) providing a succinct representation of the dynamics of non-Markovian open quantum systems with quantifiable error, and 2) developing an efficient quantum algorithm for simulating such dynamics with cost $\CO(t\, \polylog(t/\epsilon))$ for evolution time $t$ and precision $\epsilon$. Our derivation of the succinct representation is based on stochastic Schr\"odinger equations, which could lead to new alternatives to deal with open quantum systems as well.
  
\end{abstract}

\section{Introduction}
\label{sec:intro}

As the size of many modern-day electronic devices is continuously being reduced, quantum mechanical properties start to become
dominant. Many novel designs have emerged, \eg, quantum wires, quantum dots, and 
molecular transistors, to take advantage of these properties. What is common in these applications  is that the quantum properties are vital. However, these quantum dynamics do not evolve in isolation. Known as \emph{open quantum systems} \cite{breuer2002theory}, they are always interacting with their environment, which due to the large dimension, can not be included explicitly in the computation.   Another challenge comes from the fact that the continuous interactions can give rise to non-Markovian behavior, for which standard descriptions break down. The implication of non-Markovian dynamics to the quantum properties has been analyzed through many model examples \cite{breuer_non-markovian_2016,puebla2019spin,de_vega_dynamics_2017,semina_simulation_2016,shiokawa2004qubit,alliluev2021dynamics,makarov2020non}.
More importantly, there are important experimental observations of non-Markovian dynamics \cite{groblacher2015observation,madsen2011observation}, and such behavior has a strong impact on the electronic properties of molecular devices.  Furthermore, it has  been discovered that  non-Markovianity can enhance quantum entanglement \cite{thorwart2009enhanced}. 

Generally speaking, the dynamics of a quantum system interacting with a bath environment can be described by the von Neumann equation\footnote{The von Neumann equation is a generalization of the Schr\"{o}dinger equation to the context of density matrices.}
\begin{align}\label{eq: lvn-intro}
    i \partial_t \rho = [H_\text{tot}, \rho], \quad \rho(0)= \rho_S(0) \otimes \rho_B,
\end{align}
where the Hamiltonian $H_\text{tot}$ that couples the system to a bath is expressed as,
\begin{equation}\label{eq: htot-intro}
    H_\text{tot} = H_S \otimes I_B + I_S \otimes H_B + \lambda \sum_{\alpha=1}^M S_\alpha \otimes B_\alpha.
\end{equation}
Here, $\lambda$ is typically referred to as the \emph{coupling parameter}, and the integer $M$ indicates the number of interaction terms. To consider the problem in the context of quantum simulations, we let $2^n$ be the dimension of the system (S), \ie, $H_S \in \bbc^{2^n\times 2^n}$ is acting on $n$ qubits. A widely considered example is the spin Boson model, where $H_S$ and $S_\alpha$ are  Pauli matrices.  {One notable example in this category}, which is also relevant to quantum computers,  is the dynamics of qubits coupled to a boson bath \cite{wang2013exact}. Other  {applications}  include the {Anderson}-{Holstein} model~\cite{cao_lindblad_2017} where $H_S$ consists of the kinetic and potential energy terms, and those from electronic transport problems, where the Hamiltonian is expressed in terms of molecular or atomic orbitals \cite{galperin2005current}. 

When $\lambda = 0$, the system is completely decoupled from the environment, and the dynamics of the system reduces to \emph{Hamiltonian evolution}, which can also occur for finite $\lambda$, but under more stringent assumptions on the operator $S_\alpha$ and the initial state \cite{lidar1998decoherence}. In recent decades, the problem of simulating Hamiltonian evolution has been extensively studied. A subset of notable works can be found in~\cite{Llo96,BCG14,BCC15,BCK15,BCC17,LC17,LC19,CGJ19}. This line of research has led to optimal Hamiltonian simulation algorithms~\cite{LC17,GSLW19}.

In general, the continuous interactions with the bath conspire to a host of interesting quantum dynamics.  Known as open quantum systems \cite{breuer2002theory}, such problems have led to a long-standing challenge in modeling the system dynamics without an explicit representation of the bath.  Due to such modeling difficulty, the progress of developing fast simulation algorithms is much slower for open quantum systems. One exception is \emph{Markovian open quantum systems}, which arises when $0 < \lambda \ll 1$, \emph{and} the dynamics of the bath occurs at a much faster rate than the system. Intuitively, if an open quantum system is Markovian, the bath Hamiltonian is fast enough to ``forget'' the disturbance caused by the system-bath interaction, and therefore information only flows from system to bath with no transfer of information back to the system. By computing the infinitesimal  generator, a complete description of the Markovian dynamics has been obtained \cite{lindblad1976generators}, which later is referred to as the Lindblad equation.    In 2011, Kliesch, Barthel, Gogolin,  Kastoryano, and Eisert~\cite{KBG11} gave the first quantum algorithm for simulating Markovian open quantum systems with cost $\CO(t^2/\epsilon)$ for evolution time $t$ and precision $\epsilon$. In 2017, Childs and Li~\cite{CL17}  {presented an algorithm that improves} the cost to $\CO(t^{1.5}/\sqrt{\epsilon})$, and Cleve and Wang~\cite{CW17} further improved  {the algorithm by reducing } the complexity to $\CO(t\,\polylog(t/\epsilon))$, which is nearly optimal.  { All these algorithms are designed for models with sparse Hamiltonian and jump operators. Another notable approach is by Schlimgen et al.~\cite{schlimgen2021quantum,schlimgen2022quantum}, where the coefficient matrices in the Kraus form   associated with the Lindblad evolution are assumed to be given inputs.} Overall, the difficulty with simulating Markovian open quantum systems lies in the observation that the advanced algorithmic techniques for Hamiltonian simulation cannot be directly applied to open quantum systems because of the presence of decoherence. In fact, Cleve and Wang~\cite{CW17} have shown that it is impossible to achieve linear dependence on evolution time by a direct reductionist approach.

In sharp contrast to quantum Markovian processes,
there is no universal form for the quantum master equation for non-Markovian dynamics. In the regime when the rate of the bath dynamics is comparable to the rate of the system dynamics, the system becomes non-Markovian. Loosely speaking, the non-Markovian dynamics can be interpreted as the backflow of information for the environment to the open quantum system, and it can be more precisely characterized using various metrics \cite{breuer2009measure,vasile2011quantifying,luo2012quantifying}.  For such systems, many approaches have been developed to describe such dynamics in a way that goes beyond the Lindblad framework \cite{chruscinski_non-markovian_2010,de_vega_dynamics_2017,diosi1997non,gaspard1999non,ishizaki_quantum_2005,kelly2016generalized,meier_non-markovian_1999,montoya2016approximate,montoya2017approximate,pfalzgraff_efficient_2019,pollock_non-markovian_2018,shao2004decoupling,shi_new_2003,strathearn_efficient_2018,strunz1996linear,strunz_open_1999,suess_hierarchy_2014,tanimura2006stochastic,li2021markovian,tanimura2006stochastic}. The earliest approach dates back to the projection formalism of Nakajima and Zwanzig \cite{nakajima1958quantum,zwanzig1960ensemble}.  At the level of density matrices, the non-Markovian property is reflected in a memory integral that involves the bath correlation function, which can be approximated by a linear combination of Lorentzian terms \cite{ritschel2014analytic}.
This representation enables an embedding procedure, where the memory integral can be replaced by the dynamics of additional density-matrices \cite{meier_non-markovian_1999,li2021markovian}.  
 Although a unified framework is not yet present, such a procedure embeds   the density matrix of the system in an extended  system of equations involving additional density matrices, for which the dynamics is Markovian. The coefficients in the extended dynamics are connected to the spectral properties of the bath. Such a quantum master equation is  often referred to as the generalized quantum master equation (GQME). 

\subsection{Main results}
In this work, we 1) provide a succinct GQME representation for the dynamics of non-Markovian quantum systems, and 2) develop an efficient quantum algorithm for simulating the dynamics of open quantum systems  {based on} this new representation.  {The derivation of the GQME is an important mathematical contribution that provides the groundwork for algorithmic development.} It is worth noting that we only consider open quantum systems with coupling parameter $\lambda \ll 1$. This is the scenario when we have a priori bound on the model error, whereas in the strong coupling regime, it is difficult to determine such an error in advance,  and results can only be trusted with a leap of faith.

\paragraph{Mathematical contribution} 
In modeling open quantum systems, an important property to retain is the positivity. Toward this end, we choose to work with one that can be derived from an unraveling approach \cite{breuer2002theory}. Namely, there exists an underlying stochastic Schr\"odinger equation (SSE) and the state-matrix is automatically positive semidefinite. 
In order to accurately incorporate the effect of the bath, let $K$ be the number of Lorentzian terms in representing the bath correlation function \cite{ritschel2014analytic}. We consider an embedding of the system state $\rho_S$ into an $\CO(n\log K)$-qubit system (with dimension $K2^n$). Let $\Gamma$ denote the (unnormalized) state density matrix of this larger space. We derive from the SSE the following quantum master equation to describe the dynamics of $\Gamma$:
\begin{align}
  \label{eq:qme-intro}
  \partial_t \Gamma = -i(H\Gamma - \Gamma H^{\dag}) + \sum_{k=1}^KV_k\Gamma V_k^{\dag},
\end{align}
where $H$ (in general non-Hermitian) is defined in \cref{eq:H-block} which involves $H_S$, $S_{\alpha}$, and the bath correlation function (assumed to be part of the input).  $V_k$ will be defined later in \cref{eq:Vk} which involves the bath correlation function. The initial state $\Gamma(0)$, as deduced from the SSE, is given by, 
\begin{align}
  \label{eq:gamma0-intro}
  \Gamma(0) := \ketbra{0}{0}\otimes \rho_S(0) + \sum_{k=1}^K(\ketbra{4k-2}{4k-2} +3 \ketbra{4k})  
  \otimes \rho_S(0).
\end{align}
The system state is embedded in $\Gamma$ in the sense that the upper-left block of $\Gamma(t)$ --- the unnormalized state of the larger system at time $t$ --- approximates the exact system state $\rho_S(t)$ at time $t$ with error up to $\co{\lambda^3}$ (see \cref{eq: thm-rhos}). This error is referred to as the \emph{model error} as it characterizes the accuracy of modelling the actual dynamics determined by \cref{eq: lvn-intro} as a succinct GQME. Our model error of $\co{\lambda^3}$ improves   the model error $\co{\lambda^3/\alpha^3}$ of the Lindblad equation with $\alpha \ll 1$ representing the time scale separation between the system and the bath,  which (surprisingly) had not been precisely characterized until recently \cite{cao_lindblad_2017}. 

One quick illustrative  example of the GQME is a two-qubit model coupled to a common bosonic bath \cite{wang2013exact}, where HEOM type of equations were derived. Specifically, the system Hamiltonian is written as $H_s= \frac{\omega_0}2 (\sigma^\i_z +  \sigma^\ii_z)$, where $\i$ and $ \ii$  label the two qubits and $\omega_0$ is the Zeeman energy. In addition, in the coupling term, $S= \sigma^\i_x +  \sigma^\ii_x$. Thus $M=1.$ Furthermore, the study in \cite{wang2013exact} considered the bath correlation function $C(t)= \frac{\lambda \gamma}2 e^{-(\gamma+i\omega_0) t}.$
In light of \cref{eq:correlation}, we have that $K=1$,  $\theta_1=\sqrt{ \frac{\lambda \gamma}2},$   $\ket{Q_k}=1,$ and $d_1= \omega_0 + i \gamma.$  From the derivation in \cref{eq: T}, we also have $V_1= \theta_1 S.$ In this case, the matrix $\Gamma$ is a $5\times 5$ block matrix with total dimension being $20.$ 

\paragraph{Input model and simulation problem} 
The computational problem we consider is to simulate the dynamics generated by \cref{eq:qme-intro}, which consists of an initial state preparation problem and a target state approximation problem. More formally, we formulate the problem as follows.
\begin{problem}[Simulating non-Markovian open quantum systems]
  \label{prob:simulation}
  Consider the dynamics defined by \cref{eq:qme-intro}. {Suppose we are given access to some efficient descriptions of the operators in \cref{eq:qme-intro}.} For any initial state $\rho_S(0)$ of the system, evolution time $t$, and precision parameter $\epsilon$, we need to
  \begin{enumerate}
    \item prepare the initial state $\Gamma(0)$ as in \cref{eq:gamma0-intro}, and
    \item produce a quantum state $\widetilde{\rho}_S(t)$ for the system so that the trace-distance between this state and the upper-left block of $\Gamma(t)$, which is $\rho_S(t)$, is at most $\epsilon$, where $\Gamma(t)$ is the (unnormalized) state of evolving \cref{eq:qme-intro} for time $t$ with initial state $\Gamma(0)$.
  \end{enumerate}
\end{problem}

To solve this simulation problem, we need \emph{efficient descriptions} of the operators in \cref{eq:qme-intro}. The most straightforward input model is to assume that we are given these efficient descriptions directly. Such assumptions have been made in the literature of simulating Markovian open quantum systems~\cite{KBG11,CL17,CW17}. However, this straightforward input model is often not physically feasible: In many cases, we only have low-level information about the system, such as the system Hamiltonian and the system part of interaction Hamiltonians  {in \cref{eq: htot-intro}}, while high-level information such as descriptions of the operators in \cref{eq:qme-intro} is not readily available. With this practical consideration in mind,  we work with a low-level input model, \ie, we only assume information that arises in \cref{eq: htot-intro}, which is more convenient in real-world applications. 

In particular, for the operators $H_S$ and $S_{\alpha}$, we use a widely-used input model that has been recently introduced by Low and Chuang~\cite{LC19}, and Chakraborty, Gily{\'e}n, and Jeffery~\cite{CGJ19} --- the block-encodings of Hamiltonians. Roughly speaking, a block-encoding with normalizing factor $\alpha$ of a matrix $A$ is a unitary $U$ whose upper-left block is $A/\alpha$. This input model is general enough to include almost all efficient representations that arise in physics applications, including linear combinations of tensor products of Paulis, sparse-access oracles, and local Hamiltonians. {More specifically, if a matrix is $k$-local, then it is $2^k$-sparse. If a matrix $H$ is $d$-sparse, then it can be approximated as a linear combination of unitaries with sum-of-coefficients $\CO(d^2\norm{H}_{\max})$, where $\norm{H}_{\max}$ is the largest entry of $H$ in absolute value. Moreover, if a matrix can be written as a linear combination of matrices with the sum-of-coefficients $s$, then one can efficiently construct its block-encoding with normalizing constant $s$. Note that the inverse directions of the above implications are in general not true. Therefore, our algorithm also works when a $H_S$ and $S_{\alpha}$ are provided in a less general input model, such as a sum of local operators, sparse-access oracles, or linear combinations of unitaries.}

 {The BCF provides valuable information on how the bath influences the dynamics of the quantum system.} In this work, we consider a typical representation of the bath correlation function (BCF), expressed as,
\begin{align}
  \label{eq:correlation}
  {C}_{\alpha,\beta}(t) := \sum_{k=1}^K\theta_k^2\braket{\alpha}{Q_k}\braket{Q_k}{\beta}\exp(-i d_k^*t),
\end{align}
where $\ket{Q_k} \in \bbc^{M}$, $\theta_k \in \bbr$, and $d_k \in \bbc$. Note that for all $t$, ${C}_{\alpha, \beta}$ is an $M\times M$ matrix --- a size that is tractable for classical computers. The treatment of the BCF is at the heart of modeling open quantum systems, and in practice, the specification of the BCF starts with the spectral density (SD) $J_{\alpha,\beta}(\omega)$  that depends on the bath spectrum and interaction strength. For bosonic environment, they are related as follows,
\begin{equation}
    C_{\alpha,\beta } (t) =  \frac{1}{\pi}\int_{-\infty}^\infty \big(\coth(\frac{\omega}{2k_B T} ) \cos (\omega t) - i \sin (\omega t) \big) J_{\alpha,\beta} (\omega) d\omega, 
\end{equation} {
\eg, see \cite{lambert2020bofin}, and \cite{jin2008exact} for more general cases. }
Depending on the application, e.g., solvent, biomolecules, and nano materials, the spectral density can be adjusted accordingly. The pole expansion approach using Cauchy's Residue Theorem has been applied to  various types of SD functions \cite{ritschel2014analytic}.  In particular, both the functions $\coth(\frac{\omega}{2k_B T} )$ and $J(\omega)$ can be treated this way. Such an expansion, after a truncation beyond a cut-off frequency $d_{\max}$ for the poles, gives rise to a finite sum of complex exponential terms \cite{ritschel2014analytic}. The form of the coefficients in \cref{eq:correlation} is to ensure the positive definite property of the BCF. Namely, after the Fourier transform, the BCF has to be a semi positive definite function.   

Our algorithm will take such approximation results as an input. 
We consider the poles $d_k$ within a cut-off frequency $d_{\max}$, \ie, $\abs{d_k} \leq d_{\max}$.

 {With the weak coupling condition, and the explicit representation \cref{eq:correlation} of the BCF, we will derive a generalized quantum master equation as in \cref{eq:qme-intro}, with the extended quantum state $\Gamma(t)$ encoding the system density matrix $\rho_S(t).$ In addition, we demonstrate how the new Hamiltonian operator $H$  in \cref{eq:qme-intro}, as well as the jump operators $V_k$,  can be obtained from the coefficients in the BCF \cref{eq:correlation}. }

In summary, in our quantum algorithm, we assume we are given the following quantities, as an efficient description of \cref{eq:qme-intro}:
\begin{enumerate}
  \item a block-encoding $U_{H_S}$ of $H_S$;
  \item a block-encoding $U_{S_{\alpha}}$ of $S_{\alpha}$ for each $\alpha \in [M]$;
  \item the real numbers $\theta_k$, vectors $\ket{Q_k}$ (all entries), and complex numbers $d_k$ for $k \in [K]$ as in \cref{eq:correlation}, together with an upper bound $d_{\max}$ on $\abs{d_k}$.
\end{enumerate}

\paragraph{Algorithmic contribution} 
Our main algorithmic contribution is a quantum algorithm that solves \cref{prob:simulation}. We informally state this result as follows.
\begin{theorem}[Informal version of \cref{thm:qalg}]
  Suppose that we are given a block-encoding $U_{H_S}$ of $H_S$, block-encodings $U_{S_{\alpha}}$ of $S_{\alpha}$ (for $\alpha \in [M]$), $\theta_k$, $\ket{Q_k}$ (all entries),   {$\lambda$,} and $d_k$ for $k \in [K]$ as in \cref{eq:correlation}. Then there exists a quantum algorithm that solves \cref{prob:simulation} using 
  \begin{align}
    \co{t\,\polylog(t/\epsilon) \poly( {\mu}, M, K, \lambda, d_{\max})},
  \end{align}
  queries to $U_{H_S}$, and $U_{S_a}$ and additional 1- and 2-qubit gates,
  where  {$\mu$} is the normalizing factor of the block-encodings, $M$ is the number of interaction terms in \cref{eq: lvn-intro}, $K$ is the number of Lorentzian terms in \cref{eq:correlation}, and  $d_{\max}$ is an upper bound of $\abs{d_k}$. 
\end{theorem}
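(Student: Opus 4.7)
The plan is to observe that \cref{eq:qme-intro} has exactly the Gorini--Kossakowski--Sudarshan--Lindblad form once the non-Hermitian part of $H$ is identified with the standard $-\tfrac{i}{2}\sum_k V_k^\dagger V_k$ correction: writing $H = H_0 - \tfrac{i}{2}\Lambda$ with $H_0 := (H+H^\dagger)/2$ and $\Lambda := i(H-H^\dagger)$, one checks directly that
\begin{align*}
  -i(H\Gamma - \Gamma H^\dagger) + \sum_k V_k \Gamma V_k^\dagger \;=\; -i[H_0,\Gamma] \;-\; \tfrac{1}{2}\{\Lambda,\Gamma\} \;+\; \sum_k V_k\Gamma V_k^\dagger .
\end{align*}
Thus the evolution is (possibly up to normalization) a Lindblad dynamics on a Hilbert space of dimension $K2^n$, and the target complexity $\co{t\,\polylog(t/\epsilon)}$ is exactly what the nearly optimal Lindblad simulator of Cleve and Wang~\cite{CW17} delivers, provided we can supply block-encodings of $H$ and of each $V_k$ whose normalizing factors are $\poly(\alpha,M,K,d_{\max})$.

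The bulk of the work is therefore the construction of such block-encodings from the low-level input. The operator $H$ defined in \cref{eq:H-block} is block-structured on an index register $\ket{k}$, and its blocks are linear combinations of $H_S$, $S_\alpha$, and scalars involving the poles $d_k$; similarly, \cref{eq:Vk} expresses $V_k$ as a $\ket{k}$-controlled combination of the $S_\alpha$'s weighted by $\theta_k\braket{\alpha}{Q_k}$. I will implement both via the LCU and block-encoding arithmetic primitives of \cite{GSLW19,CGJ19}: a joint state-preparation unitary loads the classical data $\{\theta_k,\ket{Q_k},d_k\}$ onto an $\CO(\log KM)$-qubit ancilla (feasible since these numbers are given explicitly), is multiplexed onto the block-encodings $U_{H_S}$ and $U_{S_\alpha}$, and then uncomputed. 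The initial state $\Gamma(0)$ from \cref{eq:gamma0-intro} is a classical mixture on the first $\CO(\log K)$ index qubits of the single system state $\rho_S(0)$, which can be prepared by coupling $\rho_S(0)$ to $\CO(\log K)$ ancillas in a prescribed Hadamard/CNOT pattern and tracing them out, at $\CO(\log K)$ gate cost per run.

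Feeding the resulting normalizing factors $\alpha_H,\alpha_V = \poly(\alpha,M,K,d_{\max})$ into the Cleve--Wang query bound then yields the claimed total complexity. The main difficulty I anticipate is controlling $\alpha_H$: because $H$ couples $M$ interaction channels to $K$ Lorentzian modes, a careless LCU would produce a factor of order $MK$ in the normalizing factor, and since the simulation cost is linear in this factor, the final complexity would no longer be truly polynomial in the stated parameters. To keep it tame, I expect to exploit the inner-product structure $\theta_k\braket{\alpha}{Q_k}$ via a \emph{single} joint preparation of the form $\sum_{\alpha,k}\theta_k\braket{\alpha}{Q_k}\ket{\alpha}\ket{k}$ rather than two independent preparations over $\alpha$ and $k$, so that $\sqrt{\sum_k \theta_k^2\|\ket{Q_k}\|^2}$ enters the normalizing factor instead of $\sum_{\alpha,k}|\theta_k\braket{\alpha}{Q_k}|$. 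A secondary subtle point is that Cleve--Wang is stated for trace-preserving Lindbladians with Hermitian $H$; I will need to verify that their compressed-rotation construction only uses the block-encodings of the generators and not trace preservation of the flow, so that it applies verbatim to our non-Hermitian $H$ and (possibly subnormalized) $\Gamma$, or else adapt it minimally along the same lines.
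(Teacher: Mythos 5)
Your block-encoding constructions (joint state preparation over the $(\alpha,k)$ index to keep the LCU normalization at $\sqrt{\sum_k\theta_k^2}$ rather than $\sum_{\alpha,k}|\theta_k\braket{\alpha}{Q_k}|$, and the $\CO(\log K)$-ancilla preparation of $\Gamma(0)$) match the paper's Lemmas~\ref{lemma:sum-to-be}, \ref{lemma:block-coding-H}, and \ref{lemma:state-prep} in substance. The gap is in the core reduction. Your algebraic identity $-i(H\Gamma-\Gamma H^\dagger)+\sum_kV_k\Gamma V_k^\dagger=-i[H_0,\Gamma]-\tfrac12\{\Lambda,\Gamma\}+\sum_kV_k\Gamma V_k^\dagger$ is correct, but the premise that this is ``exactly the GKSL form'' requires $\Lambda=\sum_kV_k^\dagger V_k$, and that is false here. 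From \cref{eq:H-block} the anti-Hermitian part of $H$ comes from $H_A=\mathrm{diag}(0,d_1^*,-d_1,\dots)$, giving $\Lambda=\mathrm{diag}(0,2\nu_1,2\nu_1,4\nu_1,4\nu_1,\dots)\otimes I_S+\CO(\lambda)$, whereas \cref{eq:Vk} gives $\sum_kV_k^\dagger V_k=\sum_k2\nu_k(\ketbra{0}{0}+\ketbra{4k-3}{4k-3}+4\ketbra{4k-2}{4k-2})\otimes I_S$; these disagree already in the $(0,0)$ block, so $\tr\Gamma$ is not conserved (indeed it initially \emph{grows}, and \cref{eq: tr-bound} shows $\tr\Gamma(t)=2K+1+2\sum_ke^{-4\nu_kt}+\CO(\lambda^2)$). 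The point you defer as ``secondary'' --- whether Cleve--Wang survives without trace preservation --- is therefore the central obstacle: their oblivious amplitude amplification needs a state-independent, exactly-known success amplitude, which trace preservation supplies and which fails here. The paper does not invoke \cite{CW17} as a black box; it rebuilds the machinery around the non-trace-preserving CP semigroup: the infinitesimal Kraus approximation $A_0=I-i\delta H$, $A_k=\sqrt{\delta}V_k$ (\cref{lemma:dist-m-k}), the channel-implementation lemma for block-encoded Kraus operators with success parameter $1/\sum_j\alpha_j^2$ (\cref{lemma:block-encoding-channel}), a generalized OAA for isometries valid at arbitrary initial angle (\cref{lemma:oaa}), and the trace bounds of \cref{eq: thm-gamma} that make the amplification analysis go through.

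A second omission: \cref{prob:simulation} asks for the \emph{upper-left block} $\Gamma_{0,0}(t)=\rho_S(t)$, not for $\Gamma(t)$ itself. Extracting it requires postselecting the auxiliary register on $\ket{0}$, which succeeds with probability $\Theta(1/K)$ by \cref{eq: tr-bound}; this is precisely where the paper pays the $\sqrt{K}$ amplification factor in \cref{thm:qalg} and why the trace estimates are needed at all. Your proposal never addresses this extraction step, so even granting a working channel simulator, the output of your algorithm is not yet a solution to the stated problem. To repair the proposal you would need to (i) drop the Lindblad recasting and simulate the CP (non-TP) semigroup directly via repeated infinitesimal Kraus maps, (ii) prove a priori bounds on $\tr\Gamma(t)$ to control and amplify the two sources of subnormalization (the block-encoding normalization and the block extraction), which is essentially the content of the paper's \cref{eq: thm-gamma,eq: tr-bound,lemma:oaa}.
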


Our algorithm follows the high-level idea of~\cite{CW17}, but we have generalized their techniques to work with block-encoded inputs. The building block of our algorithm is an implementation of completely positive maps given block-encodings of the Kraus operators (see \cref{lemma:block-encoding-channel} for more details). Suppose the normalizing factors of the block-encodings of the Kraus operators are $\alpha_1, \ldots, \alpha_m$, then our construction gives the success probability parameter $1/\sum_{j=1}^m\alpha_j^2$, while a straightforward construction using Stinespring dilation yields a worse success probability parameter $1/(\sum_{j=1}^m\alpha_j)^2$, which does not permit the desired dependence on $t$ and $\epsilon$. 

Since the dynamics that \cref{eq:qme-intro} generates is completely positive, we consider an \emph{infinitesimal approximation map} that approximates   {the first-order Taylor approximation of} the dynamics for a small enough evolution time. From the low-level input model, we can construct the block-encodings of the Kraus operators of this infinitesimal approximation map, and it can be implemented using our building block \cref{lemma:block-encoding-channel} with high success probability parameter. We repeat this construction until the success probability parameter becomes a constant, and at that point, we have obtained a normalized version of $\Gamma(t)$ for $t$ proportional to a constant. Now, to extract the upper-left block of the resulting (normalized) density matrix, we use oblivious amplitude amplification for isometries (\cite{CW17} and \cref{lemma:oaa}) to achieve this with an extra factor $\sqrt{K}$, as the trace of $\Gamma(t)$ is upper bounded by $\CO(K)$ for all $t$ (see \cref{eq: tr-bound}). 

  {For the problem of simulating Hamiltonian dynamics, high-order Taylor approximation yields simpler and faster quantum algorithms, e.g.~\cite{BCC15}. However, for simulating open quantum systems, it is not known how to take advantage of higher-order approximations. This is because high powers of the superoperator defined in \cref{eq:qme-intro} are too complicated to keep track of its completely positive structure, which is the key to implementing such maps.}

\subsection{Summary of contributions}
We highlight our contributions as follows:
\begin{enumerate}
  \item We provide a succinct representation of non-Markovian dynamics, where the density matrix from the GQME is consistent with that from the full quantum dynamics with provable  $\co{\lambda^3}$ accuracy. A notable feature of our new succinct representation is that the positivity is guaranteed, which is the key requirement for designing quantum simulation algorithms.

  \item We develop a quantum algorithm based on this representation. The cost of our algorithm scales linearly in $t$, poly-logarithmically in $\epsilon$, and polynomially in $M$ and $K$. To the best of our knowledge, this algorithm is the first to achieve linear dependence on $t$ and poly-logarithmic dependence on $\epsilon$ for simulating non-Markovian open quantum systems. In addition, our algorithm works with low-level input models, which are readily available in many real-world applications.
  
  \item Other technical contributions: We have shown that the GQMEs can be unraveled (see \cite{breuer2002theory}) into stochastic Schr\"odinger equations, which provides another potential alternative to obtain the density matrix  {as a statistical quantity}. In addition, we prove that the extended density matrix from the GQME is bounded over the time scale $\co{\lambda^{-1}}.$   
\end{enumerate}

\subsection{Related work}
In the context of modeling open quantum systems \cite{breuer2002theory}, the hierarchical equations of motion (HEOM) approach \cite{tanimura2006stochastic,tanimura2020numerically} also yields an extended dynamics for the density matrix, but without using the weak coupling assumption.
Rather, the equations are truncated based on numerical observations. In principle, the quantum algorithms in this work can also be applied to those GQMEs from the HEOM approach,  { provided that the approach can be proved to produce completely positive maps}.  Meanwhile, since no error bound is available, it is difficult to prescribe the level of truncation in advance, which also makes it difficult to estimate the computational complexity.

Another interesting development is to embody the memory effect using a local form of the GQME, where the generator consists of multiple Lindblad operators with time-dependent coefficients. In fact, it has been proved \cite{chruscinski_non-markovian_2010} that any nonlocal form of  the non-Markovian dynamics can be rewritten in a local form with \emph{time-dependent} generators.  {Due to the fact that the proof is non-constructive,  the implementations of this type of Lindblad operators are empirical. }
Sweke, Sanz, Sinayskiy, Petruccione, and Solano~\cite{Swe+16} considered such time-local quantum master equations, and constructed quantum algorithms.  
Their algorithms rely on Trotter splittings, by decomposing the entire generator into local operators. Since it is not yet clear how the GQMEs from the current approach, or those from the HEOM approach, can be expressed in time-local forms, a direct comparison of the computational complexity is not yet available. 

\vspace{1em}
\subsection{Open questions}
Modeling open quantum systems outside the weak coupling regime is still an outstanding challenge. The HEOM approach \cite{tanimura2020numerically} relies on a frequency cut-off to achieve a Markovian embedding. Quantifying the error associated with such an approximation, and ensuring the positivity are two of the remaining issues.   

Another interesting scenario is when the open quantum system is subject to an external potential. Quantum optimal control is one important example. Deriving a GQME in the presence of a time-dependent external field while still maintaining the control properties is still an open problem to the best of our knowledge.  

The cost of our quantum algorithm is $\CO(t\,\polylog(t/\epsilon))$. Is there a faster quantum algorithm that achieves an additive cost, \ie, $\CO(t + \polylog(1/\epsilon))$? This additive cost is the lower bound for Hamiltonian simulation~\cite{BACS07,BCK15,BCC17}, and it is hence a lower bound for the non-Markovian simulation problem. The optimal Hamiltonian simulation was achieved by quantum signal processing due to Low and Chuang~\cite{LC17}, which has been generalized to quantum singular value transformation by Gily{\'e}n, Su, Low, and Wiebe~\cite{GSLW19}. Unfortunately, these techniques do not immediately generalize to open quantum systems, as it is not clear what the correspondence of singular values and eigenvalues should be for superoperators.

\section{Preliminaries}
\label{sec:prelim}

\subsection{Notation}
\label{sec:notation}

In this paper, we use the ket-notation to denote a vector only when it is normalized. For a vector $v$, we use $\norm{v}$ to denote its \emph{Euclidean norm}. For a square matrix $M$, we use $\norm{M}$ to denote its \emph{spectral norm} and use $\norm{M}_1$ to denote its \emph{trace norm}, \ie, $\norm{M}_1 = \tr(\sqrt{M^{\dag}M})$. The identity operator acting on a Hilbert space of dimension $N$ is denoted by $I_N$, \eg, $I_{2^n}$ is the identity operator acting on $n$ qubits. When the context is clear, we drop the subscript and simply use $I$. We use calligraphic fonts, such as $\mathcal{K}$, $\mathcal{L}$, and $\mathcal{M}$ to denote \emph{superoperators}, which maps matrices to matrices. We consider superoperators of the form
\begin{align}
  \mathcal{M}: \bbc^{N \times N} \rightarrow \bbc^{M \times M},
\end{align}
and use $\rmt(\bbc^N, \bbc^M)$ to denote the set of all such superoperators. In particular, we use $\mathcal{I}$ to denote the \emph{identity map}, which maps every matrix to itself. Whenever necessary, we use the subscript $N$ of $\mathcal{I}_N \in \rmt(\bbc^N, \bbc^N)$ to highlight the dimension of matrices it acts on. For example, $\mathcal{I}_{2^n}$ is acting on $n$-qubit operators. The \emph{induced trace norm} of a superoperator $\mathcal{M} \in \rmt(\bbc^N, \bbc^M)$, denoted by $\norm{\mathcal{M}}_1$, is defined as
\begin{align}
  \norm{\mathcal{M}}_1 = \max\{\norm{\mathcal{M}(A)}_1: A \in \bbc^{N \times N}, \norm{A}_1 \leq 1\}.
\end{align}
The \emph{diamond norm} of a superoperator $\mathcal{M} \in \rmt(\bbc^N, \bbc^M)$, denoted by $\norm{\mathcal{M}}_{\diamond}$ is defined as
\begin{align}
  \norm{\mathcal{M}}_{\diamond} := \norm{\mathcal{M}\otimes\mathcal{I}_N}_1.
\end{align}

For a positive integer $m$, we use $[m]$ to denote the set $\{1, 2, \ldots, m\}$.

\subsection{The block-encoding method}
We use the notion of \emph{block-encoding} as an efficient description of input operators. To have access to an operator $A$, we assume we have access to a unitary $U$ whose upper-left block encodes $A$ in the sense that
\begin{align}
  U = 
  \begin{pmatrix}
    A/\alpha & \cdot\\
    \cdot & \cdot\\
  \end{pmatrix},
\end{align}
which implies that $A = \alpha(\bra{0}\otimes I)U\ket{0}\otimes I)$. More precisely, we have the following definition.

\begin{definition}
  Let $A$ be an $n$-qubit operator. For a positive real number $\alpha > 0$ and natural number $m$, we say that an $(n + m)$-qubit unitary $U$ is an $(\alpha, m, \epsilon)$-\emph{block-encoding} of $A$ if
  \begin{align}
    \norm{A - \alpha(\bra{0}\otimes I_{2^n})U\ket{0}\otimes I_{2^n})} \leq \epsilon.
  \end{align}
\end{definition}

The following lemma shows how to construct a block-encoding for sparse matrices.
\begin{lemma}[{\cite[Lemma 48]{GSLW19}}]
  \label{lemma:sparse-to-be}
  Let $A\in\bbc^{2^n\times 2^n}$ be an $n$-qubit operator with at most $s$ nonzero entries in each row and column. Suppose $A$ is specified by the following sparse-access oracles:
  \begin{align}
    \label{eq:sparse-1}
    O_A: &\ket{i}\ket{j}\ket{0} \mapsto \ket{i}\ket{j}\ket{A(i, j)}, \text{ and}\\
    \label{eq:sparse-2}
    O_S: &\ket{i}\ket{k} \mapsto \ket{i}\ket{r_{i,k}},
  \end{align}
  where $r_{i,k}$ is the $k$-th nonzero entry of the $i$-th row of $A$. Suppose $\abs{A_{i,j}} \leq 1$ for $i \in [m]$ and $j\in[n]$. Then for all $\epsilon \in (0, 1)$, an $(s, n+3, \epsilon)$-block-encoding of $A$ can be implemented using $\CO(1)$ queries to $O_A$ and $O_S$, along with $\CO(n+\polylog(1/\epsilon))$ 1- and 2-qubit gates.   {Moreover, if $A_{i,j} \in \{0, 1\}$ for all $i \in [m]$ and $j \in [n]$, the block-encoding of $A$ can be implemented precisely, i.e., $\epsilon = 0$.\footnote{The case when $A_{i,j} \in \{0, 1\}$ was not explicitly stated in \cite[Lemma 48]{GSLW19}; however, the conclusion is not hard to obtain as a special case of their proof.}}
\end{lemma}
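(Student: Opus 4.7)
The plan is to build $U$ as a composition $U = T_2^{\dag}\cdot S\cdot T_1$, where $S$ is a SWAP of two $n$-qubit registers (the ``row'' and ``column'' registers), and $T_1, T_2$ are sparsity-aware state-preparation unitaries. Conditioned on a column index $\ket{j}$, the unitary $T_1$ should prepare the state
\begin{align}
T_1\ket{0^{n+3}}\ket{j} \;=\; \frac{1}{\sqrt{s}}\sum_{k=1}^{s}\ket{c(j,k)}\ket{j}\ket{\phi_{c(j,k),j}},
\end{align}
where $c(j,k)$ is the row index of the $k$-th nonzero entry in column $j$ (obtained from one query to $O_S$), and $\ket{\phi_{ij}}$ is a three-qubit ancilla state satisfying $\langle 0^{\otimes 3}|\phi_{ij}\rangle = \sqrt{A_{ij}^*}$ (with the appropriate branch of the complex square-root fixed once and for all). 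Such a state is produced by querying $O_A$ to load $A_{ij}$ into an auxiliary register, performing a controlled rotation whose two classically-computed angles depend on $|A_{ij}|\le 1$ and on $\arg(A_{ij})/2$, and then uncomputing $O_A$. The unitary $T_2$ is built symmetrically with rows and columns interchanged and with the \emph{opposite} phase branch, so that its ancilla state $\ket{\phi'_{ij}}$ satisfies $\langle 0^{\otimes 3}|\phi'_{ij}\rangle = \sqrt{A_{ij}}$.

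Next I would verify by direct calculation that this construction produces the desired block structure in the exact-arithmetic limit:
\begin{align}
\bigl(\bra{0^{n+3}}\otimes I_{2^n}\bigr)\,T_2^{\dag}\,S\,T_1\,\bigl(\ket{0^{n+3}}\otimes\ket{j}\bigr) \;=\; \frac{1}{s}\sum_{i=1}^{2^n}A_{ij}\,\ket{i}.
\end{align}
Each of $T_1$ and $T_2$ contributes a factor $1/\sqrt{s}$ from the uniform superposition over the $s$ nonzero positions; the SWAP $S$ aligns the support of $T_1$ (nonzero rows of column $j$) with the support of $T_2$ (nonzero columns of row $i$), so that the two sparsity patterns overlap on exactly those $(i,j)$ with $A_{ij}\neq 0$; and the inner product of the ancilla states produces $\sqrt{A_{ij}^*}\cdot\sqrt{A_{ij}} = A_{ij}$, with the two branch choices arranged so that the halved phases add to $\arg(A_{ij})$. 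Thus $U = T_2^{\dag}S\,T_1$ is an $(s,n+3,0)$-block-encoding of $A$ in the absence of rotation error.

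The remaining task is the error and gate-count analysis. The only source of approximation is the controlled rotation that realizes $\ket{\phi_{ij}}$ from the bit string returned by $O_A$: the angles $\arcsin\sqrt{|A_{ij}|}$ and $\arg(A_{ij})/2$ must be computed by reversible classical arithmetic to enough precision, and then applied as a controlled rotation about a classically specified angle. Standard reversible circuits for $\sqrt{\,\cdot\,}$, $\arcsin$, and $\arctan$, followed by a bit-wise controlled rotation, achieve spectral error $\epsilon$ using $\polylog(1/\epsilon)$ one- and two-qubit gates; the remaining operations (uniform superposition over $k\in[s]$, SWAP, basic index-register manipulations, and uncomputation of $O_A$) contribute $\CO(n)$ gates, which combine to the claimed $\CO(n+\polylog(1/\epsilon))$ bound. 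The main obstacle I anticipate is the bookkeeping of the complex phase: we must ensure that $T_1$ and $T_2$ use opposite branches of the square-root so that the two ``half-phases'' constructively combine to $\arg(A_{ij})$ instead of producing $|A_{ij}|$ or some spurious phase; I would resolve this by dedicating one of the three ancilla qubits to a controlled phase gate with opposite sign on the two sides of the construction.
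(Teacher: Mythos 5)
The paper does not prove this statement at all --- it is imported verbatim as \cite[Lemma 48]{GSLW19} --- so there is no in-paper argument to compare against. Your construction $U = T_2^{\dag} S T_1$, with two sparsity-aware state-preparation isometries glued by a swap of the two $n$-qubit registers, is exactly the construction in the cited source, and your query and gate accounting ($\CO(1)$ oracle calls, $\CO(n)$ index manipulation, $\polylog(1/\epsilon)$ for the finite-precision controlled rotations) is the right one.

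Two details need repair, one of which is an actual error as written. First, the conjugation convention is backwards. Writing $\ket{\psi_j} = T_1\ket{0^{n+3}}\ket{j}$ and $\ket{\varphi_i} = T_2\ket{0^{n+3}}\ket{i}$, the matrix element you must compute is $\bra{i}\bigl(\bra{0^{n+3}}\otimes I\bigr)T_2^{\dag} S T_1\bigl(\ket{0^{n+3}}\otimes\ket{j}\bigr) = \bra{\varphi_i}S\ket{\psi_j}$, in which the $T_2$-side amplitude enters \emph{conjugated}. With your stated choices $\langle 0^{\otimes 3}|\phi_{ij}\rangle = \sqrt{A_{ij}^*}$ (on the $T_1$ side) and $\langle 0^{\otimes 3}|\phi'_{ij}\rangle = \sqrt{A_{ij}}$ (on the $T_2$ side), the surviving term is $\bigl(\sqrt{A_{ij}}\bigr)^{*}\cdot\sqrt{A_{ij}^*} = A_{ij}^{*}$: the half-phases add to $-\arg(A_{ij})$, not $+\arg(A_{ij})$, and you have block-encoded the entrywise conjugate of $A$. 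Swapping the two branch assignments fixes this. Second, the assertion that the ancilla inner product ``produces $\sqrt{A_{ij}^*}\cdot\sqrt{A_{ij}}$'' silently assumes that the components of $\ket{\phi_{ij}}$ and $\ket{\phi'_{ij}}$ orthogonal to $\ket{0^{\otimes 3}}$ are also orthogonal to \emph{each other}; otherwise the junk amplitudes of magnitude $\sqrt{1-|A_{ij}|}$ contribute a spurious cross term. This is what the spare ancilla qubits are for: place the junk of $T_1$ on, say, $\ket{01}$ and that of $T_2$ on $\ket{10}$ of two flag qubits so the cross term vanishes identically. (A last cosmetic point: the lemma as stated in the paper only supplies a row oracle, while your $T_1$ needs the column version; the cited source assumes both, so this is a defect of the paper's transcription rather than of your proof.) With these repairs the argument goes through.
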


A linear combination of block-encodings can be constructed using \cite[Lemma 52]{GSLW19}. Here, we slightly generalize their construction to achieve better performance when the normalizing factors of each block-encoding are different.

\begin{lemma}
  \label{lemma:sum-to-be}
  Suppose $A := \sum_{j=1}^m y_j A_j \in \bbc^{2^n\times 2^n}$, where $A_j \in \bbc^{2^n \times 2^n}$ and $y_j > 0$ for all $j \in \{1, \ldots m\}$. Let $U_j$ be an $(\alpha_j, a, \epsilon)$-block-encoding of $A_j$, and $B$ be a unitary acting on $b$ qubits (with $m \leq 2^b-1$) such that $B\ket{0} = \sum_{j=0}^{2^b-1}\sqrt{\alpha_jy_j/s}\ket{j}$, where $s = \sum_{j=1}^my_j\alpha_j$. Then a $(\sum_{j}y_j\alpha_j, a+b, \sum_{j}y_j\alpha_j\epsilon)$-block-encoding of $\sum_{j=1}^my_jA_j$ can be implemented with a single use of $\sum_{j=0}^{m-1}\ketbra{j}{j}\otimes U_j + ((I - \sum_{j=0}^{m-1}\ketbra{j}{j})\otimes I _{\bbc^{2^a}}\otimes I_{\bbc^{2^{n}}})$ plus twice the cost for implementing $B$.

\end{lemma}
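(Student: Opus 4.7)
The plan is a standard linear combination of unitaries (LCU) construction, tailored so that the prepare amplitudes absorb the heterogeneous normalizing factors $\alpha_j$. Let $s:=\sum_{j=1}^m y_j\alpha_j$ and let $B_j:=(\bra{0}\otimes I_{2^n})U_j(\ket{0}\otimes I_{2^n})$ denote the block encoded by $U_j$, so that by hypothesis $\norm{A_j-\alpha_j B_j}\leq\epsilon$. Writing $\mathrm{SEL}$ for the padded select operator given in the statement, I would take as candidate
\[
  W := (B^{\dag}\otimes I_{2^a}\otimes I_{2^n})\,\mathrm{SEL}\,(B\otimes I_{2^a}\otimes I_{2^n})
\]
and aim to show that the upper-left $(a+b)$-qubit block of $W$ equals $A/s$ up to the claimed error.

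The core computation is to push $\ket{0}_b\ket{0}_a\ket{\psi}$ through $W$ register-by-register: $B$ produces $\sum_j\sqrt{y_j\alpha_j/s}\,\ket{j}_b\ket{0}_a\ket{\psi}$; then $\mathrm{SEL}$ sends this to $\sum_j\sqrt{y_j\alpha_j/s}\,\ket{j}_b\,U_j\ket{0}_a\ket{\psi}$; and finally $B^{\dag}$ together with the $\bra{0}_b\bra{0}_a$ projection on the prepare and ancilla registers extracts a second factor $\sqrt{y_j\alpha_j/s}$ on each branch together with the block $B_j$ on the target register. The identity-padded part of $\mathrm{SEL}$ contributes nothing because $B\ket{0}$ is supported only on basis states $\ket{j}$ for which $y_j\neq 0$. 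This yields
\[
  (\bra{0}_b\bra{0}_a\otimes I_{2^n})\,W\,(\ket{0}_b\ket{0}_a\otimes I_{2^n}) \;=\; \sum_{j=1}^{m}\frac{y_j\alpha_j}{s}\,B_j.
\]
An application of the triangle inequality,
\[
  \norm{A - s\sum_{j=1}^{m}\tfrac{y_j\alpha_j}{s}B_j} = \norm{\sum_{j=1}^{m}y_j(A_j-\alpha_j B_j)} \leq \sum_{j=1}^{m}y_j\epsilon \leq \sum_{j=1}^{m}y_j\alpha_j\epsilon
\]
(the last inequality being valid in the typical block-encoding regime $\alpha_j\geq 1$), then establishes that $W$ is an $(s,a+b,s\epsilon)$-block-encoding of $A$. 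The gate count is immediate: $W$ calls $\mathrm{SEL}$ once and $B$ (or its inverse) twice.

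The only real subtlety I anticipate is the bookkeeping around indexing and padding: the prepare register ranges over $\ket{0},\ldots,\ket{2^b-1}$ while only $m$ of these correspond to actual summands, and $\mathrm{SEL}$ is padded with identity on the unused sector. One must verify that this padding is consistent with $B\ket{0}$, which it is, since $B\ket{0}$ has no amplitude on the padded basis states and thus these branches drop out of the upper-left block. Beyond this, the construction is essentially Lemma~52 of~\cite{GSLW19} with the uniform prepare amplitudes $1/\sqrt{m}$ replaced by $\sqrt{y_j\alpha_j/s}$, and I expect no deeper conceptual difficulty.
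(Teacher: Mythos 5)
Your proposal is correct and takes essentially the same route as the paper's proof: the same prepare--select--unprepare operator $W=(B^{\dag}\otimes I)\,\mathrm{SEL}\,(B\otimes I)$, the same computation of the upper-left block as $\sum_j (y_j\alpha_j/s)B_j$, and the same triangle-inequality bound. The only differences are cosmetic --- you spell out the padding bookkeeping and flag the $\alpha_j\geq 1$ caveat needed to reach the stated error $\sum_j y_j\alpha_j\epsilon$, an assumption the paper's proof also implicitly uses when it rewrites the block-encoding guarantee as $\norm{B_j-A_j/\alpha_j}\leq\epsilon$.
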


\begin{proof}
  The proof is similar to that of \cite[Lemma 52]{GSLW19}. The difference is that, instead of preparing the state $\sum_{j=1}^m \frac{\sqrt{y_j}}{\sqrt{\sum_{j=1}^my_j}}\ket{j}$, we use the state preparation gate $B$ here.
  First note that
  \begin{align}
    \norm{(\bra{0}^{\otimes a}\otimes I_{2^n})U_j(\ket{0}^{\otimes a}\otimes I_{2^n}) - A_j/\alpha_j} \leq \epsilon.
  \end{align}
  Let $W = \sum_{j=0}^{m-1}\ketbra{j}{j}\otimes U_j + ((I - \sum_{j=0}^{m-1}\ketbra{j}{j})\otimes I _{2^a}\otimes I_{2^{n}})$ and define $\widetilde{W} = (B^{\dag}\otimes I_{2^a} \otimes I_{2^n}) W (B\otimes I_{2^a} \otimes I_{2^n})$. We have
  \begin{align}
    &\norm{\sum_{j=1}^my_jA_j - s\left(\bra{0}^{\otimes b}\otimes \bra{0}^{\otimes a} \otimes I_{2^n}\right)\widetilde{W}\left(\ket{0}^{\otimes b}\otimes \ket{0}^{\otimes a} \otimes I_{2^n}\right)} \\
    &=\norm{\sum_{j=1}^my_jA_j - \sum_{j=1}^m\alpha_jy_j\left(\bra{0}^{\otimes a}\otimes I_{2^n})U_j(\ket{0}^{\otimes a}\otimes I_{2^n})\right)} \\
    &\leq \sum_{j=1}^my_j\alpha_j \norm{A_j/\alpha_j - \left(\bra{0}^{\otimes a}\otimes I_{2^n})U_j(\ket{0}^{\otimes a}\otimes I_{2^n})\right)} \\
        &\leq \sum_{j=1}^my_j\alpha_j\epsilon.
  \end{align}
\end{proof}

\subsection{Technical tool for implementing completely positive maps}
\label{sec:technicaltools}
In this subsection, we provide the technical primitives for developing a simulation algorithm. The following lemma generalizes the technique of linear combination of unitaries for completely positive maps~\cite{CW17} to the context of block-encoding. This tool might be of independent interest as well.

\begin{lemma}
  \label{lemma:block-encoding-channel}
  Let $A_1, \ldots, A_{m} \in \bbc^{2^n}$ be the  Kraus operators  {of a completely positive map $\mathcal{M}$ \cite{kraus1971general}, 
  \[\mathcal{M} (\rho) = \sum_{j=1}^m A_j \rho A_j^\dagger.\]
  }  Let $U_1, \ldots, U_m \in \bbc^{2^{n+n'}}$ be their corresponding $(s_j, n', \epsilon)$-block-encodings, \ie, 
  \begin{align}
    \norm{A_j - s_j (\bra{0}\otimes I)U_j\ket{0}\otimes I)} \leq \epsilon, \quad \text{ for all $1 \leq j \leq m$}.
  \end{align}
  Let $\ket{\mu}:= \frac{1}{\sqrt{\sum_{j=1}^{m}s_j^2}}\sum_{j=1}^{m}s_j\ket{j}$. Then $(\sum_{j=1}^m\ketbra{j}{j}\otimes U_j)\ket{\mu}\ket{0}\otimes I$ implements this completely positive map  { $\mathcal{M}$ } in the sense that
  \begin{align}\label{eq: M-apply}
    \norm{I\otimes \bra{0}\otimes I \left(\sum_{j=1}^m\ketbra{j}{j}\otimes U_j\right) \ket{\mu}\ket{0}\ket{\psi} - \frac{1}{\sqrt{\sum_{j=1}^ms_j^2}}\sum_{j=1}^m \ket{j}A_j\ket{\psi}} \leq \frac{m\epsilon}{\sqrt{\sum_{j=1}^ms_j^2}}
  \end{align}
  for all $\ket{\psi}$.
\end{lemma}

\begin{proof}
  It is easy to verify that
  \begin{align}
    \left(\sum_{j=1}^m\ketbra{j}{j}\otimes U_j\right)\ket{\mu}\ket{0}\ket{\psi} = \frac{1}{\sqrt{\sum_{j=1}^ms_j^2}}\sum_js_j\ket{j}U_j(\ket{0}\ket{\psi}).
  \end{align}
 {  With a direct substitution into \cref{eq: M-apply}, one arrives at,}
  \begin{align}
    &\norm{(I\otimes \bra{0}\otimes I) \left(\sum_{j=1}^m\ketbra{j}{j}\otimes U_j\right) \ket{\mu}\ket{0}\ket{\psi} - \frac{1}{\sqrt{\sum_{j=1}^ms_j^2}}\sum_j\ket{j} A_j\ket{\psi}} \\
    &= \norm{(I\otimes \bra{0}\otimes I) \frac{1}{\sqrt{\sum_{j=1}^ms_j^2}}\sum_js_j\ket{j}U_j(\ket{0}\ket{\psi}) - \frac{1}{\sqrt{\sum_{j=1}^ms_j^2}}\sum_j\ket{j} A_j\ket{\psi}}\\
    &= \frac{1}{\sqrt{\sum_{j=1}^ms_j^2}}\norm{\sum_js_j\ket{j} (\bra{0}\otimes I)U_j(\ket{0}\otimes I)\ket{\psi} - \sum_j\ket{j} A_j\ket{\psi}} \\
&\leq \frac{m\epsilon}{\sqrt{\sum_{j=1}^ms_j^2}}.
  \end{align}
\end{proof}

\section{Non-Markovian Quantum Master Equation}
In this section, we present the derivation of the GQME \cite{li2021markovian}. Non-Markovian dynamics has been extensively studied in the context of open quantum systems \cite{breuer2002theory}.
The starting point for considering an open quantum system is a quantum dynamics that couples a quantum system and a bath environment,
\begin{equation}\label{eq: lvn}
    i \partial_t \rho = [H_\text{tot}, \rho], \quad \rho(0)= \rho_S(0) \otimes \rho_B,
\end{equation}
where the coupled Hamiltonian is given by,
\begin{equation}\label{eq: htot}
    H_\text{tot} = H_S \otimes I_B + I_S \otimes H_B + \lambda \sum_{\alpha=1}^M S_\alpha \otimes B_\alpha.
\end{equation}
Here $H_S \in \bbc^{2^n\times 2^n}$ is acting on $n$ qubits, and $M$ refers to the number of interaction terms.

We follow the standard setup by choosing $\rho_B$ according to a statistical ensemble, \eg, 
\begin{equation}
    \rho_B = \frac{\exp (-\frac{H_B}{k_BT})}Z, \quad Z:= \tr \left(\exp (-\frac{H_B}{k_B T})\right),
\end{equation}
 { with $k_B$ and $T$ being respectively the Boltzmann constant and the temperature.  As a result, $[H_B, \rho_B]=0.$}  Without loss of generality, we can assume that $\text{tr}(\rho_B B_\alpha)=0, \; \alpha=1,2,\cdots,M,$ which can be ensured by properly shifting of the operators \cite{gaspard1999non}. This helps to eliminate $\mathcal{O}(\lambda)$ terms in the asymptotic expansion \cite{gaspard1999non}.

 {
Of primary interest in the theory of open quantum systems is the density matrix of the system,
$\rho_S(t)$, which in principle can be obtained with a partial trace: $\rho_S(t)=\text{tr}_B\left(\rho(t)\right).$  } To arrive at a  quantum master equation that embodies non-Markovian properties, we start with the non-Markovian stochastic Schr\"odinger equation (NMSSE), which was derived from  the wave function representation of \cref{eq: lvn}  in \cite{gaspard1999non}, and later revisited in \cite{biele2012stochastic}. In NMSSE, a stochastic realization of the quantum state follows a stochastic differential equation,
\begin{equation}\label{eq: sse}
    i \partial_t \psi = \hat H_S \psi - i \lambda^2\sum_{\alpha,\beta=1}^M  \int_0^t C_{\alpha,\beta}(\tau) S^\dagger_\alpha  
    e^{-i \hat H_S \tau } S_\beta   \psi(t-\tau) d\tau +\lambda \sum_{\beta=1}^M   \eta_\beta(t)  S_\beta   \psi(t).
\end{equation}
Here $i=\sqrt{-1}$. The matrix $C(t): \mathbb{R} \to \mathbb{C}^{M\times M}$ with elements $C_{\alpha,\beta}(t)$ corresponding to the correlation among the bath correlation $\{B_\alpha\}_{1\le \alpha \le M}$.

Each noise term $\eta_\alpha(t): \mathbb{R} \to \mathbb{C}$ is Gaussian with mean zero and correlation given by \cite{biele2012stochastic},
\begin{equation}\label{eq: fdt}
    \mathbb{E}[ \eta_\alpha^*(t)\eta_\beta(t') ]= C_{\alpha,\beta}(t-t'), \quad 1\leq \alpha, \beta \leq M.
\end{equation}
The stationarity of the process also implied that $C(t)=C(-t)^\dagger$. Thus, it suffices to consider the correlation function for $t\geq 0.$ This relation between a dissipation kernel and the time correlation of the noise is well-known in non-equilibrium statistical physics,
and it is often labeled as the second fluctuation-dissipation theorem \cite{kubo1966fluctuation}.

\subsection{The generalized quantum master equation (GQME)}\label{eq: gqme-derive}

With a scale separation assumption, the NMSSE can be reduced to the Lindblad equation \cite{cao_lindblad_2017,lidar2001completely}.  {In this regime, the bath correlation behaves like a delta function \cite{gaspard1999non},
\begin{equation}\label{eq: markA}
     C(t-t') \approx \sum_{j=1}^M  \theta_j^2 \ketbra{R_j}{R_j} \delta(t-t'),
\end{equation}
which  simplifies the memory integrals to local terms. By defining operators $L_j$,
\begin{equation}\label{eq: V}
  { L_{j}}=\sum_{\beta=1}^M  \theta_j \braket{R_{j}}{\beta} S_\beta,
\end{equation} 
the NMSSE then implies the following Lindblad equation \cite{biele2012stochastic},
\begin{equation}
\label{eq: lindblad}
i \partial_t \rho_S = [H_S, \rho_S] - \lambda^2 {\sum_{j=1}^M \Big(  L^\dagger_j 
    L_j \rho_S + \rho_S  L^\dagger_j 
    L_j - 2  L_j \rho_S
    L^\dagger_j \Big).}
\end{equation}}

However, in the non-Markovian regime, the assumption \cref{eq: markA} breaks down,  {i.e., the correlation length is finite}, and a different approach is needed to derive a quantum master equation that governs the dynamics of $\rho_S$. In this paper, we follow the general unraveling approach \cite{breuer2002theory}.
In \cref{a: deriv} we show that the memory effect can be embedded in an extended stochastic system by introducing auxiliary wave functions, $\chi^\i$ to $\chi^\iiii.$ The dynamics can be summarized in the following form, 
\begin{equation}\label{eq: ext-iv}
\left\{
    \begin{array}{l}
        i \partial_t \psi =\dsp  \hat H_S \psi - i \lambda    \sum_{k=1}^K T_k^\dagger  \chi^\i_k - i \lambda   \sum_k T_k \chi^\ii_k, \\
    \begin{array}{ll}
            i\partial_t \chi^\i_k =& (\hat H_S + d^*_k ) \chi^\i_k + i \lambda T_k \psi(t),\\
       i\partial_t \chi^\ii_k = & (\hat H_S - d_k) \chi^\ii_k  -i \lambda  T_k^\dagger \chi^\iii_k -i \lambda  T_k \chi^\iiii_k +  i  \gamma_k \psi(t) \dot{w}_k(t),\\
       i\partial_t \chi^\iii_k = & i \lambda T_k \chi_k^\ii +  (\hat H_S - d_k-d_k^*) \chi^\iii_k + i\gamma_k \chi_k^\i \dot{w}_k(t),\\
       i\partial_t \chi^\iiii_k = & i \lambda T_k^\dagger \chi_k^\ii +  (\hat H_S - 2d_k) \chi^\iiii_k + 2 i\gamma_k \chi_k^\ii \dot{w}_k(t),\\
    \end{array}
        \end{array}
     \right.
\end{equation}
for $  k=1,2,\ldots, K.$  {Due to the presence of the auxiliary wave functions, the dynamics of $\psi$ is non-Markovian, and the additional equations induce a memory effect that imitates the non-Markovian behavior.}
We have dropped  $\mathcal{O}(\lambda)$ terms in the last two equations,  { which can be justified as follows, by a substitution into the third equation, one can see that this truncation will contribute to an  $\mathcal{O}(\lambda^2)$ error to the dynamics of $\chi^\ii,$ which, after another substitution, leads to an  $\mathcal{O}(\lambda^3)$ perturbation
in the first equation in \cref{eq: ext-iv}.} Namely, the accuracy is the same as the NMSSE \cref{eq: sse}.  The operators $T_k$ are defined  {in terms of the operators $S_\beta$ in \cref{eq: htot-intro} and the coefficients in the BCF \cref{eq:correlation} as follows,}
\begin{equation}\label{eq: T} 
  T_{k}=\sum_{\beta=1}^M  \theta_k \braket{Q_{k}}{\beta} S_\beta,
\end{equation} 

From the definitions of the auxiliary functions, we can deduce their initial conditions,
\[ \chi_k^\i(0)=0, \quad  \chi_k^\ii(0)= i \zeta_k(0) \psi(0),\quad \chi_k^\iii(0)=0,\quad \chi_k^\iiii(0)= -\zeta_k(0)^2 \psi(0),\]{
with $\zeta_k$'s being independent Gaussian random variables of zero mean and unit variance.}
 
 To derive the corresponding GQME 
of the stochastic model in \cref{eq: ext-ii}, we first write it  as a system of SDEs,
\begin{equation}\label{eq: Psi-sde}
  i \partial_t \Psi = H \Psi + \sum_{k=1}^K V_k \Psi  \dot{w}_k.
\end{equation}
Here the function $\Psi$ includes the wave function $\psi$ and all the auxiliary wave functions $\{\chi_k^\i, \chi_k^\ii, \chi_k^\iii, \chi_k^\iiii\}_{k=1}^K$. One can arrange the wave functions and the operator $V_1$ as follows,
\begin{equation}\label{eq: Psi-V_k}
  \Psi = \left(
  \begin{array}{c}
    \psi \\
    \chi_1^\i\\
    \chi_1^\ii\\
        \chi_1^\iii\\
    \chi_1^\iiii\\
    \chi_2^\i\\
    \chi_2^\ii\\
        \chi_2^\iii\\
    \chi_2^\iiii\\
   \vdots 
  \end{array}
  \right),\;
  V_1 =\left(\begin{array}{ccccccc} 
  0 & 0& 0 & 0 & 0 & 0 &\dots  \\ 
  0 & 0 & 0 & 0 & 0 & 0 &\dots \\
  \sqrt{2 \nu_1}   & 0& 0 & 0 & 0 & 0 &\dots \\ 
0&  \sqrt{2 \nu_1}   & 0& 0 & 0 & 0 &\dots \\ 
0& 0 & 2\sqrt{2 \nu_1}   & 0& 0  & 0 &\dots \\ 
  0 & 0& 0 & 0 & 0  & 0 &\dots \\ 
  0 & 0 & 0 & 0 & 0 & 0 &\dots \\
  0 & 0& 0 & 0 & 0  & 0 &\dots \\ 
  0 & 0 & 0 & 0 & 0 & 0 &\dots \\
\vdots &\vdots &\vdots &\vdots &\vdots &\vdots &\ddots 
 \end{array}\right) \otimes I_S.  \; 
  \end{equation}
  Similarly, 
  \[ V_2 =\left(\begin{array}{ccccccccc} 
 0 & 0& 0 & 0& 0 & 0 & 0  & 0 &\dots \\ 
 0 & 0& 0 & 0 & 0 & 0 & 0  & 0 &\dots\\
0 & 0&    0 & 0& 0 & 0 & 0  & 0 &\dots \\ 
0 & 0&  0 & 0 & 0 & 0 & 0 & 0 &\dots \\
0 & 0&  0 & 0& 0 & 0 & 0 & 0 &\dots  \\ 
0 & 0&  0 & 0 & 0 & 0 & 0 & 0 &\dots \\
  \sqrt{2 \nu_2}   &0 & 0& 0& 0 & 0 & 0 & 0 &\dots \\ 
0 & 0&0& 0& 0 &  \sqrt{2 \nu_2}   & 0 & 0 &\dots \\ 
0 & 0&0& 0 & 0& 0 & 2\sqrt{2 \nu_2}   &  0 &\dots  \\ 
\vdots &\vdots& \vdots &\vdots &\vdots &\vdots &\vdots & \vdots &\ddots
 \end{array}\right) \otimes I_S.\]
Clearly, $V_k$'s are sparse and low rank. 
Consider the standard basis in $\mathbb{R}^{4K+1}$, here abbreviated simply into $\ket{0}, \ket{1}, \ldots, \ket{4K}$. A careful inspection reveals the following closed-form expressions,
\begin{align}
  \label{eq:Vk}
  V_k = \sqrt{2 \nu_k}  \Big( \ketbra{ {4k-2}}{0} +  \ketbra{ {4k-1}}{ {4k-3}} +  2\ketbra{{4k}}{ {4k-2}}\Big) \otimes I_S, 
\end{align}
for $k=1, 2, \ldots, K.$
Here $\nu_k=\text{Im} d_k,$ and they are nonnegative.

The operator $H$ in \cref{eq: Psi-sde} can also be written in a block form,
{\scriptsize
\begin{equation}\label{eq; Hnew}
\arraycolsep=1.1pt\def\arraystretch{1.3}
\left[\begin{array}{cccccccccc}
 H_s & -i \lambda  T_1^\dagger  & -i \lambda  T_1  &  0 & 0 &  -i \lambda T_2^\dagger &  -i \lambda  T_2 & 0& 0&   \cdots  \\
 i \lambda T_1 & H_s \!+\! d_1^* & 0 & 0 & 0 & 0& 0& 0& 0& \cdots \\
  0    &0  & H_s \!-\! d_1 &  -i \lambda T_1^\dagger &  -i \lambda  T_1 & 0& 0& 0&
   0 &\cdots \\
0 & 0 & i \lambda  T_1  &  H_s \! -\!2 i\nu_1  &  0 & 0& 0& 0& 0& \cdots \\
0 & 0 &i \lambda T_1^\dagger &0  &  H_s\! -\! 2d_1& 0& 0& 0& 0 & \cdots \\
i \lambda T_2 & 0& 0& 0& 0& H_s \!+\! d_2^* & 0 & 0 & 0  & \cdots \\
0& 0& 0&
   0 & 0 & 0  & H_s\! -\! d_2 &  -i \lambda T_2^\dagger &  -i \lambda  T_2  &\cdots \\
0& 0& 0&
   0 & 0 & 0 & i \lambda  T_2  &  H_s  \!-\!2 i\nu_2 & 0 & \cdots \\
0& 0& 0&
   0 & 0 & 0 & i \lambda  T_2  & 0 &  H_s  \!-\! 2d_2 & \cdots \\
\vdots & \vdots & \vdots  & \vdots & \vdots & \vdots & \vdots  & \vdots & \vdots & \ddots
\end{array}\right].
\end{equation}
}

The extended stochastic dynamics \cref{eq: ext-iv} introduced an auxiliary space that mimics the effect of the quantum bath.  
Recall 
$I_S \in \mathbb{C}^{2^n\times 2^n}$
is the identity operator. Similarly, we let 
$I_A \in \mathbb{C}^{(4K+1)\times(4K+1)}$ be the identify operator in an auxiliary space labelled by $A$. Then the Hamiltonian in \cref{eq; Hnew} can be expressed in terms of tensor products,
\begin{equation}
  \label{eq:H-block}
    H = I_A \otimes H_S + H_A \otimes I_S + i\lambda  \sum_{k=1}^K D_k \otimes T_k + i\lambda \sum_{k=1}^K E_k \otimes T_k^\dagger.
\end{equation}
Here 
$H_A \in \mathbb{C}^{(4K+1)\times(4K+1)}$ is a diagonal matrix: 
\[ H_A = \text{diag}\left\{0, d_1^*, -d_1, d_1^*-d_1, -2d_1, d_2^*, -d_2, d_2^*-d_2, -2d_2, \ldots   \right\}.\] 
In addition, the matrices $D_k, E_k \in \mathbb{C}^{(4K+1)\times (4K+1)}$ are given by
\begin{equation}\label{eq: DkEk}
\begin{aligned}
 D_k=& \ketbra{{4k-3}}{0} - \ketbra{0}{{4k-2}} 
 + \ketbra{{4k-1}}{{4k-2}} - \ketbra{{4k-2}}{{4k-1}},\\
 E_k=& - \ketbra{0}{{4k-3}} 
 + \ketbra{{4k}}{{4k-2}} - \ketbra{{4k-2}}{{4k}}.\\
\end{aligned}
\end{equation}
Assuming that the dimension of the original wave function is $n$, \ie, $\ket{\psi} \in \mathbb{C}^n$, the dimension of $\Psi$ is $(4K+1)n.$ Hence, the dimension of $\Gamma$ is  $[(4K+1)n] \times [(4K+1)n]$.

{Within the framework of quantum unravelling \cite{breuer2002theory}, }
 the density matrix associated with the combined wave functions $\Psi$ in \cref{eq: Psi-V_k} is defined as the entry-wise expectation,
\begin{equation}
    \Gamma_{\alpha,\beta}(t) = \mathbb{E}[\Psi_\alpha(t) \Psi_\beta(t)^*].
\end{equation}
{An application of the It\^o's formula \cite{Pav_book:14} yields} the following close-form quantum master equation,
\begin{equation}\label{eq: qme}
 \partial_t \Gamma = \mathcal{K}(\Gamma):= -i (H \Gamma - \Gamma H^\dagger) + \sum_{k=1}^{K}  V_k \Gamma V_k^\dagger.
\end{equation}
The noise has been averaged out by the expectation. In fact, it has been shown in \cite{kloeden2013numerical} that for any linear SDEs, the first and second moments satisfy close-form equations. {Intuitively, the Lindblad description breaks down when the dynamics of $\rho_S(t)$ exhibits strong memory effect, \ie, it depends on the past history of $\rho_S(t)$, which can not be captured by \cref{eq: lindblad}. On the other hand, the GQME \cref{eq: qme} embodies the history dependence by embedding
the dynamics of $\rho_S(t)$ in a larger system. Conceptually, if one solves other components of $\Gamma(t)$ and substitutes those solutions to the first block, one would get the memory dependence on $\rho_S(t)$.
} 

We can deduce the initial condition of $\Gamma$ from the definitions of the auxiliary wave functions. Since $\zeta_k$ is Gaussian with mean zero and variance 1, we have \[
\mathbb{E}[\braket{\chi_k^\i}{\chi_k^\i}]=0, \;
\mathbb{E}[\braket{\chi_k^\ii}{\chi_k^\ii}]=\rho_s(0), \;  \mathbb{E}[\braket{\chi_k^\iii}{\chi_k^\iii}]=0, \;
\mathbb{E}[\braket{\chi_k^\iiii}{\chi_k^\iiii}]=3\rho_s(0).\]
{All the cross-correlations are zero.}
Therefore, the initial density matrix $\Gamma$ is a block-diagonal matrix,
\begin{equation}\label{eq: gamma(0)}
    \Gamma(0)= \text{diag}\big\{\rho_S(0), 0, \rho_S(0),0, 3\rho_S(0), 0, \rho_S(0), 0, 3\rho_S(0),  \cdots   \big\}.
\end{equation}
We notice that the trace of $\Gamma(0)$ is $4K+1$.

\subsection{Properties of the GQME}
We first provide some basic estimates on the extended density matrix $\Gamma$. We begin by writing the Hamiltonian in \cref{eq; Hnew} as,
\begin{equation}\label{eq: H0H1}
    H= H_0 + \lambda H_1.
\end{equation}
Here we made the observation that $H_0$ is block diagonal. On the other hand, $H_1$ only contains nonzero blocks in the first row and the first column. To conveniently  refer to the block entries of  the density matrix $\Gamma$, we write it in the following block form,
\begin{equation}\label{eq gamma-block}
    \Gamma =\left[
    \begin{array}{cccc}
        \Gamma_{0,0} & \Gamma_{0,1} &\Gamma_{0,2} & \cdots  \\
          \Gamma_{1,0} & \Gamma_{1,1} &\Gamma_{1,2} & \cdots\\
          \vdots & \vdots & \vdots & \ddots \\
           \Gamma_{4K,0} & \Gamma_{4K,1} &\Gamma_{4K,2} & \cdots
    \end{array}\right].
\end{equation}
In particular, $\rho_S$ is embedded into $\Gamma$ as the first block: 
 $\rho_S=\Gamma_{0,0}.$

For such block matrices, we will use the following induced norm,
\begin{equation}\label{eq: inf-norm}
    \norm{ \Gamma }_\infty := \max_{0\le j \le 4K} \sum_{0\le i \le 4K} \norm{\Gamma_{i,j}}.
\end{equation}
Namely, for each entry, we use the spectral norm. But among the blocks, we use the $\infty$-norm.  One can verify that this norm still has the submultiplicative property.  We choose this norm merely because we will estimate the bound of each block, and the formula in \cref{eq: inf-norm} can easily connect such estimates to the bound of the entire matrix. In principle, since matrix norms are continuous with respect to the entries, one can also use other norms among the blocks. 
 
\begin{lemma}
   The exponential operator $U(t):=\exp \big(-it H_0 \big)$ is bounded for all time:
   \[ \| U(t) \| \le 1,  \forall t \in \mathbb{R}.\]
\end{lemma}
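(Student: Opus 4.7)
The plan is to exploit the commuting tensor-product structure of $H_0$ together with the diagonal form of $H_A$, reducing the operator-norm bound to a scalar eigenvalue estimate. Since $H_0 = I_A\otimes H_S + H_A\otimes I_S$ is a sum of operators supported on disjoint tensor factors, they commute, and hence
\begin{equation*}
U(t) \;=\; \bigl(I_A\otimes e^{-itH_S}\bigr)\bigl(e^{-itH_A}\otimes I_S\bigr).
\end{equation*}
The first factor is unitary because $H_S$ is Hermitian, contributing spectral norm exactly $1$, so the task reduces to bounding $\|e^{-itH_A}\|$.

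Since $H_A$ is diagonal, $e^{-itH_A}$ is diagonal with entries $e^{-it\mu}$ for $\mu\in\{0,\,d_k^{*},\,-d_k,\,d_k^{*}-d_k,\,-2d_k\}_{k=1}^{K}$, so $\|e^{-itH_A}\|=\max_\mu e^{t\operatorname{Im}(\mu)}$. Writing $d_k=\operatorname{Re}(d_k)+i\nu_k$ with $\nu_k:=\operatorname{Im}(d_k)\ge 0$, each candidate $\operatorname{Im}(\mu)$ lies in $\{0,-\nu_k,-2\nu_k\}$, all non-positive. A more structural packaging, which I would prefer because it suppresses case-by-case arithmetic, is to split $H_0=A-iB$ where $A=I_A\otimes H_S+\operatorname{Re}(H_A)\otimes I_S$ is Hermitian and $B=-\operatorname{Im}(H_A)\otimes I_S$ is Hermitian positive semidefinite (being a nonnegative diagonal tensored with the identity). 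Since $A$ and $B$ commute (both are built from tensor factors that commute, and $B$ is diagonal), the exponential factors as
\begin{equation*}
U(t)\;=\;e^{-itA}\,e^{-tB},
\end{equation*}
in which the first factor is always unitary and the second has norm $\max_j e^{-t\lambda_j(B)}$ with $\lambda_j(B)\ge 0$.

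For $t\ge 0$ this immediately gives $\|e^{-tB}\|\le 1$ and hence $\|U(t)\|\le 1$, closing the argument on the forward half-line that is physically relevant for the initial-value problem generated by \cref{eq: qme}. The main obstacle to the statement as literally worded (that is, \emph{for all} $t\in\mathbb{R}$) is precisely the sign-indefiniteness of $-t\lambda_j(B)$: whenever some $\nu_k>0$, $B\neq 0$ and $\|e^{-tB}\|>1$ for $t<0$, so the two-sided bound can hold only in the degenerate case $\nu_k\equiv 0$ in which $H_0$ is Hermitian and $U(t)$ is unitary. My proposed reading is therefore that the lemma is really a statement about $t\ge 0$ (matching the Cauchy-problem interpretation used throughout the paper), and the plan above establishes this cleanly by the commuting $A-iB$ decomposition, using only that each diagonal entry of $H_A$ has non-positive imaginary part.
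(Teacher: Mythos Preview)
Your approach is essentially identical to the paper's: both rest on $H_0$ being block diagonal (equivalently, the commuting tensor-product structure), $H_S$ being Hermitian, and the diagonal entries of $H_A$ having non-positive imaginary parts---the paper's one-line justification invokes exactly these three facts without unpacking them. Your additional observation that the bound fails for $t<0$ whenever some $\nu_k>0$ is correct and sharpens the paper's statement, which should indeed be read as $t\ge 0$ (consistent with the forward-in-time use of $U(t)$ elsewhere in the analysis).
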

Here we used the spectral norm. This can be seen from the fact that $H_0$ is block diagonal,  $H_s$ is Hermitian, and $d_k$ has non-negative imaginary parts.

By separating the $\mathcal{O}(\lambda)$ term in \cref{eq: H0H1}, we can write \cref{eq: qme} in a perturbative form,
\begin{equation}\label{eq: pert}
    \partial_t \Gamma = -i (H_0 \Gamma - \Gamma H_0^\dagger) + \sum_{k=1}^{K}  V_k \Gamma V_k^\dagger - i \lambda (H_1 \Gamma - \Gamma H_1^\dagger).
\end{equation}
In particular, we let $\Gamma^\sz$ be the solution of the ``unperturbed'' equation,
\begin{equation}\label{eq: upert}
    \partial_t \Gamma = -i (H_0 \Gamma - \Gamma H_0^\dagger) + \sum_{k=1}^{K}  V_k \Gamma V_k^\dagger.
\end{equation}
To further simplify notations, let $\CL_0$ be the corresponding operator on the right hand side of \cref{eq: upert}. Then one can express the solution concisely as, 
\begin{equation}\label{eq: CL0}
\Gamma^\sz(t) = \exp (t\CL_0) \Gamma^\sz(0).
\end{equation}

The following Lemma provides a bound for the perturbation term in \cref{eq: pert}.
\begin{lemma}\label{lmm: p-term}
  Let $\Lambda = \max_{1\leq k \leq K} \|T_k\|_1$, and let 
  \begin{equation}\label{eq: Xi}
      \Xi=H_1\Gamma - \Gamma H_1^\dagger,
  \end{equation}
 be the commutator for  any Hermitian matrix $\Gamma$.
  Then the trace of the first diagonal block of $\Xi$ is bounded by,
  \begin{equation}\label{eq: xi00}
      \abs{ \tr \big(\Xi_{0,0} \big) } \leq 4\Lambda  \sum_{k>0}  \| \Gamma_{0,k} \|.
  \end{equation}
For the remaining diagonal blocks, it holds that,
\begin{equation}
    \sum_{k>0}  \abs{ \tr \big(\Xi_{k,k} \big) } \leq 4 \Lambda \sum_{0<j<k\leq 4K}
     \left\| \Gamma_{j,k} \right\|.
\end{equation}
\end{lemma}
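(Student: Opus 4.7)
The plan is to prove both inequalities by direct blockwise computation of $\Xi = H_1\Gamma - \Gamma H_1^\dagger$ and then invoking the Hölder-type bound $|\tr(AB)| \leq \|A\|_1\,\|B\|$.

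The first step is to list the nonzero block entries of $H_1$ by reading off from \cref{eq; Hnew}: every such entry is $\pm i T_k$ or $\pm i T_k^\dagger$ (hence of trace-norm at most $\Lambda$), and they occur only at a short, explicit set of block positions --- namely $(0, 4k-3)$, $(0, 4k-2)$ and their transposes along the top row and left column, together with the intra-cluster positions $(4k-2, 4k-1)$, $(4k-1, 4k-2)$, $(4k-2, 4k)$, $(4k, 4k-2)$ for each $k \in [K]$.

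For $\tr(\Xi_{0,0})$, only the row-$0$ entries of $H_1$ contribute, so $\Xi_{0,0}$ reduces to a finite sum over $k$ of terms like $T_k\Gamma_{4k-2,0}$ and $T_k^\dagger\Gamma_{4k-3,0}$ together with their $\Gamma H_1^\dagger$ partners. Using Hermiticity of $\Gamma$ (so $\Gamma_{c,0} = \Gamma_{0,c}^\dagger$) collapses each pair into $2i\,\Re$ of a single scalar trace, which by the Hölder inequality is bounded by $\Lambda\,\|\Gamma_{0,\cdot}\|$. Summing over $k$ and noting that $\{4k-3, 4k-2 : k\in[K]\}$ is a subset of $\{1,\dots,4K\}$ yields the first inequality, with the constant $4$ absorbing the slack of merging the two index families into one sum $\sum_{k>0}$. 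For the remaining diagonal blocks $\Xi_{a,a}$ with $a>0$, I would split by the residue class of $a$ modulo $4$: the row-$a$ sparsity of $H_1$ means $\Xi_{a,a}$ is a short sum whose trace is bounded by a small multiple of $\Lambda$ times the relevant $\|\Gamma_{j,k}\|$ with $j<k$, and summing over all $a>0$, every off-diagonal block norm appears only a bounded number of times, producing the second inequality with the same constant $4\Lambda$.

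The main obstacle is bookkeeping --- keeping careful track, by residue class, of which off-diagonal $\Gamma_{j,k}$ appears in which $\Xi_{a,a}$ and aggregating the constants to arrive at the uniform factor $4\Lambda$ on both sides. A related subtlety is that $H_1$ is not confined only to the first row and column as the prose preceding \cref{eq: H0H1} might suggest; the intra-cluster entries $(4k-2, 4k\pm 1)$ and $(4k-2, 4k)$ must also be tracked, since they contribute to $\Xi_{4k-2,4k-2}$, $\Xi_{4k-1,4k-1}$, and $\Xi_{4k,4k}$. Hermiticity of $\Gamma$ is used throughout to rewrite every $\Gamma_{c,a}$ as $\Gamma_{a,c}^\dagger$, so that the final bound is phrased purely in terms of upper-triangular off-diagonal block norms, matching the form stated in the lemma.
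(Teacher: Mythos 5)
Your approach---reading off the block sparsity pattern of $H_1$ from \cref{eq; Hnew}, computing the diagonal blocks of $\Xi$ directly, and bounding each trace by $|\tr(AB)|\le\|A\|_1\|B\|$---is exactly the paper's (its entire proof is the remark following the lemma, which records $\Xi_{0,0}$ explicitly and invokes von Neumann's trace inequality; your H\"older bound is the same tool). You are also right on the substantive structural point: $H_1$ is not supported only on the first block row and column, despite the prose after \cref{eq: H0H1}; the intra-cluster entries at $(4k-2,4k-1)$, $(4k-2,4k)$ and their transposes are precisely what make blocks $\Gamma_{j,k}$ with $0<j<k$ appear in the second estimate. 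One minor slip: the left block column of $H_1$ is nonzero only at rows $4k-3$ (the $(4k-2,0)$ blocks of \cref{eq; Hnew} are zero), so ``their transposes along the top row and left column'' overcounts; this is harmless for the bounds. Your derivation of the first inequality is fine and in fact yields the constant $2\Lambda$, comfortably inside the stated $4\Lambda$.

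The step that does not go through as written is the aggregation for the second inequality. Row $4k-3$ of $H_1$ has its only nonzero block in column $0$, so
\begin{align}
\Xi_{4k-3,4k-3} \;=\; i\left(T_k\,\Gamma_{0,4k-3} + \Gamma_{4k-3,0}\,T_k^{\dagger}\right),
\end{align}
and its trace is controlled only by $\|\Gamma_{0,4k-3}\|$---a block with first index $0$, which the stated range $0<j<k\le 4K$ excludes. Hence your claim that the bookkeeping lands ``purely in terms of upper-triangular off-diagonal block norms, matching the form stated in the lemma'' cannot be completed: for a Hermitian $\Gamma$ whose only nonzero off-diagonal blocks are $\Gamma_{0,1}$ and $\Gamma_{1,0}=\Gamma_{0,1}^{\dagger}$, the stated right-hand side is $0$ while $\tr(\Xi_{1,1})=2i\,\mathrm{Re}\tr(T_1\Gamma_{0,1})$ is generically nonzero. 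The inequality is salvaged by letting the sum run over $0\le j<k\le 4K$ (almost certainly the intended statement; the downstream use in \cref{eq: tr-bound} only needs the diagonal traces controlled by off-diagonal block norms, all of which are $\CO(\lambda)$). You should either prove that corrected version or flag the discrepancy explicitly rather than asserting that the stated form follows.
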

These estimates can be obtained by direct calculations. For instance,  the first diagonal 
is given by,
\begin{equation}\label{eq: xi00'}
   \Xi_{0,0} = \sum_{k=1}^K \Big( \Gamma_{0,4k-3} T_k - T_k^\dagger  \Gamma_{4k-3,0} + \Gamma_{0,4k-2} T_k^\dagger - T_k  \Gamma_{4k-2,0} \Big).   
\end{equation}
Thus the bound \cref{eq: xi00} follows from the triangle inequalities, together with the von Neumann's trace inequality.
The important observation is that the trace of the perturbation term in \cref{eq: pert} is only controlled by the norms of the off-diagonal blocks of $\Gamma.$

We now show that the ``unperturbed part'' in \cref{eq: pert}, \ie, the solution of  the GQME in \cref{eq: qme} when $\lambda=0,$ has bounded solutions. The $\lambda>0$ case can then be handled using a perturbation technique.  
\begin{lemma}\label{lmm: upert}
   Assume that the imaginary parts of $d_k$ are non-negative, \ie, $\nu_k \geq  0$ for all $k$. Assume $\lambda=0$. The solution of the GQME in \cref{eq: qme} is denoted by $\Gamma^\sz(t)$ (also see \cref{eq: upert}).   Then the following statements hold.
   \begin{enumerate}
      \item The first diagonal block      is given by, \[ \rho(t)= U_S(t) \rho_S(0)U_S(t)^\dagger, \quad U_S(t):=\exp(-itH_s).\]

       \item If $\Gamma^\sz(0)$ is block diagonal, then $\Gamma^\sz(t)$ remains block diagonal.
       
       \item  If $\Gamma^\sz(0)$ is given by \cref{eq: gamma(0)}, then,  the trace of $\Gamma^\sz(t)$ is 
         \begin{align}
           \label{eq:tr-Gamma}
           \tr\big(\Gamma^\sz(t) \big) = 2K + 1 + 2 \sum_{k} e^{-4\nu_k t}.
         \end{align}
       
         \item  The solution  $\Gamma^\sz(t)$ of \cref{eq: upert} is bounded for general initial conditions. Namely, the exists a constant $c$, independent of $t$, such that, \[ \|\Gamma^\sz(t)\|_\infty \leq c\| \Gamma^\sz(0)\|_\infty, \quad \forall t \in \mathbb{R}_+.\]
   \end{enumerate}
\end{lemma}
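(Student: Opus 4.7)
The plan is to exploit the block-diagonal structure of $H_0$ and the sparsity of the $V_k$'s. Decompose $\CL_0 = \mathcal{C} + \mathcal{D}$, where $\mathcal{C}(\Gamma) = -i(H_0\Gamma - \Gamma H_0^\dagger)$ and $\mathcal{D}(\Gamma) = \sum_k V_k \Gamma V_k^\dagger$. Because $H_0$ is block diagonal with $[H_0]_{ii} = H_S + c_i I_S$ and the scalars $(c_0, c_{4k-3}, c_{4k-2}, c_{4k-1}, c_{4k}) = (0, d_k^*, -d_k, d_k^* - d_k, -2d_k)$ all have $\text{Im}(c_i) \le 0$, the propagator $\mathcal{U}(t) := e^{t\mathcal{C}}$ acts block-wise as $(\mathcal{U}(t)\Gamma)_{ij} = e^{-it(c_i - c_j^*)}U_S(t)\Gamma_{ij}U_S(t)^\dagger$ with $|e^{-it(c_i - c_j^*)}| = e^{t(\text{Im}(c_i)+\text{Im}(c_j))} \le 1$, so $\mathcal{U}$ is contractive per block. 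Moreover, the three rank-one summands of $V_k = \sqrt{2\nu_k}(\ket{4k-2}\bra{0} + \ket{4k-1}\bra{4k-3} + 2\ket{4k}\bra{4k-2})\otimes I_S$ have pairwise disjoint row and column supports, from which I will use the combinatorial facts $V_j V_k = 0$ whenever $j \neq k$, and $V_k^3 = 0$.

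Parts 1 and 2 then follow directly. $\mathcal{C}$ preserves block-diagonality by its block-wise form, and $\mathcal{D}$ preserves it because $V_k\Gamma V_k^\dagger$ with block-diagonal $\Gamma$ only produces diagonal outputs (cross source-pairs are killed by orthogonality of distinct kets). For part 1, no $V_k$ has row-$0$ support, so $\mathcal{D}$ contributes nothing to the $(0,0)$-block, while $[H_0]_{00} = H_S$ is Hermitian, yielding $\partial_t \Gamma_{00}^\sz = -i[H_S,\Gamma_{00}^\sz]$ and hence $\Gamma_{00}^\sz(t) = U_S(t)\rho_S(0)U_S(t)^\dagger$. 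For part 3, block-diagonality reduces the dynamics to a closed cascade in the scalar traces $\gamma_i(t) := \tr\Gamma_{ii}^\sz(t)$; using $\tr(\mathcal{C}(\Gamma))_{ii} = 2\text{Im}(c_i)\gamma_i$ and the explicit source contributions $(V_k\Gamma V_k^\dagger)_{4k-2,4k-2} = 2\nu_k\Gamma_{0,0}$, $(V_k\Gamma V_k^\dagger)_{4k-1,4k-1} = 2\nu_k\Gamma_{4k-3,4k-3}$, $(V_k\Gamma V_k^\dagger)_{4k,4k} = 8\nu_k\Gamma_{4k-2,4k-2}$, one obtains a triangular linear system that I integrate in cascade from the initial data in \cref{eq: gamma(0)}, obtaining $\gamma_0 \equiv 1$, $\gamma_{4k-3} \equiv 0$, $\gamma_{4k-2} \equiv 1$, $\gamma_{4k-1} \equiv 0$, and $\gamma_{4k}(t) = 2 + e^{-4\nu_k t}$, which summed over the index set yields the stated trace identity.

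Part 4 is the main technical step, and I will use a Dyson expansion of $e^{t\CL_0}$ around $\mathcal{U}$. A support analysis leveraging the nilpotency facts above and the fact that $\mathcal{U}$ preserves block support shows that $\mathcal{D}\,\mathcal{U}(s_2)\,\mathcal{D}\,\mathcal{U}(s_1)\,\mathcal{D} \equiv 0$: the image of one $\mathcal{D}$-step lies in $\bigcup_k\{4k-2,4k-1,4k\}^2$, of which only $(4k-2, 4k-2)$ is a double source for a further $\mathcal{D}$-step (forced by $V_jV_k=0$ for $j\neq k$), and the resulting image $(4k,4k)$ is not in the source of any $V_{k'}$. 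Hence the Dyson series collapses to
\begin{equation*}
  \Gamma^\sz(t) = \mathcal{U}(t)\Gamma^\sz(0) + \int_0^t \mathcal{U}(t-s)\mathcal{D}\,\mathcal{U}(s)\Gamma^\sz(0)\,ds + \iint_{0\le s_1\le s_2\le t} \mathcal{U}(t-s_2)\mathcal{D}\,\mathcal{U}(s_2-s_1)\mathcal{D}\,\mathcal{U}(s_1)\Gamma^\sz(0)\,ds_1\,ds_2.
\end{equation*}
The zeroth term is contractive in the block norm. For the higher-order terms, I verify by a finite case enumeration that each $\mathcal{D}_k$-insertion sends a source block $(s,s')$ to an image block $(a,a')$ with damping increased by exactly $2\nu_k$ (each side gains $\nu_k$). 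Combined with $\|V_k\|^2 \le 8\nu_k$, each time integration contributes at most $8\nu_k\int_0^t e^{-2\nu_k(t-s)}\,ds \le 4$, uniformly in $t$; the degenerate case $\nu_k=0$ is trivial since $V_k$ then vanishes. Compounding over at most two insertions and summing over $k=1,\dots,K$ produces a $t$-independent constant $c = c(K)$ with $\|\Gamma^\sz(t)\|_\infty \le c\,\|\Gamma^\sz(0)\|_\infty$.

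The main obstacle will be making the damping-improvement bookkeeping robust across the full enumeration of source-image pairs (nine per $V_k$), and handling the delicate fact that the $(0,0)$-block has zero damping yet remains controlled because it is not in the image of any $\mathcal{D}$-step and thus only receives the contractive contribution from $\mathcal{U}$.
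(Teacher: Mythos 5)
Your overall strategy is sound and, for parts 1, 2 and 4, it is essentially a systematized version of what the paper does: the paper also splits off the $V_k$-term, solves the block-diagonal coherent part explicitly (each block picking up a damping factor $e^{t(\mathrm{Im}\,c_i+\mathrm{Im}\,c_j)}\le 1$), and integrates the source terms by variation of constants. The difference is that the paper reduces to $K=1$ and enumerates the $5\times5$ blocks one by one, whereas you organize the same information into a Dyson series for $e^{t\CL_0}$ around $e^{t\mathcal{C}}$ and prove termination after two $\mathcal{D}$-insertions via the support/nilpotency facts ($V_jV_k=0$ for $j\ne k$, $V_k^3=0$, and the image chain $\{0,4k-3,4k-2\}\to\{4k-2,4k-1,4k\}\to\{4k\}$ dying out). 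Your route is more robust: it handles arbitrary $K$ and arbitrary (non-block-diagonal) initial data uniformly, including the cross-sector off-diagonal blocks that the paper's ``it is enough to consider $K=1$'' reduction quietly ignores, and the ``each insertion gains $2\nu_k$ of damping, cancelled against $\|V_k\|^2\le 8\nu_k$'' bookkeeping is exactly the reason the paper's integrals \cref{eq: g22,eq: g33} stay bounded. I verified the key combinatorial claims ($\mathcal{D}\,\mathcal{U}\,\mathcal{D}\,\mathcal{U}\,\mathcal{D}=0$, the damping gain of $\nu_k$ per side for each of the three source--image pairs) and they are correct.

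There is, however, a concrete problem in your part 3. Your cascade values $\gamma_0\equiv 1$, $\gamma_{4k-3}\equiv 0$, $\gamma_{4k-2}\equiv 1$, $\gamma_{4k-1}\equiv 0$, $\gamma_{4k}(t)=2+e^{-4\nu_k t}$ are each consistent with the generator and with the initial data \cref{eq: gamma(0)} (the source $8\nu_k\gamma_{4k-2}$ with damping $4\nu_k$ gives equilibrium $2$ and initial value $3$). But these values sum to
\begin{equation*}
  1+\sum_{k=1}^{K}\bigl(0+1+0+2+e^{-4\nu_k t}\bigr)=3K+1+\sum_{k}e^{-4\nu_k t},
\end{equation*}
which is \emph{not} the stated identity $2K+1+2\sum_k e^{-4\nu_k t}$; the two expressions agree only at $t=0$ (both equal $4K+1$). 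So you cannot simply assert that your triangular system ``yields the stated trace identity'' --- either one of your source coefficients is off, or the formula in the lemma (and in \cref{eq: tr-bound}) is itself in error, and you need to say which. Since both expressions are between $2K+1$ and $4K+1$ for all $t\ge 0$, nothing downstream (the $\Omega(1/K)$ post-selection probability) is affected, but as written your part 3 proves a different formula from the one claimed.
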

We included the proof in \cref{p: upert}.

\medskip

The GQME in \cref{eq: qme} is considered as a route to obtain an approximation to the density matrix $\rho_S(t)$ from \cref{eq: lvn}. Assuming that the representation of the bath correlation function in \cref{eq:correlation} is exact, we can show that the error associated with the approximation of $\rho_S(t)$ by the GQME in \cref{eq: qme} is $\co{\lambda^3}$. 
\begin{theorem}\label{eq: thm-rhos}
Let $\rho_S(t)$ be the density matrix from the full quantum model in \cref{eq: lvn} with bath correlation given by \cref{eq:correlation}. In addition, let $\widehat{\rho}_S(t)$ be the first diagonal block of the density matrix $\Gamma(t)$ from the GQME in \cref{eq: qme}. Then, 
\begin{equation}
     \rho_S(t) = \widehat{\rho}_S(t) + \co{\lambda^3}.
\end{equation}
\end{theorem}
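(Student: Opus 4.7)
The plan is to compare, at the level of the underlying stochastic Schr\"odinger equations, the \emph{exact} trajectory $\psi(t)$ from the NMSSE \cref{eq: sse2} with the \emph{truncated} trajectory $\widehat{\psi}(t)$ arising as the first block of the solution to the ESE \cref{eq: ext-iv}. Under the correlation assumption \cref{eq: approx}, the NMSSE reproduces the full dynamics exactly, so $\rho_S(t)=\mathbb{E}[\ketbra{\psi(t)}{\psi(t)}]$. Conversely, applying It\^o's formula \cref{eq: itol} to the combined wavefunction $\Psi$ governed by \cref{eq: ext-iv} yields the GQME \cref{eq: qme}, so that $\widehat{\rho}_S(t)=\Gamma_{0,0}(t)=\mathbb{E}[\ketbra{\widehat{\psi}(t)}{\widehat{\psi}(t)}]$. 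Hence the theorem reduces to the wavefunction estimate $\bigl(\mathbb{E}\bigl[\|\psi(t)-\widehat{\psi}(t)\|^2\bigr]\bigr)^{1/2}=\mathcal{O}(\lambda^3)$.

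To obtain this estimate, I would lift the exact NMSSE into the same enlarged space used for the ESE by defining $\chi_k^{\i},\chi_k^{\ii}$ as in \cref{eq: ch1,eq: chi2} together with the ``ideal'' variables $\chi_k^{\iii}:=i\zeta_k\chi_k^{\i}$ and $\chi_k^{\iiii}:=i\zeta_k\chi_k^{\ii}$ from \cref{eq: chi34}. A direct It\^o computation shows that $\psi,\chi^{\i},\chi^{\ii}$ satisfy the first three equations of \cref{eq: ext-iv} verbatim, while the evolutions of $\chi^{\iii},\chi^{\iiii}$ differ from the last two equations of \cref{eq: ext-iv} only by the $\mathcal{O}(\lambda)$ remainders explicitly discarded in the paragraph preceding \cref{eq: ext-iv}. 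Subtracting the two systems and letting $\delta\psi=\psi-\widehat{\psi}$ and $\delta\chi_k^{(j)}=\chi_k^{(j)}-\widehat{\chi}_k^{(j)}$, one obtains a linear inhomogeneous SDE for the discrepancies, forced solely by those $\mathcal{O}(\lambda)$ remainders. A Duhamel argument with the contractive propagator $\|\exp(-itH_0)\|\leq 1$ (the bounded-propagator lemma stated just before \cref{lmm: p-term}) then produces the cascade $\delta\chi_k^{\iii},\delta\chi_k^{\iiii}=\mathcal{O}(\lambda)$, hence $\delta\chi_k^{\i},\delta\chi_k^{\ii}=\mathcal{O}(\lambda^2)$, and finally $\delta\psi=\mathcal{O}(\lambda^3)$; each step picks up an extra factor of $\lambda$ from the $\lambda T_k$ / $\lambda T_k^{\dagger}$ couplings between neighbouring layers of \cref{eq: ext-iv}.

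The theorem is then closed via the identity $\ketbra{\psi}{\psi}-\ketbra{\widehat{\psi}}{\widehat{\psi}}=\ket{\delta\psi}\bra{\psi}+\ket{\widehat{\psi}}\bra{\delta\psi}$ and the Cauchy--Schwarz inequality, combined with the uniform second-moment bounds $\mathbb{E}[\|\psi\|^2],\mathbb{E}[\|\widehat{\psi}\|^2]=\mathcal{O}(1)$. These come from the trace estimate of \cref{lmm: upert} for the unperturbed $\Gamma^{\sz}$, extended to the $\lambda$-perturbed dynamics through a Duhamel expansion of \cref{eq: pert} together with the commutator bound of \cref{lmm: p-term}.

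The main obstacle is making the error cascade rigorous despite the multiplicative stochastic drivers $\zeta_k$: the $\mathcal{O}(\lambda)$ remainders contain factors of $\zeta_k(t)$ and thus cannot be bounded pathwise. One must instead work with second moments, combining It\^o isometry with the stationarity $\mathbb{E}[\zeta_k^{*}(t)\zeta_k(t')]=\theta_k^2\exp\bigl(-id_k^{*}(t-t')\bigr)$ and the dissipation $\nu_k\geq 0$ to keep the relevant moments uniformly bounded on the time scale $t=\mathcal{O}(\lambda^{-1})$ indicated by \cref{lmm: upert}. A fully deterministic alternative is to bypass the SDE comparison altogether and expand $\Gamma(t)$ directly in $\lambda$ via \cref{eq: pert}, matching the $\lambda^0$ and $\lambda^2$ contributions of $\Gamma_{0,0}(t)$ term-by-term with the Dyson series for $\rho_S(t)$ coming from \cref{eq: lvn}; the sparse block structure of $H_1$ (nonzero only in the first row and first column) makes this matching combinatorially tractable.
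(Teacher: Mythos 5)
Your closing remark---expand $\Gamma(t)$ in powers of $\lambda$ via \cref{eq: pert} and match the $\lambda^0$ and $\lambda^2$ contributions of $\Gamma_{0,0}(t)$ term-by-term against the Dyson series for $\rho_S(t)$ from \cref{eq: lvn}---is essentially the paper's actual proof: it expands $\rho=\rho^{(0)}+\lambda\rho^{(1)}+\lambda^2\rho^{(2)}+\mathcal{O}(\lambda^3)$, shows $\tr_B(\rho^{(1)})=0$ using $\tr(B_\alpha\rho_B)=0$, writes $\rho_S^{(2)}$ as double time integrals against $C_{\alpha,\beta}$, and matches these with $\Gamma^{(0)},\Gamma^{(1)},\Gamma^{(2)}$ obtained by iterating Duhamel on $\mathcal{L}_0$ and $H_1$, exploiting the sparsity of $H_1$ and the block-diagonal invariance under $e^{t\mathcal{L}_0}$. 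Your primary, trajectory-level route is genuinely different, but as written it has two gaps. First, the claim that ``under \cref{eq: approx} the NMSSE reproduces the full dynamics exactly, so $\rho_S(t)=\mathbb{E}[\ketbra{\psi(t)}{\psi(t)}]$'' is false: the NMSSE \cref{eq: sse} is itself a weak-coupling truncation of \cref{eq: lvn} carrying its own $\mathcal{O}(\lambda^3)$ error (the paper notes after the theorem that the GQME error ``is consistent with that of the NMSSE''). Since the theorem is about the full model, your reduction silently discards the first link of the chain; you would need to reprove the $\mathcal{O}(\lambda^3)$ accuracy of the NMSSE---which is precisely the Dyson computation the paper performs---or import it from \cite{gaspard1999non}.

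Second, the cascade $\delta\chi^{\iii},\delta\chi^{\iiii}=\mathcal{O}(\lambda)\Rightarrow\delta\chi^{\ii}=\mathcal{O}(\lambda^2)\Rightarrow\delta\psi=\mathcal{O}(\lambda^3)$ is plausible in powers of $\lambda$ but is not closed. The dropped remainders have the form $\lambda\,\zeta_k(t)\,T_k^{(\dagger)}\chi_k^{\iii/\iiii}$, so bounding their second moments requires fourth-order mixed moments such as $\mathbb{E}\bigl[|\zeta_k(t)|^2\,\|\chi_k^{\iii}(t)\|^2\bigr]$, where $\chi_k^{\iii}=i\zeta_k\chi_k^{\i}$ already carries another factor of $\zeta_k$ and $\chi_k^{\i}$ is a functional of the entire noise path. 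These quantities do not follow from the second-moment (density-matrix) bounds of \cref{lmm: upert}; you face exactly the moment-closure problem that the density-matrix formulation is designed to avoid (for the linear SDE, It\^o's formula closes at second moments, which is why the paper carries out the whole argument through $\Gamma$). You correctly flag this as the main obstacle, but you do not resolve it, so the key estimate $\bigl(\mathbb{E}[\|\delta\psi(t)\|^2]\bigr)^{1/2}=\mathcal{O}(\lambda^3)$ is not established. If you want to salvage the trajectory route, you would need a uniform bound on all fourth moments of the joint Gaussian-driven linear system (e.g., via an explicit Wick/hypercontractivity argument), which is a substantive addition rather than a detail.
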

This asymptotic error is consistent with that of the NMSSE in \cref{eq: sse}. 

\medskip 
In the computation, we work with  the GQME in \cref{eq: qme}, and the next theorem provides some bounds on the solution $\Gamma(t)$.
\begin{theorem}\label{eq: thm-gamma}
For any $t>0$, the density matrix $\Gamma(t)$ from the GQME in \cref{eq: qme} is positive semidefinite: $\Gamma(t) \geq 0$, and it has the following properties.
\begin{itemize}
    \item[{\bf (i)}] The norm of the density matrix follows the bound, 
    \begin{equation}
    \|\Gamma(t)\|_\infty \le \|\Gamma(0)\|_\infty \exp (2 \lambda C \|H_1\| t).
\end{equation}
The constant $C$ is the same as that in the Lemma 4. 

\item[{\bf (ii)}] The norms of the  off-diagonal blocks of $\Gamma(t)$ is of order $\lambda$.

\item[{\bf (iii)}] For any initial condition $\Gamma(0),$ not necessarily positive, the trace of $\Gamma(t)$ is bounded as,
\begin{equation}\label{eq: tr-bd}
  \mid    \tr\big(\Gamma(t) \big) \mid 
     \le 3K\mid   \tr\big(\Gamma_{1,1}(0)\big)\mid   
     + \sum_{k>1}\mid   \tr\big(\Gamma_{k,k}(0)\big)\mid  . 
\end{equation}

\end{itemize}
\end{theorem}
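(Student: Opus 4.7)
The plan is to establish positivity via the stochastic representation and then prove the three quantitative bounds through a common perturbative splitting plus Duhamel's formula. Positivity is the easy step: the SDE \cref{eq: Psi-sde} together with the definition $\Gamma_{\alpha,\beta}(t) = \mathbb{E}[\Psi_\alpha(t)\Psi_\beta(t)^*]$ gives $\Gamma(t) = \mathbb{E}[\Psi(t)\Psi(t)^\dagger]$, which is an expectation of rank-one positive semidefinite operators and is therefore PSD for every $t$. (Equivalently, \cref{eq: qme} has the generalized Lindblad form, so it generates a completely positive semigroup.)

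For part (i), I would apply Duhamel's formula to the perturbative decomposition \cref{eq: pert} to write
\begin{equation*}
  \Gamma(t) = e^{t\CL_0}\Gamma(0) - i\lambda \int_0^t e^{(t-s)\CL_0}\bigl(H_1\Gamma(s) - \Gamma(s)H_1^\dagger\bigr)\,ds,
\end{equation*}
then take the block $\infty$-norm of both sides. Lemma~\ref{lmm: upert}(4) supplies the uniform bound $\|e^{\tau\CL_0}X\|_\infty \leq C\|X\|_\infty$, and submultiplicativity of $\|\cdot\|_\infty$ yields the integral inequality $\|\Gamma(t)\|_\infty \leq C\|\Gamma(0)\|_\infty + 2C\lambda\|H_1\|\int_0^t \|\Gamma(s)\|_\infty\,ds$, to which Gr\"onwall's lemma delivers the stated exponential bound. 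For part (ii), the initial condition \cref{eq: gamma(0)} is block-diagonal, so Lemma~\ref{lmm: upert}(2) guarantees that $e^{t\CL_0}\Gamma(0)$ remains block-diagonal. Hence the off-diagonal blocks of $\Gamma(t)$ come entirely from the perturbation integral, which carries an explicit factor of $\lambda$, and combining this with the uniform bound from (i) over the weak-coupling time scale yields off-diagonal blocks of size $\co{\lambda}$.

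Part (iii) is the most delicate step, since the claimed bound is free of both $\lambda$ and $t$ and therefore cannot be produced by a direct Gr\"onwall argument. My plan is to differentiate $\tr(\Gamma(t))$ directly from \cref{eq: qme}, obtaining $\partial_t\,\tr(\Gamma) = \tr(\CL_0\Gamma) - i\lambda\,\tr(\Xi)$ with $\Xi$ defined as in \cref{eq: Xi}. The unperturbed piece is controlled via Lemma~\ref{lmm: upert}(3) by explicitly tracking which initial diagonal block feeds which diagonal block under $\CL_0$: in particular the $V_k\Gamma V_k^\dagger$ term in \cref{eq:Vk} couples the $(0,0)$ block into the $(4k-2,4k-2)$ blocks, and this bookkeeping produces the stated coefficient $3K$ multiplying $|\tr(\Gamma_{1,1}(0))|$. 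The perturbative piece is controlled by Lemma~\ref{lmm: p-term}, which bounds the trace of each diagonal block of $\Xi$ in terms of the off-diagonal blocks of $\Gamma$; by part (ii) those off-diagonal blocks are themselves $\co{\lambda}$, so the perturbative correction to $\partial_t\,\tr(\Gamma)$ is $\co{\lambda^2}$, which integrates to a uniformly bounded contribution over the weak-coupling time scale $t \lesssim \lambda^{-1}$ and can be absorbed into the stated coefficients.

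The main obstacle will be part (iii). It requires a two-layer bootstrap — first (i) and (ii), then using (ii) to upgrade the naive $\co{\lambda}$ estimate on the perturbation of $\tr(\Gamma)$ to $\co{\lambda^2}$ — combined with a careful combinatorial accounting of the block coupling induced by $H_0$ and the $V_k$'s in \cref{eq:Vk} in order to extract the precise coefficient $3K$ rather than a less informative constant times $K$.
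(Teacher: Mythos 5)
Your proposal follows essentially the same route as the paper: the perturbative splitting \cref{eq: pert}, Duhamel's formula, the uniform bound on $\exp(t\CL_0)$ from \cref{lmm: upert}, and Gr\"onwall's inequality for (i); block-diagonality of the unperturbed evolution plus the explicit $\lambda$ factor in the integral term for (ii); and the block-by-block trace bookkeeping from the explicit solutions in the proof of \cref{lmm: upert} for (iii), with your bootstrap for the $\co{\lambda^2}$ trace correction mirroring what the paper defers to the proof of \cref{eq: tr-bound}. The one genuine addition is your positivity argument via the unraveling $\Gamma(t)=\mathbb{E}[\Psi(t)\Psi(t)^{\dagger}]$ from \cref{eq: Psi-sde} (equivalently, the Lindblad form of \cref{eq: qme}), which the paper's proof asserts but never explicitly establishes, so your version is more complete on that point.
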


\begin{corollary}\label{eq: tr-bound}
Fix $T>0$. The trace of $\Gamma(t)$ from \cref{eq: qme} for $t\in [0,T]$ is bounded as follows,
\begin{equation}
   \tr \left( \Gamma(t) \right) = 2K +1 + 2 \sum_{k=1}^K e^{-4 \nu_k t} +  \mathcal{O} (\lambda^2).
\end{equation}

More importantly, the trace of $\widetilde{\rho}_S(t)$, as the first block diagonal of $\Gamma(t)$, is bounded by,
\begin{equation}
    \tr\left(\widetilde{\rho}_S(t)\right)= \tr\left(\Gamma_{0,0}(t)\right) = 1 + \mathcal{O} (\lambda^3).
\end{equation}

\end{corollary}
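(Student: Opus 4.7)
The plan is to combine the closed-form for $\tr(\Gamma^\sz(t))$ from \cref{lmm: upert} part~(3) with a trace-level perturbative estimate that absorbs the $\lambda H_1$ contribution into an $\co{\lambda^2}$ correction, and then to read off the sharper first-block identity directly from \cref{eq: thm-rhos}.

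For the first identity I would work with the splitting $\partial_t \Gamma = \CL_0(\Gamma) - i\lambda\,\Xi$ from \cref{eq: pert}, with $\Xi = H_1 \Gamma - \Gamma H_1^\dagger$, and set $\Delta := \Gamma - \Gamma^\sz$ (so that $\Delta(0)=0$ and $\partial_t \Delta = \CL_0(\Delta) - i\lambda\,\Xi$). Taking the trace of each diagonal block and using that $H_0$ is block-diagonal while each $V_k$ in \cref{eq:Vk} sends the block-diagonal part of its input to the block-diagonal part of its output, the scalars $\{\tr(\Delta_{j,j})\}_j$ satisfy a closed, finite-dimensional, lower-triangular linear ODE system with zero initial data and forcing $-i\lambda\,\tr(\Xi_{j,j})$. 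Its coefficient matrix is the one implicit in the proof of \cref{lmm: upert} part~(3), with diagonal entries $0$, $-2\nu_k$, and $-4\nu_k$ --- all with non-positive real part --- so the associated Green's function is bounded uniformly on every finite horizon $[0,T]$. The forcing itself is $\co{\lambda^2}$: \cref{eq: thm-gamma} part~(ii) gives $\|\Gamma_{i,j}(t)\| = \co{\lambda}$ for $i \neq j$ throughout $[0,T]$, and \cref{lmm: p-term} then yields $|\tr(\Xi_{j,j}(t))| = \co{\lambda}$ for each $j$. Integrating and summing over $j$, then adding \cref{lmm: upert} part~(3), produces
\begin{equation*}
  \tr(\Gamma(t)) \;=\; \tr(\Gamma^\sz(t)) + \co{\lambda^2} \;=\; 2K+1 + 2\sum_{k=1}^K e^{-4\nu_k t} + \co{\lambda^2}.
\end{equation*}

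For the second identity I would invoke \cref{eq: thm-rhos}: $\rho_S(t) = \widehat{\rho}_S(t) + \co{\lambda^3}$, where $\widehat{\rho}_S(t) = \Gamma_{0,0}(t)$ is the upper-left block of $\Gamma(t)$ and $\rho_S(t)$ is the exact reduced density matrix from \cref{eq: lvn}. Because the von Neumann equation is trace-preserving and so is the partial trace over the bath, $\tr(\rho_S(t)) = \tr(\rho_S(0)) = 1$ for every $t$; taking traces in the approximation identity (the norm bound of \cref{eq: thm-rhos} transfers to the trace at the same order) yields $\tr(\Gamma_{0,0}(t)) = 1 + \co{\lambda^3}$, as claimed. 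The main obstacle is the uniform-in-$T$ stability of the block-trace ODE system for $\Delta$: the constant absorbed into $\co{\lambda^2}$ depends only polynomially on $T$ precisely because the assumption $\nu_k \geq 0$ forces the action of $\CL_0$ on block traces to have non-positive spectrum --- the same non-positivity that underlies the global-in-time boundedness in \cref{lmm: upert} part~(4).
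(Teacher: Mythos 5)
Your argument for the first identity is essentially the paper's: both reduce to the splitting $\partial_t\Gamma=\CL_0\Gamma-i\lambda\Xi$, observe that $\CL_0$ acts in a closed, stable way on the diagonal-block traces (the paper phrases this through the integral form and the trace bound of \cref{eq: thm-gamma}(iii), you through the lower-triangular ODE for $\{\tr\Delta_{j,j}\}$ with diagonal $0,-2\nu_k,-4\nu_k$ --- the same structure), and then bound the forcing by $\co{\lambda}$ via \cref{lmm: p-term} combined with \cref{eq: thm-gamma}(ii). For the second identity, however, you take a genuinely different route. The paper stays entirely inside the GQME: it expands $\Gamma=\Gamma^\sz+\lambda\Gamma^{(1)}+\lambda^2\Gamma^{(2)}+\co{\lambda^3}$, kills the $\co{\lambda}$ trace contribution because $\Gamma^\sz$ is block diagonal, and then kills the $\co{\lambda^2}$ contribution to $\tr(\Gamma_{0,0})$ by an explicit computation of the blocks $\Xi_{4k-3,0}$, $\Xi_{4k-2,0}$ in \cref{eq: trace1} (using $\Gamma_{4k-3,4k-3}(t)=0$). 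You instead import \cref{eq: thm-rhos} and the exact trace preservation of the von Neumann dynamics, which is shorter and non-circular (the proof of \cref{eq: thm-rhos} does not rely on this corollary). What each buys: your route avoids the second-order block bookkeeping entirely, at the cost of (a) relying on the full microscopic model rather than giving a self-contained property of the GQME, and (b) the small caveat that the remainder in \cref{eq: thm-rhos} must control the \emph{trace} of the difference at order $\lambda^3$ --- automatic if the bound is entrywise or in trace norm, but picking up a dimension factor if it is only a spectral-norm bound (the paper leaves the norm unspecified, so this is a cosmetic rather than substantive concern). The paper's route is longer but exhibits the explicit cancellation mechanism at order $\lambda^2$ inside the extended system. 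Both are valid proofs.
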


\section{Quantum algorithm for simulating generalized quantum master equations}
\label{sec:simulation}

Before presenting the quantum algorithm for this problem, we first prove some useful technical results that will be used in the analysis of our quantum algorithm.

\begin{lemma}[Initial state preparation]
  \label{lemma:state-prep}
  Given a copy of any initial state $\ket{\psi}$ of the system, a normalized version of $\Gamma_0$, as defined in \cref{eq: gamma(0)}, can be prepared using $\CO(K)$ 1- and 2-qubit gates.
\end{lemma}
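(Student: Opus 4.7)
The plan is to exploit a product structure in the normalized $\Gamma(0)$. By \cref{eq: gamma(0)} and the fact that $\tr(\Gamma(0)) = 4K+1$, one has
\begin{equation}
  \frac{1}{4K+1}\Gamma(0) \;=\; \sigma_A \otimes \ketbra{\psi}{\psi},
\end{equation}
with
\begin{equation}
  \sigma_A := \frac{1}{4K+1}\Bigl( \ketbra{0}{0} + \sum_{k=1}^K \ketbra{4k-2}{4k-2} + 3\sum_{k=1}^K \ketbra{4k}{4k} \Bigr),
\end{equation}
where the auxiliary register has dimension $4K+1$, i.e.\ lives on $\lceil\log_2(4K+1)\rceil=\CO(\log K)$ qubits. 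Since the system register already carries the given $\ket{\psi}$, it suffices to prepare $\sigma_A$ on a fresh auxiliary register and tensor with $\ket{\psi}$.

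To produce the classical mixture $\sigma_A$, I would use the standard purification trick. Introduce an ancilla register $R$ of the same $\CO(\log K)$ qubits and prepare the pure entangled state
\begin{equation}
  \ket{\Phi}_{AR} := \frac{1}{\sqrt{4K+1}}\Bigl( \ket{0}_A\ket{0}_R + \sum_{k=1}^K \ket{4k-2}_A\ket{4k-2}_R + \sqrt{3}\sum_{k=1}^K \ket{4k}_A\ket{4k}_R \Bigr).
\end{equation}
Its Schmidt coefficients in the computational basis are precisely the square roots of the diagonal entries of $\sigma_A$, so tracing out $R$ returns $\sigma_A$ on $A$.

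The construction of $\ket{\Phi}_{AR}$ then splits into two steps. First prepare on $A$ the state
\begin{equation}
  \ket{\mu}_A := \frac{1}{\sqrt{4K+1}}\Bigl( \ket{0} + \sum_{k=1}^K \ket{4k-2} + \sqrt{3}\sum_{k=1}^K \ket{4k} \Bigr),
\end{equation}
and second copy the computational-basis label from $A$ to $R$ (initialized in $\ket{0}$) using one CNOT per qubit, for a cost of $\CO(\log K)$. Since $\ket{\mu}_A$ is an arbitrary pure state on only $\lceil\log_2(4K+1)\rceil$ qubits, any generic arbitrary-state-preparation routine (for instance Shende--Bullock--Markov style uniformly-controlled rotations) builds it in $\CO(2^{\lceil\log_2(4K+1)\rceil}) = \CO(K)$ 1- and 2-qubit gates, yielding the claimed total.

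The only point that requires a moment's thought is the $\CO(K)$ gate bound for preparing $\ket{\mu}_A$, but this is immediate once one observes that $\ket{\mu}_A$ sits on only $\CO(\log K)$ qubits. Should one wish to avoid invoking a generic black box, the specific structure of $\ket{\mu}_A$ is friendly: it has only two distinct nonzero amplitude magnitudes, so one can first build a uniform superposition over the $2K+1$ support indices in $\CO(K)$ gates and then apply a single controlled single-qubit rotation, conditioned on the bit that distinguishes the $\ket{4k-2}$ branch from the $\ket{4k}$ branch, to install the $\sqrt{3}{:}1$ amplitude ratio. Either route gives $\CO(K)$ gates overall, so neither the purification nor the state preparation presents a real obstacle.
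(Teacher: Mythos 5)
Your proof is correct and follows essentially the same route as the paper's: prepare a pure state on the auxiliary register whose amplitudes are the square roots of the block weights, copy the computational-basis label to a second register with CNOTs, trace that register out, and tensor with $\ket{\psi}$; generic state preparation on $\CO(\log K)$ qubits gives the $\CO(K)$ gate count. One substantive point in your favor: you prepare the weights $(1,1,3)$ that \cref{eq: gamma(0)} actually specifies (consistent with its trace $4K+1$), whereas the paper's own proof prepares the uniform superposition $\ket{0}+\sum_{k}(\ket{4k-2}+\ket{4k})$, which reproduces the version of $\Gamma(0)$ stated in \cref{eq:gamma0-intro} rather than the one in \cref{eq: gamma(0)}. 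The only shaky step is your aside about installing the $\sqrt{3}{:}1$ amplitude ratio via a rotation controlled on ``the bit that distinguishes $\ket{4k-2}$ from $\ket{4k}$''---no single bit of the binary encoding separates those two branches, so one would have to re-index the register as a product of a $k$-label and a branch label---but this is dispensable since the generic $\CO(K)$-gate state preparation you invoke already closes the argument.
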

\begin{proof}
  Let $K' \leq 2K+1$ be the smallest integer larger than $K$ such that $(4K'+1)$ is some power of 2. We initialize a $\log(4K'+1)$-qubit state which is a normalized version of
  \begin{align}
    \ket{0} + \sum_{k=1}^K \Big( \ket{4k-2} + \ket{4k} \Big).
  \end{align}
  This can be implemented using $\CO(K)$ 1- and 2-qubit gates. Append an additional register that is initialized to $\ket{0}$ and use $\mathrm{CNOT}$ gates to ``copy'' the basis states to the second register. We have the normalized version of
  \begin{align}
    \ket{0}\!\ket{0}+ \sum_{k=1}^K \Big( \ket{4k-2}\!\ket{4k-2} + \ket{4k}\!\ket{4k} \Big).
  \end{align}
  Then, simply by appending the original state $\rho_S$ of the system and tracing out the second register, we obtain the normalized version of $\Gamma_0$.
\end{proof}

\noindent{\bf Constructing block-encodings. }
In the next two lemmas, we show how to obtain block-encodings of $H$ and $V_j$  in \cref{eq: qme} for $j \in [K]$ using block-encodings of $H_S$ and $S_\beta$ for $\beta \in [M]$.

\begin{lemma}
  \label{lemma:block-coding-H}
  {
  Let $H$ be defined in \cref{eq:H-block}, and recall $\norm{\theta}_1 = \sum_{k=1}^K\theta_k$. Given the access to an $(\alpha, a, \epsilon)$-block-encoding $U_{H_S}$ of $H_S$, an $(\alpha, a, \epsilon)$-block-encoding $U_{S_\beta}$ for each $\beta \in [M]$, an $(\alpha', a', \epsilon')$-block-encoding
  of $I-i\delta H$, where
  \begin{align}
    \alpha' &\leq 1+\delta\left(\alpha + d_{\max} + 4\lambda\alpha \sqrt{M}\norm{\theta}_1 \right),\\
    a' &= a+\CO(\log(MK)),\\
    \epsilon' &= \alpha'\epsilon.
  \end{align}
  can be implemented with one invocation of $U_{H_S}$ and $2K$ invocations to each of $U_{S_\beta}$ together with $\CO(MK)$ 1- and 2-qubit gates.
}
\end{lemma}
\begin{proof}
  Combining \cref{eq:H-block,eq: T}, we can write $H$ as
  \begin{align}
    \begin{aligned}
      H &= I_A \otimes H_S + H_A \otimes I_S + i\lambda\sum_{k=1}^K\sum_{\beta=1}^M\theta_k\braket{Q_k}{\beta}D_k\otimes S_{\beta} \\
        &\quad\quad + i\lambda\sum_{k=1}^K\sum_{\beta=1}^M\theta_k\braket{\beta}{Q_k}E_k\otimes S_{\beta}^{\dag}.
    \end{aligned}
  \end{align}

  We first show that it is easy to construct a block-encoding of the tensor product of two block-encodings. Let $U_A$ be an $(\alpha_1, a, \epsilon)$-block-encoding of $A$ and $U_B$ be an $(\alpha_2, b, \epsilon)$-encoding of $B$, then it is straightforward to see that the unitary $U_A\otimes U_B$ together with $\CO(a+b)$ swap gates is an $(\alpha_1\alpha_2, a+b, \epsilon)$-block-encoding of $A\otimes B$. As a result, an $(\alpha, a, \epsilon)$-block-encoding of $I_A\otimes H_S$ can be implemented. Since $H_A$ is diagonal, it is easy to implement a $(d_{\max}, \CO(\log K), 0)$-block-encoding of $H_A$. Hence, a $(d_{\max}, \CO(\log K), 0)$-block-encoding of $H_A\otimes I_S$ can be implemented.

  Since $D_k$ is $2$-sparse, we use \cref{lemma:sparse-to-be} to implement an $(2, \CO(\log(K)), 0)$-block-encoding $U_{D_k}$ of $D_k$. Note that this is an exact block-encoding because each nonzero entry of $D_k$ is 1. We also need to implement the sparse-access oracles for $D_k$. Since it only has constant nonzero entries, this can be done by a small classical circuit and turning it into a quantum circuit with $\CO(\log K)$ 1- and 2-qubit gates. Using the observation on tensor products of block-encodings, we can implement a $(2\alpha, a+\CO(\log K), \epsilon)$-block-encoding of $D_k\otimes S_{\beta}$  given an $(\alpha, a, \epsilon)$-block-encoding $U_{S_{\beta}}$ of $S_{\beta}$. $E_k \otimes S_{\beta}^{\dag}$ can be implemented similarly.

  Note that $I-i\delta H$ contains $3+2KM$ terms. They are $I_A\otimes I_S$, $-\delta I_A\otimes H_S$, $-\delta H_A\otimes I_S$, $-i\delta\lambda\theta_k\braket{Q_k}{\beta}D_k\otimes S_{\beta}$, and $-i\delta\lambda\theta_k\braket{Q_k}{\beta}E_k\otimes S_{\beta}^{\dag}$ for each $k,\beta$. The linear combination of normalizing factors is
  \begin{align}
    \label{eq:nom-ub}
    &1+\delta\left(\alpha + d_{\max} + 2\lambda\sum_{k=1}^K\sum_{\beta=1}^M\theta_k \lvert \braket{Q_k}{\beta}\rvert 2\alpha \right) \\
    &\quad\quad \leq 1+\delta\left(\alpha + d_{\max} + 4\lambda\alpha\sqrt{M}\norm{\theta}_1 \right),
  \end{align}
  where we have used the Cauchy–Schwarz inequality. As a result, we can use \cref{lemma:sum-to-be} to construct a block-encoding of $I-i\delta H$ with the desired parameters. This construction uses $2K$ invocations to each of $U_{S_{\beta}}$, $2M$ invocations to the block-encodings of each $D_k$ and $E_k$ (using $\CO(M\log K)$ total 1- and 2-qubit gates), and one invocation to $U_{H_S}$. The additional 1- and 2-qubit gates are used for the state preparation in \cref{lemma:sum-to-be} and this cost is $\CO(MK)$, which dominates the gate cost.

\end{proof}

\begin{lemma}
  \label{lemma:block-coding-V}
  Given $d_k$'s as in \cref{eq:correlation}, a $(d_{\max}, \CO(\log(K)), \epsilon)$-block-encoding for $V_k$ for $k \in [K]$ can be constructed using $\CO(\log K + \polylog(1/\epsilon))$ 1- and 2-qubit gates.
\end{lemma}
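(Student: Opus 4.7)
The plan is to reduce the construction to a direct application of \cref{lemma:sparse-to-be}, exploiting the fact that $V_k$, as given in \cref{eq:Vk}, acts nontrivially only on the $(4K+1)$-dimensional auxiliary register and is extremely sparse there. Write $V_k = \widetilde{V}_k \otimes I_S$ where $\widetilde{V}_k \in \bbc^{(4K+1)\times(4K+1)}$ has exactly three nonzero entries, located at the positions $(4k-2,0)$, $(4k-1,4k-3)$, and $(4k,4k-2)$, with values $\sqrt{2\nu_k}$, $\sqrt{2\nu_k}$, and $2\sqrt{2\nu_k}$ respectively. In particular $\widetilde{V}_k$ is $1$-sparse in both rows and columns, so the sparsity parameter in \cref{lemma:sparse-to-be} is a constant. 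Since block-encodings tensor cleanly with the identity, it suffices to build a block-encoding of $\widetilde{V}_k$ (rescaled) and then append $I_S$ for free.

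Next, I would construct the two sparse-access oracles $O_A$ and $O_S$ required by \cref{lemma:sparse-to-be}. Given $k$ as a classical parameter and indices $i,j \in \{0,\ldots,4K\}$ encoded in $\CO(\log K)$ qubits, the row/column structure above is entirely fixed: the oracle $O_S$ maps $(i,1)$ to the unique nonzero column index of row $i$ (or to $i$ itself if the row is zero), which is done by a handful of controlled arithmetic operations comparing $i$ to the three computable values $4k-2$, $4k-1$, $4k$, each costing $\CO(\log K)$ 1- and 2-qubit gates. The oracle $O_A$ must additionally write the value $\sqrt{2\nu_k}$ or $2\sqrt{2\nu_k}$ into an output register; since $\nu_k = \operatorname{Im} d_k$ is a classical number, the square root can be precomputed classically to accuracy $\epsilon$ and loaded via a fixed $\polylog(1/\epsilon)$-gate reversible circuit. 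Together this yields both oracles at cost $\CO(\log K + \polylog(1/\epsilon))$.

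With the oracles in hand, apply \cref{lemma:sparse-to-be} to the normalized matrix $\widetilde{V}_k / d_{\max}$, whose entries are bounded by $2\sqrt{2\nu_k}/d_{\max} \le 1$ under the standing assumption that $d_{\max}$ is at least a small absolute constant (recalling $\nu_k \le |d_k| \le d_{\max}$). Since the sparsity is $\CO(1)$, this produces an $(\CO(1), \CO(\log K)+3, \epsilon/d_{\max})$-block-encoding of $\widetilde{V}_k/d_{\max}$, or equivalently a $(d_{\max},\CO(\log K),\epsilon)$-block-encoding of $\widetilde{V}_k$; tensoring with the trivial block-encoding of $I_S$ then yields the desired $(d_{\max},\CO(\log K),\epsilon)$-block-encoding of $V_k$. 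The total 1- and 2-qubit gate cost is dominated by the oracle construction, giving $\CO(\log K + \polylog(1/\epsilon))$ as claimed.

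The only mildly delicate point is the numerical step of coherently computing $\sqrt{2\nu_k}$ to precision $\epsilon$ using a reversible circuit of size $\polylog(1/\epsilon)$; this is standard but needs to be stated carefully so that the $\epsilon$ in the block-encoding error matches the $\epsilon$ in the oracle data precision (up to the $d_{\max}$ rescaling absorbed above). Everything else is index bookkeeping and the black-box application of \cref{lemma:sparse-to-be}.
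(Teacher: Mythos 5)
Your proposal is correct and follows essentially the same route as the paper's (very terse) proof: observe that $V_k$ is $1$-sparse, implement the sparse-access oracles, apply \cref{lemma:sparse-to-be} to $V_k/d_{\max}$, and rescale. You supply details the paper omits --- the explicit tensor structure $\widetilde{V}_k\otimes I_S$, the oracle circuits, and the caveat that the entry bound $2\sqrt{2\nu_k}\le d_{\max}$ needs $d_{\max}$ bounded below by a constant --- but the underlying argument is identical.
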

\begin{proof}
  Each $V_k$ is 1-sparse, and it is straightforward to implement the sparse-access oracles specified in \cref{eq:sparse-1} and \cref{eq:sparse-2}. Then we use \cref{lemma:sparse-to-be} to implement a $(1, \CO(\log(K)), \epsilon)$-block-encoding for $V_k/d_{\max}$, which implies a $(d_{\max}, \CO(\log(K)), \epsilon)$-block-encoding for $V_k$.
\end{proof}

\noindent{\bf Infinitesimal approximation by completely-positive maps. }
An important step of our quantum algorithm is to use the following superoperator to approximate $e^{\mathcal{K}\delta}$ when $\delta$ is small.
\begin{align}
  \label{eq:m-delta}
  \mathcal{M}_{\delta}(X) = A_0XA_0^{\dag} + \sum_{k=1}^KA_kXA_k^{\dag},
\end{align}
where
\begin{align}
  \label{eq:m-delta-kraus}
    A_0 = I - i\delta H, \text{ and } \quad A_k = \sqrt{\delta}V_k \text{ for all $k \in [K]$}.
\end{align}

We use the following lemma, which is proved in \cref{sec:proof-dist-m-k}, to bound the error of this approximation:
\begin{lemma}
  \label{lemma:dist-m-k}
  Let $\mathcal{M}_{\delta}$ be a superoperator defined in \cref{eq:m-delta}, and let $\mathcal{K}$ be as defined in \cref{eq: qme}. Define $\Lambda:= \norm{H} + \sum_{k=1}^K\norm{V_k}^2$. Then it holds that
  \begin{align}
    \norm{\mathcal{M}_{\delta} - e^{\mathcal{K}\delta}}_{\diamond} \leq 5(\delta \Lambda)^2.
  \end{align}
\end{lemma}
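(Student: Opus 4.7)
The plan is to Taylor-expand both sides about $\delta=0$ and compare. The key observation is that $\mathcal{M}_\delta$ was designed precisely so that its first-order expansion in $\delta$ reproduces $\mathcal{I} + \delta\mathcal{K}$ exactly, so the only discrepancies are of genuine second order.

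First, I would expand $\mathcal{M}_\delta$ by brute force. Substituting $A_0 = I - i\delta H$ into the first Kraus term gives
\begin{equation*}
A_0 X A_0^{\dag} = X - i\delta(HX - XH^{\dag}) + \delta^2 \, HXH^{\dag},
\end{equation*}
while the remaining Kraus terms contribute $\sum_{k} A_k X A_k^{\dag} = \delta \sum_k V_k X V_k^{\dag}$. Comparing with the definition of $\mathcal{K}$ in \cref{eq: qme}, this gives the exact identity
\begin{equation*}
\mathcal{M}_\delta \;=\; \mathcal{I} + \delta\,\mathcal{K} + \delta^2\, \mathcal{H}, \qquad \mathcal{H}(X):=HXH^{\dag}.
\end{equation*}
On the other side, a standard Taylor expansion of the exponential yields
\begin{equation*}
e^{\delta \mathcal{K}} = \mathcal{I} + \delta\,\mathcal{K} + \sum_{j\ge 2}\frac{(\delta\mathcal{K})^j}{j!}.
\end{equation*}
Subtracting and applying the triangle inequality for the diamond norm,
\begin{equation*}
\bigl\|\mathcal{M}_\delta - e^{\delta\mathcal{K}}\bigr\|_\diamond \;\le\; \delta^2\,\|\mathcal{H}\|_\diamond + \sum_{j\ge 2}\frac{\delta^j \|\mathcal{K}\|_\diamond^j}{j!}.
\end{equation*}

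Second, I would bound the two superoperator diamond norms in terms of $\Lambda = \|H\| + \sum_k \|V_k\|^2$. For left/right multiplication by a bounded operator the diamond norm equals the operator norm (since tensoring with $\mathcal{I}$ only pads with identities on the other system). Conjugation $X\mapsto VXV^{\dag}$ has diamond norm at most $\|V\|^2$ by the same padding argument applied to $V\otimes I$. Hence
\begin{equation*}
\|\mathcal{H}\|_\diamond \le \|H\|^2 \le \Lambda^2, \qquad \|\mathcal{K}\|_\diamond \le 2\|H\| + \sum_k \|V_k\|^2 \le 2\Lambda.
\end{equation*}

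Third, I would collect the tail of the exponential series using the elementary estimate $\sum_{j\ge 2} x^j/j! \le \tfrac12 x^2 e^x$, applied with $x = 2\delta\Lambda$. Assuming $\delta\Lambda$ is in the regime where $\mathcal{M}_\delta$ is meant to be used (say $2\delta\Lambda \le \ln 2$, which is automatic once $\delta$ is small enough that $\mathcal{M}_\delta$ is a sensible approximation), this tail is at most $4(\delta\Lambda)^2$. Combining with the $\delta^2\|\mathcal{H}\|_\diamond \le (\delta\Lambda)^2$ contribution gives the claimed bound $5(\delta\Lambda)^2$.

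The main obstacle is essentially bookkeeping: being careful that the exact cancellation of the first-order term is a consequence of the design of $\mathcal{M}_\delta$ (so the answer really is $O(\delta^2)$ and not $O(\delta)$), and getting a clean constant from the exponential tail. Neither step is deep, but the diamond-norm bounds for conjugation maps are worth stating explicitly, since a naive use of the induced trace norm (without tensoring with $\mathcal{I}_N$) would not be enough for later use in bounding the global simulation error.
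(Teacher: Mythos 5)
Your proposal is correct and follows essentially the same route as the paper: both decompose the error through the intermediate map $\mathcal{I}+\delta\mathcal{K}$, identify the exact second-order remainder $\delta^2 HXH^{\dag}$ of $\mathcal{M}_\delta$ and bound it by $(\delta\Lambda)^2$, bound $\norm{\mathcal{K}}_\diamond\le 2\Lambda$, and control the exponential tail by $(2\delta\Lambda)^2$ under a smallness assumption on $\delta$ (the paper assumes $\delta\norm{\mathcal{K}}_\diamond\le 1$ and uses $e^z-(1+z)\le z^2$; you assume $2\delta\Lambda\le\ln 2$, which is the same kind of regime restriction). The only cosmetic difference is your explicit remark on diamond norms of conjugation maps, which the paper handles by the same padding argument inline.
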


\noindent{\bf Oblivious amplitude amplification for isometries. }
In our algorithm, we need to apply amplitude amplification to boost the success probability. However, in our context, the underlying operator is an isometry instead of a unitary (\ie, part of the input is restricted to be some special state). We use \emph{oblivious amplitude amplification for isometries}, which was first introduced in~\cite{CW17} to achieve this. In \cite{CW17}, only a special case where the initial success probability is exactly $1/4$ was considered. Here, we give a more general version.

\begin{lemma}
  \label{lemma:oaa}
  For any $a, b \in \mathbb{N}_+$, let $\ket*{\widehat{0}} := \ket{0}^{\otimes a}$ and $\ket*{\widehat{\mu}} := \ket{\mu}^{\otimes b}$ for an arbitrary state $\ket{\mu}$. For any $n$-qubit state $\ket{\psi}$, define $\ket*{\widehat{\psi}} := \ket*{\widehat{0}}\ket*{\widehat{\mu}}\ket{\psi}$. Let the target state $\ket*{\widehat{\phi}}$ be defined as
  \begin{align}
    \ket*{\widehat{\phi}} := \ket*{\widehat{0}}\ket{\phi},
  \end{align}
  where $\ket{\phi}$ is a $(b+n)$-qubit state.
  Let $P_0:=\ketbra*{\widehat{0}}{\widehat{0}}\otimes I_{2^b} \otimes I_{2^n}$ and $P_1:=\ketbra*{\widehat{0}}{\widehat{0}}\otimes \ketbra{\widehat{\mu}}{\widehat{\mu}} \otimes I_{2^n}$ be two projectors.
  Suppose there exists an operator $W$ such that
  \begin{align}
    \label{eq:w}
    W\ket*{\widehat{\psi}} = \sin\theta \ket*{\widehat{\phi}} + \cos\theta\ket*{\widehat{\phi}^{\bot}},
  \end{align}
  for some $\theta \in[0, \pi/2]$ with $\ket*{\widehat{\phi}^{\bot}}$ satisfying $P_0\ket*{\widehat{\phi}^{\bot}} = 0$.
 Then it holds that
  \begin{align}
    -\!W\!\left(I\!-\!2P_1\right)W^{\dag}\left(I\!-\!2P_0\right)(\sin\gamma\ket*{\widehat{\phi}}\!+\!\cos\gamma\ket*{\widehat{\phi}^{\bot}}) \!=\! \sin(\gamma\!+\!2\theta)\ket*{\widehat{\phi}}\!+\!\cos(\gamma\!+\!2\theta)\ket*{\widehat{\phi}^{\bot}},
  \end{align}
  for all $\gamma \in [0, \pi/2]$.
\end{lemma}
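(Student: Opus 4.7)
The plan is to show that the operator $-W(I-2P_1)W^{\dag}(I-2P_0)$ acts on the two-dimensional subspace $V := \mathrm{span}\{\ket*{\widehat{\phi}}, \ket*{\widehat{\phi}^{\bot}}\}$ as the composition of two reflections, and hence as a rotation by angle $2\theta$, and then to verify the sign conventions by direct calculation. Introduce the auxiliary unit vector $\ket*{v} := \cos\theta\ket*{\widehat{\phi}} - \sin\theta\ket*{\widehat{\phi}^{\bot}}$ so that $\{W\ket*{\widehat{\psi}}, \ket*{v}\}$ is an orthonormal basis of $V$ obtained from the original one by rotation by $\theta$; inverting this change of basis gives $\ket*{\widehat{\phi}} = \sin\theta\, W\ket*{\widehat{\psi}} + \cos\theta\, \ket*{v}$ and $\ket*{\widehat{\phi}^{\bot}} = \cos\theta\, W\ket*{\widehat{\psi}} - \sin\theta\, \ket*{v}$, which will be used to move between the two bases.

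First I would verify that $I-2P_0$ acts on $V$ as the reflection with axis $\ket*{\widehat{\phi}^{\bot}}$. Since $\ket*{\widehat{\phi}} = \ket*{\widehat{0}}\ket{\phi} \in \mathrm{Im}(P_0)$ we have $(I-2P_0)\ket*{\widehat{\phi}} = -\ket*{\widehat{\phi}}$, while the hypothesis $P_0\ket*{\widehat{\phi}^{\bot}}=0$ gives $(I-2P_0)\ket*{\widehat{\phi}^{\bot}} = \ket*{\widehat{\phi}^{\bot}}$. Next I would verify that $W(I-2P_1)W^{\dag}$ acts on $V$ as the reflection with axis $\ket*{v}$. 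Setting $\ket*{\widehat{\psi}^{\bot}} := W^{\dag}\ket*{v}$, unitarity of $W$ yields $\braket*{\widehat{\psi}}{\widehat{\psi}^{\bot}}=0$. Because $\ket*{\widehat{\psi}} = \ket*{\widehat{0}}\ket*{\widehat{\mu}}\ket{\psi} \in \mathrm{Im}(P_1)$, we have $(I-2P_1)\ket*{\widehat{\psi}} = -\ket*{\widehat{\psi}}$; combined with $P_1\ket*{\widehat{\psi}^{\bot}}=0$ (discussed below), this gives $(I-2P_1)\ket*{\widehat{\psi}^{\bot}} = \ket*{\widehat{\psi}^{\bot}}$, and conjugating by $W$ shows that $W(I-2P_1)W^{\dag}$ negates $W\ket*{\widehat{\psi}}$ and fixes $\ket*{v}$. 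With these two reflection identities in hand, the conclusion is a direct calculation: apply $(I-2P_0)$ to $\sin\gamma\ket*{\widehat{\phi}}+\cos\gamma\ket*{\widehat{\phi}^{\bot}}$, switch to the $\{W\ket*{\widehat{\psi}},\ket*{v}\}$ basis using the inverse change of basis and the angle-addition identities to obtain $\cos(\gamma+\theta) W\ket*{\widehat{\psi}} - \sin(\gamma+\theta) \ket*{v}$; pull back through $W^{\dag}$, apply $(I-2P_1)$ to flip the sign of the $\ket*{\widehat{\psi}}$-component, push forward through $W$, apply the overall minus sign, and invoke the angle-addition identities once more to recover $\sin(\gamma+2\theta)\ket*{\widehat{\phi}} + \cos(\gamma+2\theta)\ket*{\widehat{\phi}^{\bot}}$.

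The main obstacle is justifying the identity $P_1\ket*{\widehat{\psi}^{\bot}} = 0$; this is the genuinely ``oblivious'' ingredient of the argument. The hypothesis of \cref{eq:w} is to be read uniformly in the input state $\ket{\psi}$ with the same angle $\theta$, so that $W$ carries every vector of the form $\ket*{\widehat{0}}\ket*{\widehat{\mu}}\ket{\cdot}$ to a linear combination of the form $\sin\theta$ times a vector in $\mathrm{Im}(P_0)$ plus $\cos\theta$ times a vector in $\ker(P_0)$. This is the standard structural assumption for oblivious amplitude amplification, and in our setting it is ensured by the way $W$ is assembled from block-encoded Kraus operators via \cref{lemma:block-encoding-channel}. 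A brief inner-product calculation under this assumption shows $\ket*{v} \perp W(\mathrm{Im}(P_1))$, which is equivalent to $P_1 W^{\dag}\ket*{v}=0$, i.e.\ $P_1\ket*{\widehat{\psi}^{\bot}}=0$. The remaining steps are routine trigonometric bookkeeping.
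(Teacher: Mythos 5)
Your proposal is correct and follows essentially the same route as the paper: introduce $\ket*{\widehat{\psi}^{\bot}} := W^{\dag}\bigl(\cos\theta\ket*{\widehat{\phi}} - \sin\theta\ket*{\widehat{\phi}^{\bot}}\bigr)$, establish $P_1\ket*{\widehat{\psi}^{\bot}} = 0$ from the fact that the hypothesis holds with the \emph{same} $\theta$ for every input $\ket{\psi}$, and then verify the two-reflection rotation identity. The only place you are lighter than the paper is the ``brief inner-product calculation'': the paper makes it precise by showing that the Hermitian operator $Q = (\bra*{\widehat{0}}\bra*{\widehat{\mu}}\otimes I)W^{\dag}P_0W(\ket*{\widehat{0}}\ket*{\widehat{\mu}}\otimes I)$ has constant expectation $\sin^2\theta$ and hence equals $\sin^2\theta I$, from which the vanishing of the cross term follows; you should spell out that polarization step (and note it degenerates when $\sin\theta\cos\theta = 0$, where the claim is trivial).
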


\begin{proof}
  Let $\ket*{\widehat{\psi}^{\bot}}$ be a state satisfying
  \begin{align}
    \label{eq:w-dag}
    W\ket*{\widehat{\psi}^{\bot}} = \cos\theta\ket*{\widehat{\phi}} - \sin\theta \ket*{\widehat{\phi}^{\bot}}.
  \end{align}
 It is useful to have the following facts
  \begin{align}
    W^{\dag}\ket*{\widehat{\phi}} &= \sin\theta \ket*{\widehat{\psi}} + \cos\theta \ket*{\widehat{\psi}^{\dag}} \quad \text{ and}\\
    W^{\dag}\ket*{\widehat{\phi}^{\bot}} &= \cos\theta \ket*{\widehat{\psi}} - \sin\theta \ket*{\widehat{\psi}^{\dag}},
  \end{align}
  which can be obtained from \cref{eq:w} and \cref{eq:w-dag}.

  We first show that $P_1\ket*{\widehat{\psi}^{\bot}} = 0$. To see this, we define an operator
  \begin{align}
    Q = (\bra*{\widehat{0}}\bra*{\widehat{\mu}}\otimes I) W^{\dag} P_0 W (\ket*{\widehat{0}}\ket*{\widehat{\mu}}\otimes I).
  \end{align}
  For any $\ket{\psi}$, we have
  \begin{equation}
  \begin{aligned}
    \bra{\psi}Q\ket{\psi} &= \norm{P_0W\left(\ket*{\widehat{0}}\ket*{\widehat{\mu}}\ket{\psi}\right)}^2 \\ 
    &= \norm{P_0\left(\sin\theta\ket*{\widehat{\phi}} + \cos\theta\ket*{\widehat{\phi}^{\bot}}\right)}^2 = \norm{\sin\theta \ket*{\widehat{\phi}}}^2 = \sin^2\theta.
  \end{aligned}
    \end{equation}

  Hence, all the eigenvalues of $Q$ are $\sin^2\theta$, so we can write
  \begin{align}
    \label{eq:q}
    Q = \sin^2\theta I.
  \end{align}
  Now, consider any state $\ket{\psi}$:
  \begin{equation}
  \begin{aligned}
    \label{eq:qpsi-1}
    Q\ket{\psi} &= (\bra*{\widehat{0}}\bra*{\widehat{\mu}}\otimes I) W^{\dag} P_0 W (\ket*{\widehat{0}}\ket*{\widehat{\mu}}\otimes \ket{\psi}) = \sin\theta (\bra*{\widehat{0}}\bra*{\widehat{\mu}}\otimes I) W^{\dag}\ket*{\widehat{\phi}} \\
                &= \sin\theta (\bra*{\widehat{0}}\bra*{\widehat{\mu}}\otimes I) \left(\sin\theta\ket*{\widehat{\psi}} + \cos\theta \ket*{\widehat{\psi}^{\bot}}\right)  \\&= \sin^2\theta\ket{\psi} + \sin\theta\cos\theta(\bra*{\widehat{0}}\bra*{\widehat{\mu}}\otimes I)\ket*{\widehat{\psi}^{\bot}},
  \end{aligned}
   \end{equation}
  where the third equality follows from \cref{eq:w} and \cref{eq:w-dag}. On the other hand, by \cref{eq:q}, we have
  \begin{align}
    \label{eq:qpsi-2}
    Q\ket{\psi} = \sin^2\theta\ket{\psi}.
  \end{align}
In light of  \cref{eq:qpsi-1} and \cref{eq:qpsi-2}, we must have, 
  \begin{align}
    \left(\bra*{\widehat{0}}\!\bra*{\widehat{\mu}}\otimes I\right)\ket*{\widehat{\psi}^{\bot}} = 0,
  \end{align}
  which implies that $P_1\ket*{\widehat{\psi}^{\bot}} = 0$.

 We now proceed to analyze the result of applying $W(I-2P_1)W^{\dag}(I-2P_0)$ on $\sin\gamma\ket*{\widehat{\phi}}+\cos\gamma\ket*{\widehat{\phi}^{\bot}}$, 
  \begin{align}
    &W(I-2P_1)W^{\dag}(I-2P_0)\left(\sin\gamma\ket*{\widehat{\phi}}+\cos\gamma\ket*{\widehat{\phi}^{\bot}}\right) \\
    &= \left(I - 2P_0 -2WP_1W^{\dag} + 4WP_1W^{\dag}P_0\right)\left(\sin\gamma\ket*{\widehat{\phi}}+\cos\gamma\ket*{\widehat{\phi}^{\bot}}\right) \\
    &= \sin(\gamma + 2\theta)\ket*{\widehat{\phi}} + \cos(\gamma + 2\theta)\ket*{\widehat{\phi}^{\bot}}.
  \end{align}
\end{proof}

\noindent{\bf Proof of the main theorem. } Now, we have all the tools for proving the quantum algorithm for simulating \cref{prob:simulation}.
\begin{theorem}
  \label{thm:qalg}
  Suppose we are given a block-encoding $U_{H_S}$ of $H_S$, a block-encodings $U_{S_{\alpha}}$ of $S_{\alpha}$ (for $\alpha \in [M]$), $\theta_k$, $\ket{Q_k}$ (all entries), $\lambda$, and $d_k$ for $k \in [K]$ as in \cref{eq:correlation}. There exists a quantum algorithm that solves \cref{prob:simulation}. Let $\tau = t(\alpha(1+\lambda\sqrt{M}\norm{\theta}_1)+Kd_{\max}^2)$, where $\norm{\theta}_1 = \sum_{k=1}^K\theta_k$. This quantum algorithm uses 
  \begin{align}
    \CO\left(\tau\sqrt{K}\frac{\log(\tau/\epsilon)}{\log\log(\tau/\epsilon)}\right),
  \end{align}
  queries to $U_{H_S}$ and $U_{S_a}$, and 
  \begin{align}
    \CO\left(\frac{\tau\sqrt{K}\log^2(\tau/\epsilon)}{\log\log(\tau/\epsilon)} + \tau K^{2.5}\right),
  \end{align}
  additional 1- and 2-qubit gates.
\end{theorem}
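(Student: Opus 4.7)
The overall strategy is to specialize the Cleve--Wang framework~\cite{CW17} for simulating Lindblad dynamics to the non-Markovian GQME in \cref{eq: qme}, adapted to the block-encoded input model. I would partition $[0,t]$ into $\tau = \CO(t(\alpha\sqrt{M}+Kd_{\max}^2))$ short segments, each of length $t_0 = \Theta(1/(\alpha\sqrt{M}+Kd_{\max}^2))$, implement one segment at a time, and concatenate. Inside a segment I would approximate $e^{\mathcal{K}t_0}$ by the composed map $\mathcal{M}_\delta^{\circ r}$ of \cref{eq:m-delta} with $\delta = t_0/r$, whose per-step diamond error is controlled by \cref{lemma:dist-m-k}. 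The needed block-encodings of $I - i\delta H$ and $\sqrt{\delta}\,V_k$ come from \cref{lemma:block-coding-H,lemma:block-coding-V}; feeding them into \cref{lemma:block-encoding-channel} realizes $\mathcal{M}_\delta$ as an isometric dilation with squared success amplitude $1/(s_0^2+\sum_k s_k^2)$, where $s_0 = 1+\CO(\delta(\alpha\sqrt{M}+d_{\max}))$ and $s_k = \sqrt{\delta}\,d_{\max}$, and chaining $r$ copies through nested ancillas produces an isometry for $\mathcal{M}_\delta^{\circ r}$.

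To obtain the polylogarithmic dependence on $1/\epsilon$ rather than polynomial, I would not drive $\delta$ polynomially small. Instead, following the LCU-with-Taylor-truncation idea that underpins the sharpest Hamiltonian simulation algorithms, I would expand $\mathcal{M}_\delta^{\circ r}$ as a sum indexed by length-$r$ Kraus-operator sequences and truncate at order $q = \CO(\log(\tau/\epsilon)/\log\log(\tau/\epsilon))$, implementing the resulting linear combination coherently via the heterogeneous-normalizing-factor LCU of \cref{lemma:sum-to-be}. This keeps the per-segment error below $\epsilon/\tau$ at query cost $\CO(q)$. By the trace bound in \cref{eq: tr-bound}, $\tr(\Gamma(t))=\CO(K)$ along the entire evolution, so the squared success amplitude of the segment isometry is $\Omega(1/K)$; oblivious amplitude amplification for isometries (\cref{lemma:oaa}) then boosts this to a constant using $\CO(\sqrt{K})$ rounds. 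Concatenating $\tau$ such segments, with initial state produced by \cref{lemma:state-prep}, yields $\widetilde\rho_S(t)$ within trace distance $\epsilon$ of the upper-left block $\Gamma_{0,0}(t)$ at total query cost $\tau\cdot\sqrt{K}\cdot q$, matching the stated bound; the gate count then follows by accounting for the $\CO(\log K)$-qubit state-preparation unitaries invoked inside \cref{lemma:sum-to-be} and the sparse-access construction of \cref{lemma:block-coding-V}.

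The principal technical obstacle is controlling how the block-encoding normalizing factors compound across the $r$-fold composition. A naive Stinespring dilation would compound them multiplicatively as $\prod_i(s_{0,i}+\sum_k s_{k,i})$, which grows exponentially in $r$ and would preclude any constant per-segment success probability at $t_0 = \Theta(1/\Lambda)$. The essential gain from combining \cref{lemma:block-encoding-channel} with \cref{lemma:sum-to-be} is that the effective compound factor instead scales with the squared sum $\prod_i(s_{0,i}^2+\sum_k s_{k,i}^2)$, which remains bounded by a constant on each segment. A further subtlety is verifying that the isometric hypothesis of \cref{lemma:oaa} is preserved by the whole construction: because the block-matrix $H$ in \cref{eq:H-block} couples the $(4K+1)$-dimensional ancilla register to the system, the specific initial state \cref{eq:gamma0-intro} is what makes the composed operator act as an isometry on the relevant subspace, and verifying this carefully---rather than inheriting it from the Markovian analysis---is the part that I expect to require the most work.
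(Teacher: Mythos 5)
Your proposal follows essentially the same route as the paper: the same segmentation into $\CO(t(\alpha\sqrt{M}+Kd_{\max}^2))$ pieces, the same infinitesimal Kraus map $\mathcal{M}_\delta$ from \cref{eq:m-delta} with error controlled by \cref{lemma:dist-m-k}, the same use of \cref{lemma:block-coding-H,lemma:block-coding-V,lemma:block-encoding-channel} to realize each step with a squared-sum (rather than sum-squared) success parameter, the same $\CO(\sqrt{K})$ rounds of \cref{lemma:oaa} justified by \cref{eq: tr-bound}, and the same compression/truncation of the $r$-fold Kraus expansion at order $\CO(\log(\tau/\epsilon)/\log\log(\tau/\epsilon))$ to obtain the polylogarithmic query count. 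Your identification of the squared-sum normalization as the crucial gain over a naive Stinespring dilation is exactly the point the paper emphasizes.

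The one genuine slip is the claim that you ``would not drive $\delta$ polynomially small.'' The per-segment error has two independent sources: the product-formula error $\norm{\mathcal{M}_\delta^{\circ r}-e^{\mathcal{K}t_0}}_\diamond\le 5r(\delta\Lambda)^2=5t_0^2\Lambda^2/r$, and the truncation error of the expanded linear combination. Only the latter is controlled by the truncation order $q$; the former forces $r=\Omega(\tau/\epsilon)$, i.e.\ $\delta$ polynomially small in $\epsilon$, exactly as in the paper (which chooses $r$ ``large enough so that this error is at most $\epsilon$''). What the compressed encoding of \cite{CW17,BCKS14} buys is not a larger $\delta$ but the fact that, although $r$ is polynomially large, the weight of the $r$-step ancilla state is concentrated on sequences with at most $h=\CO(\log(1/\epsilon)/\log\log(1/\epsilon))$ non-identity Kraus factors, so the \emph{query} cost per segment is $\CO(h)$ rather than $\CO(r)$. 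With that correction your argument matches the paper's. Your closing concern about the isometry hypothesis of \cref{lemma:oaa} is well taken: the paper addresses it only implicitly, via \cref{eq: tr-bound} showing that the success amplitude $\tr(\Gamma_{0,0})/\tr(\Gamma)$ is independent of $\rho_S(0)$ up to $\CO(\lambda^2)$ corrections.
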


  We first use \cref{lemma:state-prep} to prepare a normalized version of $\Gamma_0$ as in \cref{eq: gamma(0)}.
  To simulate the dynamics in \cref{eq: qme}, first note that the solution to \cref{eq: qme} is $e^{\mathcal{K}t}$. We use the superoperator $\mathcal{M}_{\delta}$ defined in \cref{eq:m-delta} and \cref{eq:m-delta-kraus} to approximate $e^{\delta\mathcal{K}}$ for a small step $\delta$. By \cref{lemma:dist-m-k}, we know that the approximation error is at most $5(\delta\Lambda)^2$, where $\Lambda = \norm{H} + \sum_{k=1}^K\norm{V_k}^2$.

  To use \cref{lemma:block-encoding-channel} to implement $\mathcal{M}_{\delta}$, we first need to implement the block-encoding of $A_j$ for $j \in \{0, \ldots, K-1\}$. Using \cref{lemma:block-coding-H}, a $(\alpha', a', \epsilon')$-block-encoding of $A_0$ can be implemented with $\alpha' \leq 1+\delta(\alpha + d_{\max} + 4\lambda\alpha \sqrt{M}\norm{\theta}_1))$, $a' = a+\CO(\log(MK))$, and $\epsilon' = \alpha'\epsilon$. Also, a $(\sqrt{\delta}d_{\max}, \CO(\log(K)), \epsilon)$-block-encoding of $A_j$ can be implemented for each $j \in [K]$ using $\CO(\log K + \polylog(1/\epsilon))$ 1- and 2-qubit gates by \cref{lemma:sparse-to-be}. Now we use \cref{lemma:block-encoding-channel} to obtain the unnormalized state $\sum_{j=1}^m\ket{j}A_j\ket{\psi}$ with the ``success probability'' parameter
  
  \begin{small}
  \begin{align}
    &\geq \frac{1}{\left(1+\delta(\alpha + d_{\max} + 4\lambda\alpha \sqrt{M}\norm{\theta}_1 )\right)^2 + \sum_{j=1}^K\delta d_{\max}^2} \\
    &= \frac{1}{(1+\delta(2\alpha + 2d_{\max} + 8\alpha\lambda\sqrt{M}\norm{\theta}_1+Kd_{\max}^2) + \delta^2(\alpha + d_{\max} + 4\alpha\lambda\sqrt{M}\norm{\theta}_1)^2} \\
    &=1 - \delta(2\alpha + 2d_{\max} + 8\alpha\lambda\sqrt{M}\norm{\theta}_1+Kd_{\max}^2) + \CO(\delta^2(\alpha + d_{\max} + 4\alpha\lambda\sqrt{M}\norm{\theta}_1)^2).
  \end{align}
\end{small}

  Setting the step size, 
  \begin{align}
    \delta = \Theta\left(\frac{1}{r(2\alpha + 2d_{\max} + 8\alpha\lambda\sqrt{M}\norm{\theta}_1+Kd_{\max}^2)}\right),
  \end{align}
  and repeating the above procedure $r$ times, this success probability parameter becomes $\CO(1)$.

  Now for the approximation error, we observe that 
  \begin{equation}
  \begin{aligned}
    &\norm{\mathcal{M}_{\delta} - e^{\delta \mathcal{K}}}_{\diamond} \\
    &\leq 5\delta^2\Lambda^2 
    = \CO\left(\frac{\Lambda^2}{r^2(2\alpha + 2d_{\max} + 8\alpha\lambda\sqrt{M}\norm{\theta}_1+Kd_{\max}^2)^2}\right) \leq \CO\left(\frac{1}{r^2}\right).
  \end{aligned}
    \end{equation}

{By applying the approximation $r$ times,}  we have
  \begin{align}
    \norm{\mathcal{M}_{\delta}^r - e^{t\mathcal{K}}}_{\diamond} \leq \CO\left(\frac{1}{r}\right),
  \end{align}
  for evolution $t = r\delta = \Theta\left(\frac{1}{2\alpha + 2d_{\max} + 8\alpha\lambda\sqrt{M\norm{\theta}_1}+Kd_{\max}^2}\right)$. The parameter $r$ can be chosen larger enough so that this error is at most $\epsilon$.

  Further, conditioned on that we have obtained an approximation $\widetilde{\Gamma}_t$ of $\Gamma_t$, we can extract an approximation of $\ketbra{\psi}{\psi}$ by measuring the first $\log(4K+1)$ qubits of $\widetilde{\Gamma}_t$ and post-select the outcome being 0. According to \cref{eq: tr-bound}, this probability of the outcome being 0 is $\Omega(1/(2K+1))$. Now, the success probability is $\Omega(1/K)$. It follows from \cref{lemma:oaa} that using $\CO(\sqrt{K})$ iterations of oblivious amplitude amplification for isometries, we obtain the desired state.

  Now, the total evolution time we have simulated so far is $r\delta = \Theta\left(\frac{1}{2\alpha + 2d_{\max} + 8\alpha\lambda\sqrt{M}\norm{\theta}_1+Kd_{\max}^2}\right)$, and the total cost in terms of queries to $U_{H_S}$ and $U_{S_\beta}$ is $\CO(r)$. In the following, we show how to achieve poly-logarithmic cost.
\smallskip 
Note that when applying \cref{lemma:sum-to-be} to construct a block-encoding of $A_0 = I - i\delta H$ and applying \cref{lemma:block-encoding-channel} to implement the superoperator specified by Kraus operators $A_0, A_1, \ldots, A_K$, the first register (containing $\CO(\log K)$ qubits) is used for the $\ket{\mu}$ state in \cref{lemma:block-encoding-channel}, and the second register (containing $\CO(\log K)$ qubits) is used for state $B\ket{0}$ in \cref{lemma:sum-to-be}. The coefficients of the state in the two registers are concentrated to $\ket{0}\!\ket{0}$, which corresponds to $I$ (nothing to implement). More specifically, recall that the parameter $s_0$ in \cref{lemma:block-encoding-channel} is the block-encoding normalization factor for $I-i\delta H$, which can be written as  $s_0 = \sum_{j=1}^{2KM+3}y_i\alpha_j$, where $y_j$'s and $\alpha_j$'s are the parameters as used in \cref{lemma:sum-to-be}. 
For convenience, let $\widehat{\alpha}$ be upper bounded by $\widehat{\alpha} \leq \alpha + d_{\max} + 4\alpha\lambda\sqrt{M}\norm{\theta}_1$, so we can write $\sum_{j=1}^{2KM+3}y_i\alpha_j = 1+\delta\widehat{\alpha}$ according to \cref{lemma:block-coding-H}.
  As a result, the amplitude for $\ket{0}\ket{0}$ is
  \begin{align}
    &\frac{s_0}{\sqrt{\sum_{j=0}^Ks_j^2}}\cdot\frac{1}{\sqrt{\sum_{j=1}^{2KM+3}y_j\alpha_j}} = \sqrt{\frac{\sum_{j=1}^{2KM+3}y_j\alpha_j}{\sum_{j=0}^Ks_j^2}} \\
    &= \sqrt{\frac{1+\delta\widehat{\alpha}}{(1+\delta\widehat{\alpha})^2 + Kd_{\max}^2\delta}} \\
    &= \sqrt{\frac{1+\delta(\widehat{\alpha})}{1+2\delta\widehat{\alpha} +\delta Kd_{\max}^2 \!+\! \delta^2\widehat{\alpha}^2}} \\
    &=\sqrt{1-\delta(\widehat{\alpha} + Kd_{\max}^2 ) + \Theta(\delta^2(2\widehat{\alpha}+Kd_{\max}^2)^2)}.
  \end{align}

  Therefore, the probability that the first two registers are \emph{not} measured $0,0$ is proportional to\footnote{Note that we use the term ``proportional to'' because the actual probability is normalized according to the trace ratio of the first block and the whole matrix of $\Gamma$, which incurs a factor $O(1/K)$.}

  \begin{align}
    \delta\!\left(\widehat{\alpha} + Kd_{\max}^2 \right) + \Theta\left(\delta^2(2\widehat{\alpha}+Kd_{\max}^2)^2\right) = \CO\left(\frac{1}{r}\right).
  \end{align}
  We apply the techniques used in~\cite{CW17} (first introduced in~\cite{BCKS14}) to bypass the state preparation for $\ket{\mu}$ in \cref{lemma:sum-to-be} and for $B\ket{0}$ in \cref{lemma:sum-to-be}. Instead, we use a state with Hamming weight at most
  \begin{align}
    h = \CO\left(\frac{\log(1/\epsilon)}{\log\log(1/\epsilon)}\right),
  \end{align}
  to approximate the state after the state preparation procedure while causing error at most $\epsilon$. Following similar analysis as in~\cite{CW17}, the number of 1- and 2-qubit gates for implementing this compressed encoding procedure is $\CO(\log(1/\epsilon)h + K^2)$.
  Now, the number of queries to $U_{H_S}$ and $U_{S_\beta}$ becomes
  \begin{align}
    \CO\left(\frac{\sqrt{K}\log(1/\epsilon)}{\log\log(1/\epsilon)}\right).
  \end{align}
  For arbitrary simulation time $t$, we repeat this   {$\CO(t(2\alpha + 2d_{\max} + 8\alpha\lambda\sqrt{M}\norm{\theta}_1+Kd_{\max}^2)) = \CO(t(\alpha(1+\lambda\sqrt{M}\norm{\theta}_1)+Kd_{\max}^2))$}
  times where each segment has the error parameter $\epsilon/(t(\alpha(1+\lambda\sqrt{M}\norm{\theta}_1)+Kd_{\max}^2))$, which has total cost
  \begin{align}
    \CO\left(\tau\sqrt{K}\frac{\log(\tau/\epsilon)}{\log\log(\tau/\epsilon)}\right).
  \end{align}
  Here we have   {$\tau = t(\alpha(1+\lambda\sqrt{M}\norm{\theta}_1)+Kd_{\max}^2)$.}

  The additional 1- and 2-qubit gates are used in the compressed encoding procedure and the reflections in the oblivious amplitude amplification for isometries. This is bounded by
  \begin{align}
    \CO\left(\tau\sqrt{K}\left(\log(\tau/\epsilon)h + K^2\right)\right) = \CO\left(\frac{\tau\sqrt{K}\log^2(\tau/\epsilon)}{\log\log(\tau/\epsilon)} + \tau K^{2.5}\right).
  \end{align}

\section{Acknowledgments}

 CW thanks Yudong Cao and Peter D.~Johnson for helpful discussions on the HEOM approach for modeling non-Markovian open quantum systems. XL's research is supported by the National Science Foundation Grants DMS-2111221.

\bibliographystyle{alpha}
\bibliography{ref,gqme}

\appendix

\section{The derivation of the extended stochastic dynamics \cref{eq: ext-iv}}\label{a: deriv}

 As a building block to approximate  stationary Gaussian processes  { in the NMSSE (\cref{eq: sse}),}  
 we consider the Ornstein–Uhlenbeck (OU) type of processes \cite{risken1984fokker}, expressed as the solution of the following linear SDEs, 
 \begin{equation}\label{eq: cou}
    i \dot{\zeta}_k = - d_k \zeta_k + 
    \gamma_k \dot{w}_j(t), 
\end{equation}
for $k=1,2,\ldots, K$, with $K \geq M$. Here $\gamma_k\ge 0$ and $d_k$ is a complex number with positive imaginary part.  In addition, each of these independent OU processes has an initial variance $\theta_k^2$, 
 $$\mathbb{E}[\zeta_k(0)^\dagger \zeta_k(0)] = \theta_k^2.$$
If we pick $\gamma_k$,  such that $\gamma_k^2= 2\text{Im}(d_k),$
then $\zeta_k(t) $ is a stationary Gaussian process with correlation,
\begin{equation}\label{eq: corr-zeta}
    \mathbb{E}[\zeta_j^\dagger(t) \zeta_k(t')]= \theta_j^2 \delta_{j,k} \exp \big( -id_j^* (t- t') \big).  
\end{equation}

 We now construct an ansatz for approximating the bath correlation function. The case when $M=1$ has been thoroughly investigated in \cite{ritschel2014analytic}. As a time correlation function, $C(t)$ can be expressed in terms of its power spectrum, denoted here by $\widehat{C}(\omega),$ as a Fourier integral,
\begin{equation}\label{eq: Ct-Fourier}
    C(t) = \int_{-\infty}^{\infty} \widehat{C}(\omega) e^{-i\omega t} d\omega.
\end{equation}
 { Known as the power spectrum, $\widehat{C}(\omega)$ is Hermitian and positive semidefinite.} Hence it can be diagonalized using the transformation:  {
\[ \widehat{C}(\omega) = \sum_{j} \lambda_j(\omega)^2 \ket{Q_j(\omega)}  \bra{Q_j(\omega)}.\] 
Therefore, a direct numerical approximation of \cref{eq: Ct-Fourier} will certainly lead to an ansatz like \cref{eq:correlation}.

Another alternative is to use a contour integral in the upper half plane, and the Cauchy residue theorem, reducing  the Fourier integral in \cref{eq: Ct-Fourier} to a summation over poles \cite{ritschel2014analytic}. This approach will also lead to the ansatz like \cref{eq:correlation}.}

To ensure the approximation accuracy, in practice, the ansatz in \cref{eq:correlation} is often obtained by a least-squares approach. In addition, we will consider the poles $d_k$ within a cut-off frequency $d_\text{max}$, \ie,  {$\abs{d_k} \leq d_\text{max}$.}

 {
Next we show that we can find an approximation of  $\eta(t)$ that exactly satisfies the relation in \cref{eq: fdt}, where $C(t)$ is represented by \cref{eq:correlation}. }
Specifically, we approximate the noise $\eta(t)$ using the OU processes  $\zeta_k(t)$ from \cref{eq: cou},   
\begin{equation} \label{eq: eta-color}
    \eta_\beta(t)= \sum_{k=1}^K \braket{Q_k}{\beta} \theta_k \zeta_k(t).
\end{equation}
In light of \cref{eq: corr-zeta}, the approximation in \cref{eq: eta-color}  corresponds to an approximation of the time correlation function in \cref{eq:correlation}.

For the case of a  single  interaction term, where $M=1$ and the matrix $C(t)$ is corresponding to a scalar function, 
this reduces to the standard approach using a sum of exponentials \cite{ritschel2014analytic}. But \cref{eq:correlation} provides a more general scheme to handle multiple interaction terms.

Next we show how this can simplify the NMSSE in \cref{eq: sse} when the bath correlation functions are expressed as \cref{eq:correlation}. We first define 
the operators $T_j$ according to \cref{eq: T}. 
This simplifies  the NMSSE in \cref{eq: sse} to,
\begin{equation}\label{eq: sse2}
    i \partial_t \psi = \hat H_S \psi - i \lambda^2\sum_{k=1}^K  \int_0^t  T^\dagger_k  
    e^{-i (\hat H_S + d_k^* )\tau } T_k  \psi(t-\tau) d\tau +\lambda \sum_{j=1}^K   \zeta_j(t)  T_j  \psi(t).
\end{equation}
Here the multiplication by $ d_k^*$ represents the operator $d_k^*I_S$,
where $I_S$ is the identity matrix.

\cref{eq: sse2} still contains memory. But compared to the original NMSSE in \cref{eq: sse}, the correlation $C(t)$ has been broken down to exponential functions, which can be combined with the unitary operator $e^{-i t \hat H_S}.$ In addition, the noise is expressed as the OU process $\zeta_j$'s, which can be treated using It\^o calculus. 

Next, we demonstrate how to embed the dynamics in \cref{eq: sse2} into an extended, but Markovian, dynamics.  
The simple observation that motivated the Markovian embedding is that a convolution integral in time can be represented as the solution of a differential equation:
\begin{equation}
 f(t)= \int_0^t \exp(-a(t-\tau)) g(\tau) d\tau \quad \Longrightarrow   f' = -a f + g, \; f(0)=0.
\end{equation}
Here $f$ will be regarded as an auxiliary variable, introduced to reduce the memory integral. Compared to computing the integral at every time step using direct quadrature formulas, it is much more efficient to solve the differential equation. 

We now show that we can use the same idea to define auxiliary orbitals. Specifically, by inserting \cref{eq:correlation}  in the SSE in \cref{eq: sse}, and by letting,  for $k=1,2,\ldots, K,$
\[ \chi_k^\i(t) =  \int_0^t 
    e^{-i (\hat H_S + d_k^*) \tau }   T_k \psi(t-\tau) d\tau,\]
we arrive at the equation,
\begin{equation}\label{eq: ch1}
    i\partial_t \chi_k^\i = (H_S + d_k^* ) \chi_k^\i + i  \lambda T_k \psi(t).  
\end{equation}
This simplifies \cref{eq: sse2} 
to,
\begin{equation}\label{eq: psi-1}
    i \partial_t \psi = \hat H_S \psi - i \lambda^2 \sum_{j=1}^K T_j^\dagger \chi^\i_j + \lambda \sum_{j=1}^K T_j \psi(t) \zeta_j(t).
 \end{equation}
 
To incorporate the noise term, we define,
 \begin{equation}\label{eq: chi2}
     \chi^\ii_k(t)= i\psi(t) \zeta_k(t).
 \end{equation}
This reduces \cref{eq: psi-1} to,
\begin{equation}\label{eq: psi-2}
    i \partial_t \psi = \hat H_S \psi - i \lambda^2 \sum_{j=1}^K T_j^\dagger \chi^\i_j -i  \lambda \sum_{j=1}^K T_j \chi_j^\ii(t).
 \end{equation}

It remains to derive a closed-from equation for \cref{eq: chi2}. Using the It\^o's formula, we obtain, 
 \begin{equation}\label{eq: ito}
      i\partial_t \chi^\ii_k =  (\hat H_S - d_k) \chi^\ii_k   +i  \gamma_k \psi(t) \dot{w}_k + \lambda \zeta_k(t) T_k^\dagger \chi^\i_k + \lambda \zeta_k(t) T_k \chi^\ii_k.
 \end{equation}
 When the coupling parameter $\lambda $ is sufficiently small, one can add or drop $\mathcal{O}(\lambda)$ terms, which will contribute to an  $\mathcal{O}(\lambda^2)$ error when substituted into \cref{eq: psi-2}. If such an error is acceptable, by
 collecting equations, we have extended Schr\"odinger equations,
\begin{equation}\label{eq: ext-ii}
\left\{
    \begin{array}{l}
        i \partial_t \psi =\dsp  \hat H_S \psi - i \lambda    \sum_{k=1}^K T_k^\dagger  \chi^\i_k - i \lambda   \sum_k T_k \chi^\ii_k, \\
    \begin{array}{ll}
            i\partial_t \chi^\i_k =& (\hat H_S + d^*_k ) \chi^\i_k + i \lambda T_k \psi(t),\\
       i\partial_t \chi^\ii_k = & (\hat H_S - d_k) \chi^\ii_k  + i \lambda T_k^\dagger  \psi(t)  +  i  \gamma_k \psi(t) \dot{w}_k.
    \end{array}
    \qquad       k=1,2,\ldots, K.\end{array}
     \right.
\end{equation}
 
Rather than dropping $\mathcal{O}(\lambda)$ terms in \cref{eq: ito}, one can continue such a procedure and incorporate the high-order terms. Specifically, noticing the similarity between the $\mathcal{O}(\lambda)$ terms in \cref{eq: ito} with \cref{eq: chi2}, we define,
\begin{equation}\label{eq: chi34}
        \chi_k^\iii= i \zeta_k(t) \chi_k^\i, \quad  \chi_k^\iiii= i \zeta_k(t) \chi_k^\ii.
\end{equation}
By repeating the above procedure, one can derive similar equations for these auxiliary wave functions. These embedding steps yield the  extended Schr\"odinger equations (ESE) \cref{eq: ext-iv}.

\section{The proof of \cref{lmm: upert}    }\label{p: upert}
\begin{proof}
Our proof will mainly target statement (4). The rest of the lemma will become self-evident throughout the proof. 
In light of the structure of the GQME in \cref{eq: qme}, it is enough to consider the case $K=1.$ In this case, $\Gamma$ can be viewed as a $5\times 5$ block matrix. We first write $V_1$ in a block matrix form,
\[V_1= \left(\begin{array}{cc}
   0  & 0  \\
  R_1 & 0  
\end{array} \right), \] 
where $R_1$ is a $3\times 3$ block matrix with diagonals $\sqrt{2\nu_1}I_S, \sqrt{2\nu_1}I_S$ and $2\sqrt{2\nu_1}I_S$. With direct calculations, we can show that the last term in \cref{eq: qme} can be written as,
\[V_1\Gamma V_1^\dagger  = \left(\begin{array}{cc}
   0  & 0  \\
  0 & R_1 \Gamma_{0:2,0:2} R_1^\dagger   
\end{array} \right). \] 
Here $\Gamma_{0:2,0:2}$ refers to the first $3\times3$ sub-matrix of $\Gamma$. 

We first look at the scenario when $\Gamma(0)$ is block diagonal.  
The zero blocks in $V_1\Gamma V_1^\dagger$, along with the observation that $H_0$ is block diagonal, imply that the off-diagonals do not change. For the diagonal blocks of $\Gamma(t)$, we first have,
\begin{equation}\label{eq: g00-11}
\begin{aligned}
     \Gamma_{0,0}(t) = & U_S(t) \rho_S(0) U_S(t)^\dagger,\\ 
\Gamma_{1,1}(t)= &\exp(-it(-d_1+d_1^*) )  U_S(t) \Gamma_{1,1}(0) U_S(t)^\dagger.
\end{aligned}
\end{equation}
As a result, these two blocks have norms given respectively by,
\[ \| \Gamma_{0,0}(t) \| = \| \Gamma_{0,0}(0) \|, \quad   \| \Gamma_{1,1}(t) \| = \| \Gamma_{1,1}(0) \| e^{-2\nu_1 t}. \] 
The next three block diagonals will pick up non-homogeneous terms. For instance, we have,
\[ \partial_t \Gamma_{2,2}= -i[H_s - d_1, \Gamma_{2,2}] + 2 \nu_1 \Gamma_{0,0}. \]
Using the variation-of-constant formula, we have,
\begin{equation}\label{eq: g22}
    \Gamma_{2,2}(t) = e^{-2\nu_1 t} U_S(t) \Gamma_{2,2}(0) U_S(t)^\dagger 
+2 \nu_1 \int_0^t e^{-2\nu_1 
\tau}  U_S(\tau ) \Gamma_{0,0}(t-\tau) U_S(\tau )^\dagger d\tau.
\end{equation}

Also by noticing that $\|\Gamma_{1,1}(t)\|$ is constant in time,  the diagonal block $\Gamma_{3,3}$ can be bounded directly as,
\begin{equation}
   \| \Gamma_{2,2}(t)\| \leq \| \Gamma_{2,2}(0)\| \exp(-2\nu_1 t) + \|\Gamma_{0,0}(0)\| (1 - e^{-2\nu_1 t}).
\end{equation}
The right hand side remains bounded for all time. 
Similarly, the next diagonal block can be expressed as, 
\begin{equation}\label{eq: g33}
\begin{aligned}
    \Gamma_{3,3}(t) &=\exp(-4\nu_1 t) U_S(t) \Gamma_{3,3}(0) U_S(t)^\dagger 
\\
&+2 \nu_1 \int_0^t \exp(-4\nu_1 t) U_S(\tau ) \Gamma_{1,1}(t-\tau) U_S(\tau )^\dagger d\tau.
\end{aligned}
\end{equation}
Notice that $\| \Gamma_{1,1}(t)\|$ is proportional to $\exp(-2\nu_1 t)$. Essentially, what leads to the boundedness of the solution is the fact that there is no secular term, implying that $\| \Gamma_{3,3}(t)\|$ follows a similar bound as $\| \Gamma_{2,2}(t)\|_2$. The estimate of $\| \Gamma_{4,4}(t)\|$ follows the same steps. 

\medskip

We now turn to the off-diagonal blocks. By direct calculations, we have, for $j=0,1$,  $k=1, 2, 3, 4$,  and $k>j$
\[ \partial_t \Gamma_{j,k} = -i \big(H_S \Gamma_{j,k}-\Gamma_{j,k}(H_s +d_1) \big),\]
which yields,
\begin{equation}
    \Gamma_{j,k}(t) = \exp(-i d_1^* t) U_S(t)\Gamma_{j,k}(0) U_S^{\dagger}(t)
\Longrightarrow \|\Gamma_{j,k}(t)\| = \|\Gamma_{j,k}(0) \| \exp(-\nu_1 t).
\end{equation}
For the remaining off-diagonal entries, we will check $\Gamma_{2,3}$ as an example. It follows the equation,
\[ \partial_t \Gamma_{2,3}
= -i \big((H_s-d_1)\Gamma_{2,3} - \Gamma_{2,3} (H_s + d_1 + d_1^*)\big) + 2\nu_1 \Gamma_{0,1}. \]
This implies that,
\[ \|\Gamma_{3,4}(t) \| \leq \|\Gamma_{3,4}(0) \| e^{-3\nu_1 t} 
+ \|\Gamma_{1,2}(0) \| 2 e^{-2\nu_1 t}(1 - e^{-\nu_1 t}). \]
We also see from these calculations that these off-diagonal blocks will become zero if the initial matrix $\Gamma(0)$ is block diagonal.

By examining the block entries of $\Gamma(t)$, we have shown the boundedness of the solution stated in \cref{lmm: upert} for all time. 
\end{proof}

\section{The Proof of \cref{eq: thm-rhos}}
\begin{proof}
We will prove the asymptotic bound using an expansion of the \cref{eq: lvn}. More specifically, we write the total density matrix in terms of powers of $\lambda,$
\begin{equation}\label{eq: exp-rho}
    \rho(t) = \rho^{(0)}(t) + \lambda  \rho^{(1)}(t) + \lambda^2 \rho^{(2)}(t) + \mathcal{O}(\lambda^3).
\end{equation}
By taking a partial trace over the bath space, we obtain a similar expansion for 
$\rho_S:$
\begin{equation}\label{eq: exp-rhoS}
    \rho_S(t) = \rho_S^{(0)}(t) + \lambda  \rho_S^{(1)}(t) + \lambda^2 \rho_S^{(2)}(t) + \mathcal{O}(\lambda^3).
\end{equation}

By inserting \cref{eq: exp-rho} into \cref{eq: lvn} and separate terms of different order, we arrive at
\begin{equation}\label{eq: rho123}
    \begin{aligned}
         i\partial_t \rho^{(0)} &= [H_S \otimes I_B + I_S \otimes H_B, \rho^{(0)} ], \;\;  \rho^{(0)}(0)=\rho(0), \\ 
          i\partial_t \rho^{(1)} &= [H_S \otimes I_B + I_S \otimes H_B, \rho^{(1)} ] + \sum_{\alpha=1}^M [ S_\alpha \otimes B_\alpha, \rho^{(0)}], \;\; \rho^{(1)}(0) =0, \\ 
           i\partial_t \rho^{(2)} &= [H_S \otimes I_B + I_S \otimes H_B, \rho^{(2)} ]+ \sum_{\alpha=1}^M [ S_\alpha \otimes B_\alpha, \rho^{(1)}], \;\; \rho^{(2)}(0)=0.
    \end{aligned}
\end{equation}
Within this expansion, the dynamics of $\rho^{(0)}$ contains no coupling. Let 
\[
U(t)= U_S(t) \otimes U_B(t), \quad  U_S= \exp \left( -i t H_S\right), \; U_B(t)= \exp \left( -i t H_B\right),\]
be the unitary operators. Then we have,
\begin{equation}
    \rho^{(0)}(t) = U(t) \rho^{(0)}(0) U(t)^\dagger.  
\end{equation}
Since all the operators on the right hand side are in tensor product forms, $\rho^{(0)}(t)$ remains a tensor product:
\begin{equation}\label{eq: rho0t}
     \rho^{(0)}(t) =  \rho_S^{(0)}(t) \otimes \rho_B.
\end{equation}
Here $\rho_S^{(0)}(t) = U_S(t) \rho_S(0) U_S(t)^\dagger.$ Meanwhile, the matrix $\rho_B$ stays because it commutes with $H_B.$ 

The term $\rho^{(1)}$ can be expressed using the variation-of-constant formula,
\[
\begin{aligned}
\rho^{(1)}(t) =& -i \sum_\alpha \int_0^t U(t-t') S_\alpha \otimes B_\alpha  \rho^{(0)}(t')    U(t-t')^\dagger dt' \\ &+ i  \sum_\alpha \int_0^t U(t-t')  \rho^{(0)}(t')  S_\alpha \otimes B_\alpha    U(t-t')^\dagger dt'.      
\end{aligned}
\]
In light of \cref{eq: rho0t}, we can make the same observation that $\rho^{(1)}(t)$ consists of terms that are tensor products. By following standard notations \cite{breuer2002theory}, \ie,
\begin{equation}
    S_\alpha(t)= U_S(t)^\dagger S_\alpha  U_S(t), \quad   B_\alpha(t)= U_B(t)^\dagger B_\alpha  U_B(t),
\end{equation}
we can simplify $\rho^{(1)}(t)$ as follows,
\[
\begin{aligned}
\rho^{(1)}(t) =& -i \sum_\alpha \int_0^t  S_\alpha(t'-t)\rho_S^{(0)}(t)   \otimes B_\alpha(t'-t) \rho_B dt' \\ &+ i  \sum_\alpha \int_0^t  \rho_S^{(0)}(t)   S_\alpha(t'-t)  \otimes B_\alpha(t'-t)  \rho_B dt'.      
\end{aligned}
\]
Since $\rho_B$ commutes with $H_B$, it commutes with $U_B.$ Therefore,
\[ \tr(B_\alpha(t'-t) \rho_B)= \tr(B_\alpha \rho_B) =0,\]
which shows that 
\[  \tr_B \left( \rho^{(1)}(t) \right) =0.\]
Therefore, $\rho_S^{(1)}(t)=0$. The correction to $\rho_S$ comes from $\rho_S^{(2)}(t),$ which is similarly expressed as,
\[
\begin{aligned}
\rho^{(2)}(t) \!=& -i \sum_\alpha \int_0^t U(t-t') S_\alpha \otimes B_\alpha  \rho^{(1)}(t')    U(t-t')^\dagger dt' \\ &+ i  \sum_\alpha \int_0^t U(t-t')  \rho^{(1)}(t')  S_\alpha \otimes B_\alpha    U(t-t')^\dagger dt',\\
=&\! -\! \sum_\alpha \!\sum_\beta \!
\int_0^t \int_0^{t'}  S_\alpha(t'-t) S_\beta(\tau-t) \rho_S^{(0)}(t) \!\otimes \!
 B_\alpha(t'-t) B_\beta(\tau-t) \rho_B d\tau dt',\\
 &\!+\! \sum_\alpha \!\sum_\beta \!
\int_0^t \int_0^{t'} S_\beta(\tau-t) \rho_S^{(0)}(t)  S_\alpha(t'-t)   \!\otimes \!
 B_\beta(\tau-t) \rho_B  B_\alpha(t'-t)  d\tau dt',\\
 &\!+\! \sum_\alpha \!\sum_\beta \!
\int_0^t \int_0^{t'}  S_\alpha(t'-t)  \rho_S^{(0)}(t)  S_\beta(\tau-t) \!\otimes \!
 B_\alpha(t'-t) \rho_B  B_\beta(\tau-t) d\tau dt',\\
 &\!-\! \sum_\alpha \!\sum_\beta \!
\int_0^t \int_0^{t'} \rho_S^{(0)}(t)  S_\beta(\tau-t)  S_\alpha(t'-t)   \!\otimes \!
 \rho_B  B_\beta(\tau-t)  B_\alpha(t'-t)  d\tau dt'.\\
\end{aligned}
\]
Invoking the bath correlation function,
\begin{equation}
    C_{\alpha,\beta}(t)= \tr(B_\alpha(t) B_\beta \rho_B),
\end{equation}
we arrive at an expansion of $\rho_S(t)$ up to $\co{\lambda^2}$,
\begin{equation}\label{eq: rhoS-exp}
\begin{aligned}
   \rho_S(t) = \rho_S^{(0)}(t) & - \lambda^2  \sum_\alpha \sum_\beta 
\int_0^t \int_0^{t'}  S_\alpha(t'-t) S_\beta(\tau-t) \rho_S^{(0)}(t) C_{\alpha,\beta}(t'-\tau) d\tau dt'\\
&+  \lambda^2  \sum_\alpha \sum_\beta  \int_0^t \int_0^{t'} S_\beta(\tau-t) \rho_S^{(0)}(t)  S_\alpha(t'-t) C_{\alpha,\beta}(t'-\tau) d\tau dt' \\
&+ \lambda^2 \sum_\alpha \sum_\beta 
\int_0^t \int_0^{t'}  S_\alpha(t'-t)  \rho_S^{(0)}(t)  S_\beta(\tau-t)   C_{\beta,\alpha}(t'-\tau)^* d\tau dt' \\ 
&- \lambda^2  \sum_\alpha \sum_\beta \int_0^t \int_0^{t'} \rho_S^{(0)}(t)  S_\beta(\tau-t)  S_\alpha(t'-t)  C_{\alpha,\beta}(t'-\tau)^* d\tau dt' \\
& + \mathcal{O}(\lambda^3).
\end{aligned}
\end{equation}
Here we have used the property of the bath correlation function: $C_{\alpha,\beta}(t) = C_{\beta,\alpha}(-t)^*.$
Now we incorporate the function form of the bath correlation function in \cref{eq:correlation}. We find that,
\begin{equation}\label{eq: rhoS-exp'}
\begin{aligned}
   \rho_S(t) = \rho_S^{(0)}(t) & - \lambda^2  \sum_k 
\int_0^t \int_0^{t'}  T_k(t'-t) T_k(\tau-t) \rho_S^{(0)}(t) e^{-i(t'-\tau)d_k t } d\tau dt'\\
&+  \lambda^2  \sum_k  \int_0^t \int_0^{t'} T_k(\tau-t) \rho_S^{(0)}(t)  T_k(t'-t) e^{-i(t'-\tau)d_k t } d\tau dt' \\
&+ \lambda^2  \sum_k 
\int_0^t \int_0^{t'}  T_k(t'-t)  \rho_S^{(0)}(t)  T_k(\tau-t)   e^{i(t'-\tau)d_k t }d\tau dt' \\ 
&- \lambda^2   \sum_k \int_0^t \int_0^{t'} \rho_S^{(0)}(t)  T_k(\tau-t)  T_k(t'-t)  e^{i(t'-\tau)d_k t } d\tau dt' \\
& + \mathcal{O}(\lambda^3).
\end{aligned}
\end{equation}

Now we show that the GQME in \cref{eq: qme} has an asymptotic expansion that is consistent with \cref{eq: rhoS-exp}. Expanding $\Gamma$ as,
\[\Gamma(t)=  \Gamma^{(0)}(t) + \lambda \Gamma^{(1)}(t) + \lambda^2 \Gamma^{(2)}(t) + \mathcal{O}(\lambda^3),  \] 
and substituting it into \cref{eq: qme}, one gets, 
\begin{equation}\label{eq: G-exp}
\begin{aligned}
\Gamma^{(0)}(t) &= \exp\left( t\mathcal{L}_0 \right)  \Gamma(0), \\
\Gamma^{(1)}(t) &= -i \int_0^t \exp\left( (t-t')\mathcal{L}_0 \right)
[H_1, \Gamma^{(0)}(t')] dt',\\
\Gamma^{(2)}(t) &= -i \int_0^t \exp\left( (t-t')\mathcal{L}_0 \right)
[H_1, \Gamma^{(1)}(t')] dt'.
\end{aligned}
\end{equation}
The leading term $\Gamma^{(0)}(t)$ has been shown to be a block diagonal matrix in the previous section. The first diagonal block is precisely $ \rho_S^{(0)}(t)$, which is consistent with the $\co{1}$ term in \cref{eq: rhoS-exp}. To examine $\Gamma^{(1)}(t)$, we first notice that the commutator in the integral has the following structure,
\begin{equation}\label{eq Xi0-block}
  \Xi^{(0)}(t') = [H_1, \Gamma^{(0)}(t')]= i
    \left[
    \begin{array}{cccccc}
         0  & \Xi_{0,1}^{(0)}(t') &\Xi_{0,2}^{(0)}(t') & 0& 0 & \cdots  \\
          \Xi_{1,0}^{(0)}(t') & 0 &0 &  0& 0 & \cdots\\
          \Xi_{2,1}^{(0)}(t') & 0 &0 & \Xi_{3,4}^{(0)}(t') & \Xi_{3,5}^{(0)}(t') & \cdots\\
           0 & 0 & \Xi_{4,3}^{(0)}(t') & 0 &0  & \cdots \\ 
            0 & 0 & \Xi_{5,3}^{(0)}(t') & 0 &0  & \cdots \\ 
          \vdots & \vdots & \vdots &\vdots & \vdots & \ddots \\
    \end{array}\right].
\end{equation}
Here we highlighted the leading $5\times 5$ submatrix and the zero blocks within it. This is enough for the purpose of the proof. 

By inspecting the solutions that correspond to $\exp\left( t\mathcal{L}_0 \right)$, we find that the first diagonal block of $\int_0^t \exp\left( (t-t')\mathcal{L}_0 \right) \Xi^{(0)}(t')$  is zero. Therefore,  $\Gamma^{(1)}$ has no contribution to the density matrix $\rho_S$, \ie, there is no $\co{\lambda}$ term. This is consistent with \cref{eq: rhoS-exp}.

To proceed further, we have to identify the nonzero blocks in \cref{eq Xi0-block}. With direct calculations, we have, 
\[
\begin{aligned}
\Xi_{0,4k-3}^{(0)}(t) =& - \Gamma_{1,1}^{(0)}(t) T_j = -  \rho_{S}^{(0)}(t) T_j, \\  
\Xi_{0,4k-2}^{(0)}(t) =&  T_k \Gamma_{4k-2,4k-2}^{(0)}(t).
\end{aligned}
\] 
From \cref{eq: G-exp}, we can extract the equation,
\[
i \partial_t \Gamma_{0,4k-3}^{(1)} = H_S\Gamma_{0,4k-3}^{(1)} - \Gamma_{0,4k-3}^{(1)} (H_S + d_k).
\] 
Combining the two equations above, we obtain,
\[
\begin{aligned}
\Gamma_{0,4k-3}^{(1)}(t') &=  \int_0^{t'} U_S(t'-t'')
\rho_S^{(0)}(t'') T_k U_S(t'-t'')^\dagger e^{i (t'-t'') d_k} dt''\\
&= \int_0^{t'} \rho_S^{(0)}(t') T_k(t''-t') e^{i (t'-t'') d_k} dt''.
\end{aligned}
\]

Again using the solution properties associated with the superoperator $\exp\left( t\mathcal{L}_0 \right)$, we have that the first block of $\Gamma^{(1)}$ is given by,
\[
\begin{aligned}
&- i \int_0^t U_S(t-t') \Xi^{(1)}(t') U_S(t-t')^\dagger dt'  \\
=& \sum_k \int_0^t U_S(t-t') \big( [\Gamma_{0,4k-3}(t'), T_k] +  [\Gamma_{0,4k-2}(t'), T_k^\dagger]  \big) U_S(t-t')^\dagger dt'.
\end{aligned}
\]
Similar to \cref{eq: rhoS-exp'}, we have also obtained four terms after expanding the commutators. Let us examine the first integral term,
\[
\begin{aligned}
&\sum_k  \int_0^t \int_0^{t'} U_S(t-t') \rho_S^{(0)}(t') T_k(t''-t') T_k e^{i (t'-t'') d_k}  U_S(t-t')^\dagger ) dt''\\
&= \sum_k \int_0^t \int_0^{t'} \rho_S^{(0)}(t') T_k(t''-t) T_k(t'-t) e^{i (t'-t'') d_k} dt'' dt'. 
\end{aligned}
\]
This is the same as the first last integral in \cref{eq: rhoS-exp'}. The rest of the integrals can be similarly verified. 
\end{proof}

\section{The proof of \cref{eq: thm-gamma}}
\begin{proof}
  In \cref{lmm: upert}, we have proved a bound for the case when $\lambda=0.$ Denote the solution by $\Gamma_0(t)$, and the solution operator by $\exp(t\mathcal{L}_0)$. Therefore, the GQME in \cref{eq: qme} can be written in a perturbation form,
\begin{equation}\label{eq: gamma0-gamma}
    \partial_t \Gamma = \mathcal{L}_0 \Gamma -i \lambda (H_1 \Gamma - \Gamma H_1^\dagger). 
\end{equation}

The solution can be recast in an integral form,
\begin{equation}
   \Gamma(t) = \Gamma_0(t) - i \lambda  \int_0^t \exp\left((t-\tau) \mathcal{L}_0\right) (H_1 \Gamma - \Gamma H_1^\dagger) d\tau. 
\end{equation}

From \cref{lmm: p-term},  the super-operator $\exp(t\mathcal{L}_0)$ is bounded. Therefore, we have,
\[ \| \Gamma(t) -  \Gamma_0(t) \| \le   2\lambda C \|H_1\|  \int_0^t \|\Gamma(t) \| dt.\]
As a result, the bound can be obtained by directly using Gronwall's inequality.

Since $ \Gamma_0(t) $ is block diagonal, the above inequality also shows that 
the off-diagonal blocks of $\Gamma(t)$ are of order $\lambda.$ Finally,
the bounds for the trace can be verified from \cref{eq: g00-11}, \cref{eq: g22} and \cref{eq: g33} in  the proof of \cref{lmm: upert}.
\end{proof}

\section{The proof of \cref{eq: tr-bound}}
\begin{proof}
 From \cref{eq: gamma0-gamma}, we may take the trace. 
 \begin{equation}
      \tr(\Gamma(t) ) - \tr(\Gamma^\sz(t) )  = - i \lambda  \int_0^t  \tr( \Sigma(t,\tau) ) d\tau, 
 \end{equation}
 where, 
 \[ \Sigma(t,\tau) =\exp(t\mathcal{L}_0) \Xi,\]
 with $\Xi$ from \cref{eq: Xi}.
By the definition of $\exp(t\mathcal{L}_0)$, $\Sigma(t,\tau)$ is the solution of the equation,
 \[ \partial_t \Sigma = \mathcal{L}_0 \Sigma, \quad  \Sigma(\tau, \tau)=
 (H_1 \Gamma - \Gamma H_1^\dagger).
 \]
 Using the property in \cref{eq: tr-bd} of the super-operator $\exp(t\mathcal{L}_0)$ in \cref{eq: thm-gamma}, we obtain the bound,
 \[
\mid   \tr \big( \Sigma(t,\tau) \big)\mid   \leq 
 3K \mid   \tr\big( \Sigma_{0,0} (\tau, \tau) \big)\mid    + \sum_{k>0} \mid   \tr\big( \Sigma_{k,k} (\tau, \tau) \big)\mid  . 
 \]
 
 We now invoke the estimate in \cref{lmm: p-term}. The trace of each diagonal block of $\Sigma(\tau, \tau)$ is bounded by the off-diagonal blocks of $\Gamma(t)$, which is of order $\lambda$. Namely, there exists a constant, such that,
 \[\mid   \tr \big( \Sigma(\tau,\tau) \big)\mid   \leq  C \lambda.\]
Collecting terms, we have, 
\[
\mid \tr(\Gamma(t) ) - \tr(\Gamma^\sz(t) )\mid   \le Ct  \lambda^2.  
\] 

Alternatively, one can start with \cref{eq: gamma0-gamma}, and apply the formula again to replace the $\Gamma(t)$ in the integral:
\[
\begin{aligned}
     \Gamma(t)  - \Gamma^\sz(t)   =&  - i \lambda  \int_0^t 
 \exp\left((t-\tau) \mathcal{L}_0\right) (H_1 \Gamma^\sz(\tau) - \Gamma^\sz(\tau) H_1^\dagger) d\tau\\  
  & - \lambda^2  \int_0^t 
  \int_0^\tau 
  \exp\left((t-\tau) \mathcal{L}_0\right) 
  (H_1 \Sigma(\tau,s) - \Sigma(\tau,s) H_1^\dagger) ds
   d\tau + \mathcal{O}(\lambda^3).
\end{aligned}
\]
Here $\Sigma(\tau,s) = \exp\left((\tau-s) \mathcal{L}_0\right) \Xi_0$, with
\begin{equation}\label{eq: Xi0}
    \Xi^\sz=H_1 \Gamma_0(s) - \Gamma_0(s) H_1^\dagger.
\end{equation}

For the $\mathcal{O}(\lambda)$ term, we notice that $\Gamma\sz$ is block diagonal, and so in light of  \cref{lmm: p-term}, the matrix $\Xi^\sz$ has zero trace. This implies that,
\[ \tr\big( \Gamma(t) ) - \Gamma^\sz(t) \big) =  \mathcal{O}(\lambda^2).\] 

For the $\mathcal{O}(\lambda^2)$ term, from \cref{eq: xi00'}, we have, the trace of the first block of $ H_1 \Sigma(\tau,s) - \Sigma(\tau,s) H_1^\dagger$ is given by,
\begin{equation}\label{eq: trace1}
    \sum_{k=1}^K  \tr\Big( (\Sigma _{0,4k-3} - \Sigma_{4k-2,0}) T_k \Big)+ \sum_{k=1}^K \tr\Big( T_k^\dagger ( \Sigma_{0,4k-2} - \Sigma_{4k-3,0}) \Big).
\end{equation}

Meanwhile, from the proof of \cref{lmm: upert}, we see that 
the superoperator does not change the off-diagonal blocks in the first row and column. Thus, $\Sigma_{0,j}(\tau, s)= \Sigma_{0,j}(s, s) = \Xi_{0,j}^\sz(s)$.

With direct matrix  multiplications, we find that,
\[ \Xi_{4k-3,0}=  -T_k \Gamma_{4k-3,4k-3} - \Gamma_{0,0} T_k^\dagger, \quad \Xi_{4k-2,0} = T_k \Gamma_{4k-2,4k-2}.  \] 
From the proof of \cref{lmm: upert}, we also have $\Gamma_{4k-3,4k-3}(t)=0.$ Combining these steps, we find that
the trace in \cref{eq: trace1} is zero. 
Therefore, we have
\[ \tr(\rho_S(t)) = 1 + \mathcal{O}(\lambda^3). \] 
\end{proof}

\section{The proof of \cref{lemma:dist-m-k}}
\label{sec:proof-dist-m-k}
\begin{proof}
  We use an intermediate superoperator $\mathcal{I}+\delta\mathcal{K}$.   {Assume the Hilbert space $\mathcal{K}$ is acting on has dimension $N$. Consider} a Hilbert space of arbitrary dimension $N'$. For any operator $Q$   {acting} on $\bbc^{N}\otimes\bbc^{N'}$ with $\norm{Q}_1 = 1$, we have
  \begin{equation}
  \begin{aligned}
   & \norm{(\mathcal{M}_{\delta}\otimes\mathcal{I}_{N'} - (\mathcal{I}_{N} + \delta\mathcal{K})\otimes\mathcal{I}_{N'})(Q)}_1 \\ &= \norm{\sum_{j=0}^m(A_j\otimes I)Q(A_j\otimes I)^{\dag} - (Q + \delta(\mathcal{K}\otimes\mathcal{I}_{N'})(Q)}_1 \\                                                                            &= \norm{\delta^2(H\otimes I)Q(H^{\dag}\otimes I)}_1 \\
           &\leq \norm{H\otimes I}^2\\
                       &\leq (\delta\Lambda)^2.
  \end{aligned}
  \end{equation}
  Now, we have
  \begin{align}
    \label{eq:dist-1}
    \norm{(\mathcal{M}_{\delta} - (\mathcal{I}_{N} + \delta\mathcal{K})}_{\diamond} \leq (\delta\Lambda)^2.
  \end{align}

  To bound the distance between $\mathcal{I}+\delta\mathcal{K}$ and $e^{\delta\mathcal{K}}$, we assume $0 \leq \delta\norm{\mathcal{K}}_{\diamond} \leq 1$. Consider any $X$ such that $\norm{X}_1 \leq 1$, we have
  \begin{align}
    \norm{(e^{\delta\mathcal{K}} - (\mathcal{I}+\delta\mathcal{K}))(X)}_1 &= \norm{\sum_{s=2}^{\infty}\frac{\delta^s}{s!}\mathcal{K}^s(X)}_1 \leq \sum_{s=2}^{\infty}\frac{\delta^s}{s!}\norm{\mathcal{K}^s(X)}_1 \\
                                                                          &\leq \sum_{s=2}^{\infty}\frac{\delta^s}{s!}\norm{\mathcal{K}(X)}_1^s \leq (\delta\norm{\mathcal{K}(X)}_1)^2 \leq (\delta\norm{\mathcal{K}}_1)^2,
  \end{align}
  where the penultimate inequality follows from the fact that $e^z - (1+z) \leq z^2$ when $0 \leq z \leq 1$.
  
  Now, we extend this bound to the diamond norm. Note that, for two Hilbert spaces $\bbc^{N}$ and $\bbc^{N'}$,
  \begin{align}
    (e^{\delta\mathcal{K}} - (\mathcal{I}_{N}+\delta\mathcal{K}))\otimes\mathcal{I}_{N'} = e^{\delta(\mathcal{K}\otimes\mathcal{I}_{N'})} - (\mathcal{I}_{N \times N'} + \delta(\mathcal{K} \otimes \mathcal{I}_{N'})).
  \end{align}
  When $N = N'$, we have that $\norm{\mathcal{K}\otimes\mathcal{I}_{N'}}_1 = \norm{\mathcal{K}}_{\diamond}$. This implies that
  \begin{align}
    \norm{(e^{\delta\mathcal{K}} - (\mathcal{I}_{N'}+\delta\mathcal{K})}_{\diamond} &= \norm{\left(e^{\delta\mathcal{K}} - (\mathcal{I}_{N}+\delta\mathcal{K})\right)\otimes \mathcal{I}_{N'}}_1 \\
                                                                                              &= \norm{e^{\delta(\mathcal{K}\otimes\mathcal{I}_{N'})} - (\mathcal{I}_{N\times N'}+\delta(\mathcal{K}\otimes\mathcal{I}_{N'}))}_1 \\
                                                                                              &\leq (\delta\norm{\mathcal{K}\otimes\mathcal{I}_{N'}}_1)^2 \\
                                                                                              \label{eq:dist-2}
                                                                                              &\leq (\delta\norm{\mathcal{K}}_{\diamond})^2.
  \end{align}

  To see the relationship between $\norm{\mathcal{L}}_{\diamond}$ and $\norm{\mathcal{K}}_1$, first observe that $\norm{\mathcal{K}}_1 \leq 2\Lambda$. Then using the fact the $\norm{M\otimes I} = \norm{M}$ for all $M$, it follows that $\norm{\mathcal{K}}_{\diamond} \leq 2\Lambda$. Together with \cref{eq:dist-2}, we have
  \begin{align}
    \label{eq:dist-3}
    \norm{(e^{\delta\mathcal{K}} - (\mathcal{I}_{N'}+\delta\mathcal{K})}_{\diamond} \leq (2\delta\Lambda)^2.
  \end{align}

  This result in the lemma now directly follows from \cref{eq:dist-1,eq:dist-3}.
\end{proof}

\end{document}